\newtheorem{theorem}{Theorem}
\newtheorem{definition}{Definition}
\newtheorem{proof}{Proof}
\newtheorem{lemma}{Lemma}
\begin{document}

\preprint{}

\title{Normal modes. The true story.}% Force line breaks with \\

\author{Emil Zak}
\email{emil.zak.14@ucl.ac.uk}
 \homepage{http://www.homepages.ucl.ac.uk/~ucapejz/}
\affiliation{%
 Department of Physics and Astronomy, University College London,\\
London, WC1E 6BT, UK
}%

%\collaboration{MUSO Collaboration}%\noaffiliation

\date{\today}% It is always \today, today,
             %  but any date may be explicitly specified

\maketitle

%\tableofcontents

\section{\label{sec:level1} Introduction}
The aim of this article is a comprehensive description of normal modes of molecular vibrations. The starting point is chosen to be a general molecular system with separated center of mass and an arbitrary embedding of body-fixed axes. This allows to focus on internal degrees of freedom only, leaving the problem of rotational motion behind. Nevertheless, for the sake of completeness we first introduce a general quantum-mechanical Hamiltonian in Eckart-Watson representation, to make a quick leap into a simplified harmonic description for internal motion and rigid rotor for rotational degrees of freedom. This regime constitutes a basis for more sophisticated calculations. The first section introduces normal modes with emphasis on mathematical precision. Alongside we stress the 	points where approximations are made, for the reader to be aware of limitations of the current model. Second section exemplifies presented methodology on a simple system of two masses connected with harmonic springs in 1D. Section \ref{sec:watson} has supplementary character and underlines the formal route needed be taken to derive the Watson Hamiltonian, which in turn is immersed in the framework of normal vibrations. The last section (\ref{sec:rotor}) is a detailed discussion of rigid rotor model. Current review is largely based on classic books by Bunker\& Jensen \cite{Bunker} and Wilson,Decius\&Cross \cite{Wilson} supplied with original papers, and a few author's own comments, derivations and theorems.

\section{\label{sec:level2} General theory}
Before proceeding to the main topic, a few words of explanation are needed.
Section \ref{sec:watson} presents a detailed insight into the formal procedure of derivation of the Eckart-Watson form of quantum Hamiltonian for an arbitrary many-body system. This is a popular but rather specific approach, as is constrained to normal coordinates only. Any set of coordinates which is not linearly related to cartesian displacements brings significant complications. Many efforts were made to evaluate an exact representation of the kinetic energy operator (KEO) in curvlinear coordinates and Eckart embedding. Most recently an effective numerical procedure for expansion of nuclear KEO in arbitrary curvlinear coordinates was introduced by Yachmenev and Yurchenko \cite{Yachmenev2015} and successfully applied to four-atomic systems by Chubb, Yachmenev and Yurchenko \cite{Chubb}. Parallel analytic progress in the field has been made by Szalay \cite{Szalay2015}, \cite{Szalay2015a}. Several authors: Tennyson et al. \cite{Sutcliffe1991}, \cite{Kostin2002}, Jensen \cite{Jensen1988}, Csaszar \cite{Csaszar1995} and others also succeeded to evaluate practical representations for many-body Hamiltonians in an arbitrary embedding of the body-fixed frame. In each of these cases one is theoretically able to linearize the internal coordinates and subsequently transfer into normal modes level. Nonetheless we shall follow the historically first attempt due to its strong association with normal coordinates.  After separating the nuclear center of mass, and moving into the Eckart frame which guarantees minimal rotation-vibration coupling, some analytic manipulations yield the well known Watson Hamiltonian:

\begin{equation}\label{Watson-hamiltonian}
\hat{H}=\frac{1}{2}\sum_{i,j}\hat{M}_{i}\mu_{ij}\hat{M}_{j}+
\frac{1}{2}\sum_{i}\hat{P}_{i}^{2}+\hat{U}+\hat{V}
\end{equation}
where $\hat{U}$ is mass dependent contribution to the potential $\hat{V}$: 
\begin{equation}\label{Watson-hamiltonian}
\hat{U}=-\frac{\hbar^{2}}{8}\sum_{i}\mu_{ii}\equiv-\frac{\hbar^{2}}{8}Tr\mu
\end{equation}
and $\mu_{ij}$ is the inverse of the dynamical moment of inertia tensor\footnote{It is not exactly equal to the inverse of moment of inertia matrix, see section IV}. $\hat{M}_{i}=\hat{J}_{i}-\hat{p}_{i}-\hat{L}_{i}$  stands for the i-th component of the total angular momentum of the system and is defined by the rovibronic angular momentum, vibrational angular momentum and electronic angular momentum,		 respectively. The potential operator $V$ contains interactions between all particles in the system, while $\hat{P}$ is the linear momentum operator in normal coordinates representation. By imposing Born-Oppenheimer separation of nuclear and electronic degrees of freedom the molecular Hamiltonian becomes a sum of an electronic and a rovibrational part:
\begin{equation}\label{Watson-hamiltonian}
\hat{H}=\hat{H}_{el}+\hat{H}_{rv}
\end{equation}
The rovibrational Hamiltonian can be readily expanded and arranged in the following fashion:
\begin{align}\nonumber\label{eq:rv}
  \hat{H}_{rv}&=\frac{1}{2}\sum_{\alpha}\mu_{\alpha\alpha}^e\hat{J}_{\alpha}^{2}+\frac{1}{2}\sum_{i}\hat{P}_{i}^{2}+  && \\\nonumber
 &  +\frac{1}{2}\sum_{\alpha\beta}\left(\mu_{\alpha\beta}-\mu_{\alpha\beta}\right)\left(\hat{J}_{\alpha}-\hat{p}_{\alpha}\right)\left(\hat{J}_{\beta}-
\hat{J}_{\beta}\right) &&\\\nonumber
 &-\sum_{\alpha}\mu_{\alpha\alpha}^e\hat{J}_{\alpha}\hat{p}_{\alpha}+ +\sum_{\alpha}\mu_{\alpha\alpha}^e\hat{p}_{\alpha}^2 + \hat{U}+\hat{V}_{nn} &&\\
\end{align}

Mixing terms appearing in above expression are responsible for centrifugal distorsion and Coriolis couplings. For sufficiently rigid molecules they can be neglected. What remains is a purely rotational part added to internal motion energy operators decoupled from rotations:

\begin{equation}
\hat{H}_{rv}=\frac{1}{2}\sum_{\alpha}\mu_{\alpha\alpha}^e\hat{J}_{\alpha}^{2}+\frac{1}{2}\sum_{i}\hat{P}_{i}^{2}+\hat{V}_{nn}
\label{eq:rv_harm}
\end{equation}

The task now is to find normal coordinates for a general many-body system, so that Watson approach may be realized.
Hence it becomes clear that the choice of this particular type of Hamiltonian representation was strongly dictated by its association to normal coordinates. However from practical point of view, one usually prefer start with a set of geometrically defined internal coordinates or cartesian coordinates. Thus, below we present a detailed instructions to obtain normal coordinates from arbitrarily chosen internal coordinates. Assuming the knowledge of normal coordinates of the system we may expand inter-nuclear interaction potential energy as:

\begin{equation}
V_{nn}=\frac{1}{2}\sum_{i=1}^{3N-6}\lambda_{i}Q^2_i+\frac{1}{6}\sum_{i,j,k=1}^{3N-6}\Phi_{ijk}Q_iQ_jQ_k+...
\label{eq:V_int} 
\end{equation}
and by truncating this expansion on quadratic terms we stay in harmonic approximation, in which normal coordinates were introduced. Now pick up the first two sums in eq. \ref{eq:rv_harm} and combine the second one with the truncated potential energy. By neglecting all other terms we get:
 
\begin{equation}
\hat{H}_{rv}=\frac{1}{2}\sum_{\alpha}\mu_{\alpha\alpha}^e\hat{J}_{\alpha}^{2}+
\frac{1}{2}\sum_{i}\left(\hat{P}_{i}^{2}+\lambda_{i}Q^2_i\right)
\label{eq:rv}
\end{equation}
The first term in above Hamiltonian describes the energy of a rigid rotor - this model is solved and discussed in detail in section \ref{sec:rotor}. The second term is the target of the present article. It is already shown in the final form, however no recipe for finding $\lambda_i$ and $Q_i$ was yet given. Until then this expression should be considered a black-box.

Having separated both center of mass motion and rotations via embedding in the Eckart body-fixed frame with later transfer to the rigid rotor approximation we are left with internal nuclear motion problem. For a non-linear molecule, or non-linear system of atoms this leaves $3N-6$ internal degrees of freedom. Lets start from a fully classical picture. After formulating a suitable representation of the Hamilton function, quantization will be performed. From eq. \ref{eq:rv_harm} one can infer that the total energy of the non-rotating system is a sum of kinetic and potential energy:
\begin{equation}
E=T(\dot{\textbf{x}})+V(\textbf{x})
\end{equation}
where $\textbf{x}=x_1,y_1,z_1,...,x_{3N-6},y_{3N-6},z_{3N-6}$ and $\dot{\textbf{x}}=\dot{x}_1,\dot{y}_1,\dot{z}_1,...,\dot{x}_{3N-6},\dot{y}_{3N-6},\dot{z}_{3N-6}$ are position and velocity vectors of all $3N-6$ atoms, respectively. No net external force is exerted on the system and all internal forces are conservative, hence the total energy is equal to the Hamilton function.
We should be able to write \textit{ad hoc} both components in cartesian coordinates:
\begin{equation}
T_{cart}=\frac{1}{2}\sum_{i=1}^{3N-6}\sum_{\alpha=x,y,z}m_i\dot{\alpha_i}^2
\end{equation}
where $\alpha_i$ stands for cartesian coordinate in the molecule-fixed (body-fixed) frame. It is very instructive to reformulate the above expression in a matrix form:
\begin{equation}
T_{cart}=\frac{1}{2}\sum_{i=1}^{3N-6}\left(\dot{x}_i,\dot{y}_i,\dot{z}_i\right)\left(\begin{array}{c c c}
m_i & 0 & 0 \\
0 & m_i & 0 \\
0 & 0 & m_i \\
\end{array}\right)\left(\begin{array}{c}
\dot{x}_i \\
\dot{y}_i  \\
\dot{z}_i \\
\end{array}\right)
\end{equation}
and further:
\begin{equation}
T_{cart}=\frac{1}{2}\dot{\textbf{x}}^T\textbf{T}\dot{\textbf{x}}
\label{eq:kin_cart}
\end{equation}
where $\textbf{x}$ is a $3N-6$ dimensional position vector, while $\textbf{T}$ stands for a $(3N-6)\times (3N-6)$ matrix representation of a diagonal metric tensor build from $N-2$ diagonal blocks each containing three equal masses $m_i\cdot \textbf{1}_{3}$. It is clearly visible that the kinetic energy is a quadratic form in cartesian velocities. This observation will allow us to utilize some tools from quadratic forms theory in section \ref{sec:math}. The internuclear potential energy is a function of $3N$ cartesian coordinates of all nuclei. However the functional form of potential energy is usually available in a geometrically defined internal coordinates, i.e. $V_{nn}=V_{nn}(R_1,R_2,...,R_{3N-6})$, where $R_i$ is the i-th internal coordinate. 
The reason for that is because the potential energy function is in general a complicated function of cartesian coordinates. Internal coordinates chosen intuitively by geometrical considerations provide a more transparent perspective of types of internal motions and associated potential energy. Therefore internal coordinate representation is more suitable at this stage. Nevertheless, the potential energy function is typically also a fairly complicated function (but less complicated than in cartesian representation) of internal coordinates. Thus, a for non-high-accuracy calculation purposes lowest terms in the Taylor expansion are sufficient: 
\begin{equation}
V_{nn}=\frac{1}{2}\sum_{i,j=1}^{3N-6}f_{ij}R_iR_j+\frac{1}{6}\sum_{i,j,k=1}^{3N-6}f_{ijk}R_iR_jR_k+...
\label{eq:V_int_harm} 
\end{equation}
Note that the minimum of $V$ is set to 0 and it corresponds to equilibrium values of all internal coordinates: $\vec{R}=(0,0,...,0)=\vec{R}_{eq}$. First derivatives vanish at this point. $f_{ij},f_{ijk},...$ are appropriate derivatives of $V$ with respect to internal coordinates taken in equilibrium positions. \\

As it is exceptionally difficult to obtain internal coordinates representation of the kinetic energy (quantum-mechanical operator) \cite{Csaszar1995},\cite{Tennyson1986} we are pushed to look for alternative approximate ways\footnote{We will stay however in fully classical formulation, to show later simple transformation into quantum mechanics.}. One such approach is based on the assumption that in vicinity of the equilibrium the potential energy function can be well represented only with the quadratic terms in the expansion \ref{eq:V_int_harm}.  In such case the potential energy may be also written in a matrix form:
\begin{equation}
V_{harm}=\frac{1}{2}\textbf{R}^T\textbf{F}\textbf{R}
\end{equation}
with $\textbf{F}$ as force constants matrix, i.e. the matrix of second derivatives of potential energy with respect to appropriate internal coordinates taken in equilibrium. If one would like to include anharmonicities, then for example the cubic terms should be understood as a contraction of rank 3 tensor with components $f_{ijk}$ with a rank 3 tensor constructed from internal coordinate vectors $\vec{R}$.

Still, internal coordinates are nonlinear functions of cartesian displacements from equilibrium geometry, what makes it technically very difficult to transform the kinetic energy operator into internal coordinate representation. Thus, we are pushed to make another crude approximation, that is to expand each internal coordinate in a series of cartesian displacements $\Delta\alpha_k=\alpha_k-\alpha^{e}_k$ :

\begin{equation}
R_{i}=\sum_{k=1}^{N}\sum_{\alpha=x,y,z}B_{i}^{\alpha k}\Delta\alpha_k+o\left((\Delta\alpha_k)^2\right).
\end{equation}
Here $B_{i}^{\alpha k}=\left(\frac{\partial R_i}{\partial \Delta\alpha_k}\right)_{eq}$ depends only on equilibrium positions of atoms. Again in the matrix form we have:
\begin{equation}
\textbf{R}=\textbf{B}\Delta\textbf{x}
\end{equation}
Because we started from general considerations for $3N$ degrees of freedom, $\textbf{B}$ matrix links the $3N$ cartesian displacements with the $3N-6$ linearized coordinates, hence is a rectangular matrix. In order to make it invertible we shall add 6 rows which give transformation laws for 3 translations and 3 rotations of the whole system. These are defined as follows:
\begin{equation}
T_{\alpha}=\frac{1}{M^{1/2}}\sum_{i=1}^Nm_{i}\Delta\alpha_i
\end{equation}
where $M$ is the total mass of the system, and	
\begin{equation}
R_{x}=\mu_{xx}^{eq}\sum_{i=1}^Nm_{i}\left(y_i^{eq}\Delta z_i-z_i^{eq}\Delta y_i\right)
\end{equation}
while $y$ and $z$ rotational coordinates are obtained by cyclic permutation of cartesian coordinates in above definition. $\mu_{xx}^{eq}=\left[\sum_{i=1}^Nm_i((y_i^{eq})^2+(z_i^{eq})^2)\right]^{-1}$ is the $x,x$ component of the inverse moment of the inertia tensor. After adding these relations, the $\textbf{B}$ matrix is automatically reshaped into square invertable form. Further results will show however, that this step is redundant if one requires only knowledge of normal coordinates. It is rather intuitive, as neither translations nor rotations can affect the internal degrees of freedom in a decoupled model. As a matter of fact the rectangular form is sufficient to transform the KEO into normal modes form.

Note that time derivative of a cartesian displacement is exactly equal to time derivative of cartesian absolute coordinate as equilibrium position doesn't change over time. Thus, in the expression for kinetic energy in eq. \ref{eq:kin_cart} we may responsibly put cartesian displacements:
\begin{equation}
T_{cart}=\frac{1}{2}\Delta\dot{\textbf{x}}^T\textbf{T}\Delta\dot{\textbf{x}}
\label{eq:kin_del_cart}
\end{equation}
and readily transfer into our linearized coordinates:
 \begin{equation}
T_{cart}=\frac{1}{2}\dot{\textbf{S}}^T\textbf{B}^T\textbf{T}\textbf{B}\dot{\textbf{S}}\equiv \frac{1}{2}\dot{\textbf{S}}^T\textbf{G}^{-1}\dot{\textbf{S}}
\label{eq:kin_del_cart}
\end{equation}
Here we introduced a new matrix: $\textbf{G}=\textbf{B}\textbf{T}^{-1}\textbf{B}^T$. This definition of the $\textbf{G}$ matrix indicates its dependence on atomic masses and the equilibrium geometry of the system. Despite its simple form, evaluation of $G$ matrix is a tedious job, especially for polyatomic systems. Elements of this matrix for several example systems are presented in the appendix in ref.\cite{Wilson}. In contrast, the harmonic potential energy function transforms trivially:
\begin{equation}
V_{harm}=\frac{1}{2}\textbf{S}^T\textbf{F}\textbf{S}
\label{eq:V_harm}
\end{equation}
Hence we managed to represent the total classical energy of a closed system of interacting point masses in the harmonic approximation and in terms of linearized coordinates as:
\begin{equation}
H=\frac{1}{2}\dot{\textbf{S}}^T\textbf{G}^{-1}\dot{\textbf{S}}+\frac{1}{2}\textbf{S}^T\textbf{F}\textbf{S}
\end{equation}
Unfortunately such form of the Hamilton function would result in coupled equations of motion due to non-diagonal forms of both $\textbf{G}$ and $\textbf{F}$. From this reason it is highly desirable to transform this quadratic form into a diagonal representation. This task is equivalent to simultaneous diagonalization of two matrices. In general our wanted transformation may be written as:
\begin{equation}
\textbf{S}=\textbf{L}\textbf{Q}
\end{equation}
where $\textbf{L}$ is a square $3N-6$ dimensional transformation matrix and \textbf{Q} is a transformed $3N-6$ dimensional vector assuring diagonal form of the total energy:
\begin{equation}
H=\frac{1}{2}\dot{\textbf{Q}}^T\dot{\textbf{Q}}+\frac{1}{2}\textbf{Q}^T\Lambda\textbf{Q}
\label{eq:H_diag}
\end{equation}
with $\Lambda$ being a diagonal matrix. Above conditions uniquely define $\textbf{L}$ matrix for a given $\textbf{G}$ and $\textbf{F}$. The first one indicates that $L$ is not in general an orthogonal matrix, by requiring:
\begin{equation}
\textbf{L}^{T}=\textbf{L}^{-1}\textbf{G}
\end{equation}
whereas the second one refers to the force constant matrix transformation:
\begin{equation}
\textbf{L}^{T}\textbf{F}\textbf{L}=\Lambda
\end{equation}
These can be unified into a single relation:

\begin{equation}
\textbf{L}^{-1}\textbf{G}\textbf{F}\textbf{L}=\Lambda
\end{equation}
Hence $\textbf{L}$ gains the interpretation of the matrix that diagonalizes the product $\textbf{GF}$ with normalization condition $\textbf{L}^{T}\textbf{L}=\textbf{G}$. In general $\textbf{GF}$ is not a symmetric matrix, even though both $\textbf{F}$ and $\textbf{G}$ are symmetric, since they normally do not commute. As a consequence we don't have a guarantee that this matrix is diagonalizable. Therefore, it is practical to perform such a transformation of the original linearized coordinates so that $\textbf{G}$ is a unit matrix. This problem is carefully discussed in section \ref{sec:math}. Having this done, it is straightforward to relate cartesian displacements to normal coordinates and \textit{vice versa}:
\begin{equation}
\Delta\textbf{x}=\textbf{B}^{-1}\textbf{L}\textbf{Q}
\end{equation}
where it is immediate to show via \textit{Kronecker theorem} for determinants that $\textbf{B}^{-1}\textbf{L}$ is invertible. The inverse of $\textbf{B}$ can be expressed by other matrices, thereby avoiding undefined operation (probably taking pseudoinverse of \textbf{B} would give congruent result). In addition, from various reasons it is convenient to express left hand side of above equation as mass weighted cartesian displacements:
 \begin{equation}
\Delta\textbf{q}=\textbf{M}^{\frac{1}{2}}\Delta\textbf{x}=\textbf{M}^{\frac{1}{2}}\textbf{B}^{T}\textbf{G}^{-1}\textbf{L}\textbf{Q}
\end{equation}
which can be obtained by simple matrix manipulation. The last equation defines well known $\textbf{l}$ matrix:
 \begin{equation}
\textbf{l}:=\textbf{M}^{\frac{1}{2}}\textbf{B}^{T}\textbf{G}^{-1}\textbf{L}
\end{equation}
which is orthogonal $\textbf{l}^{-1}=\textbf{l}^{T}$ so that the effective formulas for cartesian displacements and normal coordinates are respectively:
 \begin{eqnarray}
 \Delta\alpha_i=\sum_{j=1}^{3N-6}m_j^{-\frac{1}{2}}l_{\alpha j,i}Q_{j}\\
  Q_i=\sum_{j=1}^{3N-6}m_j^{\frac{1}{2}}\left(l_{\alpha j,i}\right)^{T}\Delta\alpha_j
 \end{eqnarray}
We shall make extensive use of the $\textbf{l}$ matrix in section \ref{sec:watson}. What remains, is to recall the potential energy function defined in the beginning of this section and relate $\lambda_i$ and $f_{ijk}$ etc. to the $\textbf{L}$ matrix:

\begin{equation}
V_{nn}=\frac{1}{2}\sum_{i,j=1}^{3N-6}f_{ij}R_iR_j+\frac{1}{6}\sum_{i,j,k=1}^{3N-6}f_{ijk}R_iR_jR_k+...
\end{equation}

Note that if the initial internal coordinates were chosen to be cartesian displacements then $\textbf{B}$ matrix is a unit matrix, and a more familiar relation is retrieved: 

 \begin{equation}
\textbf{L}^{-1}\textbf{F}\textbf{L}=\Lambda
\end{equation}
and $\textbf{l}=\textbf{M}^{-\frac{1}{2}}\textbf{L}$.

$3N-6$ dimensional column vectors of type
\begin{equation}\vec{Q}_{s}=
\left(\begin{array}{c}
0 \\
0 \\
\vdots \\
0 \\
1 \\
0 \\
\vdots\\
0
\end{array}\right)\begin{array}{c}
  \\
  \\
 \\
 s \\
  \\
  \\
\\

\end{array}
\label{eq:basis}
\end{equation}

\begin{figure}
\includegraphics[scale=0.9]{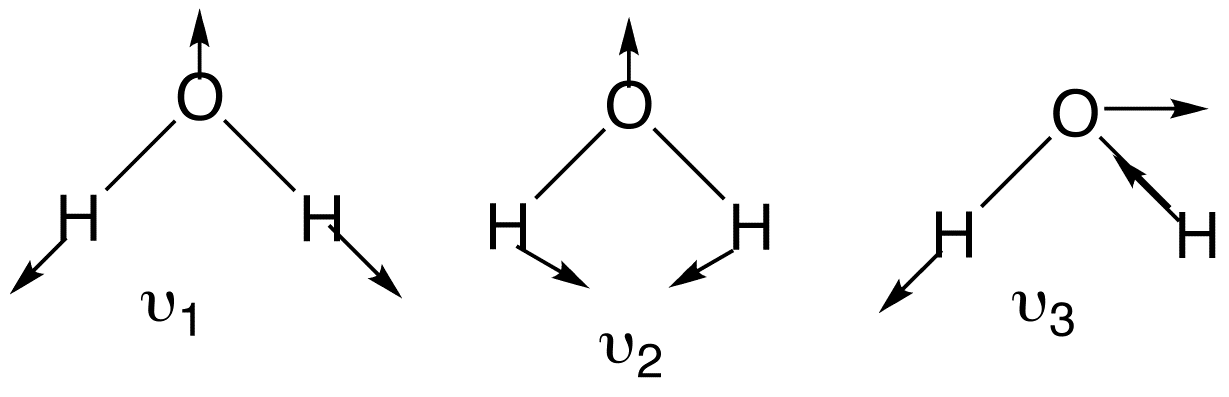} 
\label{fig:normal1}
\caption{Visualization of the normal modes for water molecule. Arrows correspond to vectors having its beginning in equilibrium positions of respective atoms and endings in points with cartesian coordinates corresponding to basis vectors of normal coordinates space from eq.\ref{eq:basis} }
\end{figure}

form an orthonormal basis for normal coordinates space. Thus, by plotting cartesian displacements corresponding to each of the basis vector one may visualize geometry distortion caused by a particular normal vibration. This is indicated with arrows in fig. \ref{fig:normal1}

\section{Simple example}

In this section the just described procedure is exemplified on an arguably simplest non-trivial model: two masses + three springs in one dimension. This is a classic system presented in textbooks for pedagogical purposes. 

\subsection{Potential energy}
We have a system of two masses interconnected by a spring with force constant $k$. Additionally these masses are connected to infinite mass, rigid walls by the same type of springs. This gives 2 degrees of freedom represented by catesian coordinates of respective masses. By investigating fig. \ref{fig:springs} we may write down an expression for the potential energy of the system in harmonic approximation:
\begin{equation}
\resizebox{\hsize}{!}{$V(x_1,x_2)=\frac{1}{2}k\left(x_1-x_1^e\right)^2+\frac{1}{2}k\left(x_2-x_2^e\right)^2+\frac{1}{2}k\left(x_2-x_2^e-x_1+x_1^e\right)^2$}
\end{equation}
\begin{figure}
\includegraphics[scale=0.4]{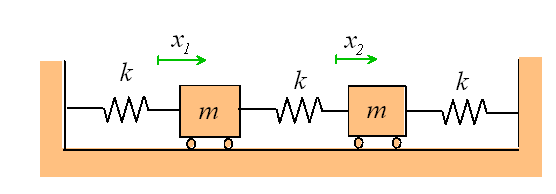}
\caption{Graphical representation of the system of two masses on three springs.}
\label{fig:springs}
\end{figure}
This is the cartesian representation. All two degrees of freedom are identified as vibrational, as in 1D case there are no rotations, and our system is being held still by infinitely heavy walls, hence no center of mass motion. Now we choose appropriate internal coordinates. In such case they can be simply displacements from equilibrium:
\begin{equation}
\begin{split}
R_1:=\Delta x_1=x_1-x_1^e, \\
R_2:=\Delta x_2=x_2-x_2^e \\
\end{split}
\end{equation}
In this new representation the potential energy reads:
\begin{equation}
V(R_1,R_2)=kR_1^2+kR_2^2-\frac{1}{2}kR_1R_2-\frac{1}{2}kR_2R_1
\label{eq:example_int}
\end{equation}
Newtonian equations of motion at this stage are coupled second order differential equations:
\begin{equation}
\left(\begin{array}{c}
\ddot{R_1} \\
\ddot{R_2} \\ 
\end{array}\right) = \left(\begin{array}{c c}
-\frac{2k}{m} & \frac{k}{m} \\
\frac{k}{m} & -\frac{2k}{m} \\ 
\end{array}\right)\left(\begin{array}{c}
R_1 \\
R_2 \\ 
\end{array}\right) 
\end{equation}
hence quite problematic to solve. 
\subsection{Linearized coordinates}
Second step in the procedure is evaluation of linearized coordinates:
\begin{equation}
S_i=\sum_{k=1}^{2}\left(\frac{\partial R_i}{\partial \Delta x_k}\right)_eq\Delta x_k
\end{equation}
In this particular case the transformation is trivial:
\begin{equation}
\begin{split}
S_1=\left(\frac{\partial R_1}{\partial \Delta x_1}\right)_{eq}\Delta x_1+\left(\frac{\partial R_1}{\partial \Delta x_2}\right)_{eq}\Delta x_2=\Delta x_1 \\
S_2=\left(\frac{\partial R_2}{\partial \Delta x_1}\right)_{eq}\Delta x_1+\left(\frac{\partial R_2}{\partial \Delta x_2}\right)_{eq}\Delta x_2=\Delta x_2 \\
\end{split}
\end{equation}
so that the \textbf{B} matrix is a unit matrix:
\begin{equation}
\textbf{B=}\left(\begin{array}{c c}
1& 0 \\
0 & 1 \\ 
\end{array}\right)=\textbf{1}_2
\end{equation}
Here one comment: we agreed that all degrees of freedom are vibrational. Therefore $\textbf{B}$ matrix is already invertable and there's no need to extend it by rotational and translational coordinates. However if one tries to calculate the x-component of the translational coordinate:
\begin{equation}
T_x=\sqrt{\mu}\left(\Delta x_1 + \Delta x_2\right)
\end{equation}
one may argue that we should have one vibrational degree of freedom and one translational center of mass motion described by above expression. Since we assumed infinite mass of walls the center of mass of the system as a whole is always frozen. In such case $T_x$ coordinate will become a vibrational coordinate, which doesn't affect the center of mass. Such seemingly ambiguous situation is an artifact of confinement of the system in 1D between static walls. Hence we agree to call all our degrees of freedom vibrational. Moving on, the \textbf{G} matrix reads:
\begin{equation}
\textbf{G}=\left(\begin{array}{c c}
1& 0 \\
0 & 1 \\ 
\end{array}\right)\left(\begin{array}{c c}
\frac{1}{m}& 0 \\
0 & \frac{1}{m} \\ 
\end{array}\right)\left(\begin{array}{c c}
1& 0 \\
0 & 1 \\ 
\end{array}\right)=\left(\begin{array}{c c}
\frac{1}{m}& 0 \\
0 & \frac{1}{m} \\ 
\end{array}\right)
\end{equation}
and

\begin{equation}
\textbf{G}^{-1}=\left(\begin{array}{c c}
m & 0 \\
0 & m \\ 
\end{array}\right)
\end{equation}
Finally from eq. \ref{eq:example_int} it is straightforward to read the elements of the force constant matrix:
\begin{equation}
\textbf{F}=\left(\begin{array}{c c}
2k & -k \\
-k & 2k \\ 
\end{array}\right)
\end{equation}
and the Hamilton function takes the following form:
\begin{equation}
H=\frac{1}{2}\dot{\textbf{S}}^T\textbf{G}^{-1}\dot{\textbf{S}}+\frac{1}{2}\textbf{S}^T\textbf{F}\textbf{S}
\end{equation}
\subsection{Normal coordinates}
We are looking for a matrix \textbf{L} that transforms normal to linearized coordinates $\textbf{S}=\textbf{L}\textbf{Q}$,  diagonalizes $\textbf{GF}$ matrix and preserve the correct norm:
\begin{equation}
\begin{split}
\textbf{L}^{-1}\textbf{G}\textbf{F}\textbf{L}=\Lambda \\
\textbf{L}^{T}\textbf{G}^{-1}\textbf{L}=\textbf{1}_2
\end{split}
\end{equation}
First lets calculate the $\textbf{GF}$ matrix:
\begin{equation}
\textbf{GF}=\left(\begin{array}{c c}
\frac{1}{m}& 0 \\
0 & \frac{1}{m} \\ 
\end{array}\right)\left(\begin{array}{c c}
2k & -k \\
-k & 2k \\ 
\end{array}\right)=\left(\begin{array}{c c}
\frac{2k}{m}& -\frac{k}{m} \\
-\frac{k}{m} & \frac{2k}{m} \\ 
\end{array}\right):=\omega_0\left(\begin{array}{c c}
2& -1 \\
-1& 2 \\ 
\end{array}\right)
\end{equation}
with $\omega_0=\frac{k}{m}$. The Kernel of the operator $\textbf{GF}-\textbf{1}\Lambda$ is spanned by two orthonormal vectors:
\begin{equation}
\frac{1}{\sqrt{2}}\left(\begin{array}{c}
1 \\
1\\
\end{array}\right), \; \frac{1}{\sqrt{2}}\left(\begin{array}{c}
1 \\
-1\\
\end{array}\right)
\end{equation}
which constitute the prospective $\textbf{L}$ matrix:
\begin{equation}
\textbf{L}=\frac{1}{\sqrt{2}}\left(\begin{array}{c c}
	1 & 1 \\
1 & -1 \\ 
\end{array}\right)
\end{equation}
and the eigenvalues matrix read:
\begin{equation}
\Lambda=\left(\begin{array}{c c}
	\omega_0 & 0 \\
0 & 3\omega_0 \\ 
\end{array}\right)
\end{equation}
One easily checks that $\textbf{L}$ is orthogonal, hence:
\begin{equation}
\left(\begin{array}{c}
Q_1 \\
Q_2 \\ 
\end{array}\right) = \frac{1}{\sqrt{2}}\left(\begin{array}{c c}
	1 & 1 \\
1 & -1 \\ 
\end{array}\right)\left(\begin{array}{c}
\Delta x_1 \\
\Delta x_2 \\ 
\end{array}\right) 
\end{equation}
giving two non-degenerate normal modes: symmetric stretching and asymmetric stretching respectively. We may also want to evaluate the $\textbf{l}$ matrix:
\begin{equation}
\textbf{l}=\textbf{M}^{-\frac{1}{2}}\textbf{B}^{T}\textbf{G}^{-1}\textbf{L}=\sqrt{\mu}\left(\begin{array}{c c}
	1 & 1 \\
1 & -1 \\ 
\end{array}\right)
\end{equation}
to give: 
\begin{equation}
\begin{split}
\sqrt{m}\Delta x_1=\sqrt{\mu}\left(Q_1+Q_2\right) \\
\sqrt{m}\Delta x_2=\sqrt{\mu}\left(Q_1-Q_2\right) \\
\end{split}
\end{equation}
The Hamilton function is now a sum of two independent components:
\begin{equation}
H(Q_1,Q_2)=H_1(Q_1)+H_2(Q_2)
\end{equation}
where $H_i(Q_i)=\frac{1}{2}\dot{Q}_i^2+\frac{1}{2}\Lambda_{ii}Q_i^2$. Having the classical expression for Hamilton function derived as well as having invertible transformation between cartesian and normal representation we are ready to turn into quantum mechanics. The kinetic energy operator in cartesian coordinates reads:
\begin{equation}
\hat{T}_{cart}=\frac{1}{2m}\left(\Delta_1+\Delta_2\right)=\frac{1}{2m}\left(\frac{\partial^2}{\partial x_1^2}+\frac{\partial^2}{\partial x_2^2}\right)
\end{equation}
Our 'curvlinear' coordinates $\vec{R}=(R_1,...,R_{3N-6})$ role is now taken by the derived normal coordinates:

\begin{equation}
\vec{R}\left(\vec{x}\right): \qquad \left(\begin{array}{c}
Q_1 \\
Q_2 \\ 
\end{array}\right) = \frac{1}{\sqrt{2}}\left(\begin{array}{c c}
	1 & 1 \\
1 & -1 \\ 
\end{array}\right)\left(\begin{array}{c}
\Delta x_1 \\
\Delta x_2 \\ 
\end{array}\right) 
\end{equation}

\begin{equation}
\vec{x}\left(\vec{R}\right): \qquad \left(\begin{array}{c}
\Delta x_1 \\
\Delta x_2 \\ 
\end{array}\right) = \frac{1}{\sqrt{2}}\left(\begin{array}{c c}
	1 & 1 \\
1 & -1 \\ 
\end{array}\right)\left(\begin{array}{c}
Q_1 \\
Q_2 \\ 
\end{array}\right) 
\end{equation}
At this stage we could step forward in twofold way. We could either start from cartesian representation and apply chain rule to get internal coordinate representation or use the derived separable form of Hamilton function and apply the Podolsky trick. An excellent paper on this freedom of choice is given in ref.\cite*{Islampour1983}. We choose the former method as more straightforward in this case:
\begin{equation}
\begin{split}
\frac{\partial^2}{\partial x_1^2}=\frac{1}{2}\left(\frac{\partial^2}{\partial Q_1^2}+\frac{\partial^2}{\partial  Q_2^2}+2\frac{\partial^2}{\partial Q_1^2}\frac{\partial^2}{\partial Q_2^2}\right) \\
\frac{\partial^2}{\partial x_2^2}=\frac{1}{2}\left(\frac{\partial^2}{\partial Q_1^2}+\frac{\partial^2}{\partial  Q_2^2}-2\frac{\partial^2}{\partial Q_1^2}\frac{\partial^2}{\partial Q_2^2}\right) \\
\end{split}
\end{equation}
Note that only the sum of above two provides form invariance under linear transformation of cartesian coordinates:
\begin{equation}
\hat{T}_Q=\frac{1}{2m}\left(\frac{\partial^2}{\partial Q_1^2}+\frac{\partial^2}{\partial Q_2^2}\right)
\end{equation}
By setting $Q_1=1$ and $Q_2=0$ and reversely we may depict cartesian displacements along a chosen normal coordinate.

\section{\label{sec:math} Mathematical background}

In order to understand the problem of finding of normal vibrations at the deepest level, it is necessary to recall several mathematical definitions and theorems, commonly used in linear operators theory \cite{Fuller}:

\begin{definition}
If an operator $A$ is diagonalizable, and $\beta$ diagonalizes it by the similarity transformation:

\begin{equation}
\begin{split}
\beta^{-1}A\beta=D, \ \ D_{ij}=\lambda_{i}\delta_{ij}
\end{split}
\end{equation}

then for any real number $\gamma$ we may define the power of the operator as follows:

\begin{equation}
\begin{split}
A^{\gamma}:=\beta D^{\gamma}\beta^{-1}
\end{split}
\end{equation}
If operator $A$ contains any zero's in its spectrum, then $\gamma>0$.

\label{def:power}
\end{definition}

\begin{theorem}
Any matrix $A$ defined above and any pair real numbers compatible with def. \ref{def:power} satisfies:
\begin{equation}
\begin{split}
A^{\gamma}A^{\beta}=A^{(\gamma+\beta)}
\end{split}
\end{equation}
\end{theorem}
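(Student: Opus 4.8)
The plan is to reduce the operator identity to a scalar identity on the eigenvalues, exploiting the fact that raising $A$ to a power acts only on the diagonal matrix $D$ sandwiched between the fixed diagonalizing matrix and its inverse. First I would unwind both factors using Definition~\ref{def:power}. To avoid the notational collision between the diagonalizing matrix $\beta$ and the second exponent (also called $\beta$ in the statement), I will write the exponents as $\gamma$ and $\delta$, so that $A^{\gamma}=\beta D^{\gamma}\beta^{-1}$ and $A^{\delta}=\beta D^{\delta}\beta^{-1}$.

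Next I would form the product and collapse the inner $\beta^{-1}\beta=\mathbf{1}$, giving
\[
A^{\gamma}A^{\delta}=\beta D^{\gamma}\left(\beta^{-1}\beta\right)D^{\delta}\beta^{-1}=\beta\left(D^{\gamma}D^{\delta}\right)\beta^{-1}.
\]
The entire content of the theorem is therefore the diagonal claim $D^{\gamma}D^{\delta}=D^{\gamma+\delta}$. Since $D^{\gamma}$ is itself diagonal with entries $\lambda_{i}^{\gamma}$, the product $D^{\gamma}D^{\delta}$ is diagonal with entries $\lambda_{i}^{\gamma}\lambda_{i}^{\delta}$, so I would verify entrywise that $\lambda_{i}^{\gamma}\lambda_{i}^{\delta}=\lambda_{i}^{\gamma+\delta}$ and then read off the matrix identity.

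The main obstacle is precisely this scalar law of exponents for real (not merely integer) powers, which is where the hypothesis that $\gamma,\delta$ be \emph{compatible with Definition~\ref{def:power}} does its work. For a strictly positive eigenvalue $\lambda_{i}>0$ the identity holds for all real exponents through the standard branch $\lambda_{i}^{\gamma}=e^{\gamma\ln\lambda_{i}}$ together with additivity of the exponential, $e^{\gamma\ln\lambda_{i}}e^{\delta\ln\lambda_{i}}=e^{(\gamma+\delta)\ln\lambda_{i}}$. The delicate case is a zero eigenvalue: $0^{\gamma}$ is only meaningful for $\gamma>0$, which is exactly the restriction imposed in the definition when $A$ has zeros in its spectrum; under it $0^{\gamma}\,0^{\delta}=0=0^{\gamma+\delta}$ remains valid because $\gamma+\delta>0$ as well. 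Negative eigenvalues raised to irrational powers would leave the real-valued setting and are excluded by the same compatibility clause. Having secured the diagonal identity, I would substitute back to conclude $A^{\gamma}A^{\delta}=\beta D^{\gamma+\delta}\beta^{-1}=A^{\gamma+\delta}$, which closes the argument.
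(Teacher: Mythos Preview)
Your proposal is correct and follows essentially the same route as the paper: expand via the definition, cancel the inner $\beta^{-1}\beta$, reduce to the diagonal identity $D^{\gamma}D^{\delta}=D^{\gamma+\delta}$, and read off the scalar law of exponents entrywise. Your version is in fact more careful than the paper's, since you explicitly resolve the notational clash between the diagonalizing matrix and the exponent and you spell out why the compatibility clause in Definition~\ref{def:power} is needed for the scalar step (zero and negative eigenvalues), points the paper passes over.
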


\begin{proof}
\begin{equation}
\begin{split}
A^{\gamma}A^{\beta}=\beta D^{\gamma}\beta^{-1}\beta D^{\beta}\beta^{-1}=\beta D^{\gamma}D^{\beta}\beta^{-1}=...
\end{split}
\end{equation}

As long as $D$ are diagonal, exponentiation converts into simple number operations:
\begin{equation}
\begin{split}
\left(D^{\gamma}D^{\beta}\right)_{kl}=(\lambda_{k}^{\gamma}\lambda_{l}^{\beta}I^{(\gamma+\beta)})_{kl}=(\lambda_{k}^{(\gamma+\beta)}I)_{kl}
\end{split}
\end{equation}

\begin{equation}
\begin{split}
...=\beta D^{(\gamma+\beta)}\beta^{-1}=A^{(\gamma+\beta)} \quad \diamond
\end{split}
\end{equation}

\end{proof}

In classical and even more in quantum mechanics, special role is assigned to self-adjoint operators, in particular to those which are additionally positive definite $ (\forall x \ (\textbf{x},\textbf{Ax})\geq0)$. It follows from the fact that positive definite operators posses only non-negative eigenvalues, which is a physical requirement for observables. Non-negative spectrum is an immediate consequence of positive definitiness:
\begin{equation}
\begin{split}
\textbf{Ax}=\lambda \textbf{x} \Rightarrow \lambda = \frac{(\textbf{x},\textbf{A}\textbf{x})}{(\textbf{x},\textbf{x})}
\end{split}
\end{equation}

\begin{lemma}
If an operator $A$ is self-adjoint, then $A^{\gamma}$ is also self-adjoint.
\end{lemma}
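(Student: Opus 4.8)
The plan is to exploit the spectral theorem for self-adjoint operators, which guarantees not merely that $A$ is diagonalizable in the sense of Definition \ref{def:power}, but that the diagonalizing matrix $\beta$ may be chosen unitary (orthogonal in the real case), so that $\beta^{-1}=\beta^{\dagger}$, and that every eigenvalue $\lambda_i$ is real. These two facts are precisely what is needed to push the adjoint through the definition $A^{\gamma}:=\beta D^{\gamma}\beta^{-1}$, so I would begin by invoking them explicitly.

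First I would fix such a unitary $\beta$ and write $A^{\gamma}=\beta D^{\gamma}\beta^{\dagger}$, where $D^{\gamma}$ is the diagonal matrix carrying entries $\lambda_i^{\gamma}$. Then I would compute the adjoint directly, using $(\beta^{\dagger})^{\dagger}=\beta$ together with the order-reversal rule for adjoints of products:
\begin{equation}
\left(A^{\gamma}\right)^{\dagger}=\left(\beta D^{\gamma}\beta^{\dagger}\right)^{\dagger}=\beta\left(D^{\gamma}\right)^{\dagger}\beta^{\dagger}.
\end{equation}
At this point the entire statement reduces to the single identity $(D^{\gamma})^{\dagger}=D^{\gamma}$.

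The crux is therefore to verify this reduced identity. Since $D^{\gamma}$ is diagonal, its adjoint is the diagonal matrix of complex conjugates $\overline{\lambda_i^{\gamma}}$, so I must show that each $\lambda_i^{\gamma}$ is real. This is where the hypotheses of Definition \ref{def:power} earn their keep: because $A$ is self-adjoint, and in the physically relevant situation additionally positive definite, the eigenvalues satisfy $\lambda_i\geq 0$, while the proviso $\gamma>0$ whenever $0$ lies in the spectrum ensures that each $\lambda_i^{\gamma}$ is a well-defined non-negative real number rather than a complex or singular quantity. Hence $(D^{\gamma})^{\dagger}=D^{\gamma}$, and substituting back yields $\left(A^{\gamma}\right)^{\dagger}=\beta D^{\gamma}\beta^{\dagger}=A^{\gamma}$, which is the claim.

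The main obstacle I anticipate is not the algebra but the justification that $\lambda_i^{\gamma}\in\mathbb{R}$: for a general self-adjoint operator with possibly negative eigenvalues, a non-integer real power need not be real, so the lemma implicitly leans on the positivity assumption emphasized in the surrounding text. I would make this dependence explicit rather than assume it silently, remarking that on a negative part of the spectrum one would have to restrict to integer $\gamma$ or fix a branch convention, and that the self-adjointness of $A^{\gamma}$ genuinely requires the diagonal entries $\lambda_i^{\gamma}$ to be real.
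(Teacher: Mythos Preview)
Your proposal is correct and follows essentially the same route as the paper: invoke the spectral theorem to replace $\beta^{-1}$ by $\beta^{\dagger}$, take the adjoint of $\beta D^{\gamma}\beta^{\dagger}$, and reduce everything to the self-adjointness of the diagonal matrix $D^{\gamma}$. Your treatment is in fact more careful than the paper's, which skips the intermediate step $(D^{\gamma})^{\dagger}=D^{\gamma}$ and loosely conflates reality of the eigenvalues with positivity; your explicit remark that positivity (not mere self-adjointness) is what guarantees $\lambda_i^{\gamma}\in\mathbb{R}$ for non-integer $\gamma$ is a genuine improvement.
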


\begin{proof}

Matrix that diagonalizes a self-adjoint operator is unitary. It follows from the equality:

\begin{equation}
\begin{split}
\left(A^{\gamma}\right)^\dag=\left(UD^{\gamma}U^{\dag}\right)^\dag=(U^{\dag})^{\dag}D^{\gamma}U^{\dag}=A^{\gamma}
\end{split}
\end{equation}
because \textbf{D} is diagonal and have all its elements positive( as the eigenvalues of $A$ are real).$\diamond$

\end{proof}

Lets turn into application in classical mechanics. Kinetic and potential energy functions may be expressed as expectation values of some linear operators (cf. \ref{eq:kin_cart}), namely:

\begin{equation}
\begin{split}
\langle T\rangle=\frac{1}{2}\sum_{i,j=1}^{n}T_{ij}\dot{\alpha_{i}}\dot{\alpha_{j}}
\end{split}
\end{equation}

where: $T_{ij}$ are time independent matrix elements of kinetic energy operator (classical!) in a basis constructed from time derivatives of consecutive coordinates; $n$ is the number of degrees of freedom. By utilizing natural scalar product of Euclidean space $(\textbf{e}_{i},\textbf{e}_{j})=\delta_{ij}, i,j=1,2,3$ we get:

\begin{equation}
\begin{split}
\langle T\rangle=\frac{1}{2}(\dot{\textbf{x}},\textbf{T}\dot{\textbf{x}})
\end{split}
\end{equation}
Round brackets are just another way to denote the scalar product - here to distinguish it from Dirac brackets representing scalar product in $\textit{L}^2(\textbf{R}^3,d^3r)$ space. As far as the kinetic energy is required to be non-negative (and in quantum case positive) the operator $T$ is positive-definite, real and symmetric, hence also self-adjoint.
Potential energy, as a complicated function of cartesian coordinates is usually expanded in Taylor series and, in our approach, truncated at quadratic terms (see, eq. \ref{eq:V_int} and eq. \ref{eq:V_harm}  ):

\begin{equation}
\begin{split}
\langle V\rangle=\frac{1}{2}\sum_{i,j=1}^{n}V_{ij}x_{i}x_{j}=\frac{1}{2}(\textbf{x},\textbf{Vx})
\end{split}
\end{equation}
Near the minimum, the second derivative is always positive, hence for a small displacement from equilibrium following inequality holds:

\begin{equation}
\begin{split}
\left(\frac{\partial^{2} V}{\partial \alpha_{i}\partial \alpha_{j}}\right)_{x=0}\Delta \alpha_{i}\Delta \alpha_{j}>0
\end{split}
\end{equation}
where $\alpha_i$ is a general cartesian coordinate of the the i-th atom, as defined in previous section.
Thereby we may assume that in the range of applicability of the normal coordinates the potential energy classical operator is real, symmetric and positive-definite:

\begin{equation}
\begin{split}
(\textbf{x},\textbf{Vx})\geq 0
\end{split}
\end{equation}
Lagrange function in this approximation reads:
\begin{equation}
\begin{split}
L=\frac{1}{2}(\dot{\textbf{x}},\textbf{T}\dot{\textbf{x}})-\frac{1}{2}(\textbf{x},\textbf{Vx})
\end{split}
\end{equation}

And Euler-Lagrange equations

\begin{equation}
\begin{split}
\frac{\partial L}{\partial \alpha_{i}}-\frac{d}{dt}\left(\frac{\partial L}{\partial \dot{\alpha}_{i}}\right)=0 \ \ i=1,...,n
\end{split}
\end{equation}

yield in equations of motion in the form:

\begin{equation}
\begin{split}
\sum_{j=1}^{n}T_{ij}\ddot{\alpha}_{j}+\sum_{j=1}^{n}V_{ij}\alpha_{j}=0\ \ \ \i=1,...,n
\end{split}
\end{equation}
whereas in operator notation:

\begin{equation}
\begin{split}
\textbf{T}\ddot{\textbf{x}}+\textbf{Vx}=0
\end{split}
\label{eq:system}
\end{equation}

Here one faces numerous possible ways of solving of the above set of coupled differential equations. Of course we would like to see them decoupled. By and large it means that we want to find such basis in the euclidean space that both kinetic and potential energies are diagonal. Below we present two alternative ways to achieve it:

\begin{enumerate}

 \item  A formal approach is to postulate the form of solution in analogy to the simple harmonic oscillator:

\begin{equation}
\begin{split}
\alpha_i(t)=\xi e^{\pm\omega t} \Rightarrow (V-\omega^{2}T)\xi=0
\end{split}
\end{equation}

Solutions to this equation are given by a set being the Kernel of the $V-\omega^{2}T$\ operator, i.e. $\xi \in Ker(V-\omega^{2}T)$. Next we aim in reformulation of above equation into an eigenvalue problem for some operator.

To do that, lets first act on both sides with the operator $T^{-\frac{1}{2}}$ :

\begin{equation}
\begin{split}
T^{-\frac{1}{2}}V\xi=\omega^{2}T^{\frac{1}{2}}\xi
\end{split}
\end{equation}
Using the resolution of identity $T^{-\frac{1}{2}}T^{\frac{1}{2}}=id$ :

\begin{equation}
\begin{split}
T^{-\frac{1}{2}}VT^{-\frac{1}{2}}T^{\frac{1}{2}}\xi=\omega^{2}T^{\frac{1}{2}}\xi
\end{split}
\end{equation}

and denoting: $\eta =T^{\frac{1}{2}}\xi, \ \ W=T^{-\frac{1}{2}}VT^{\frac{1}{2}}$  we find:

\begin{equation}
\begin{split}
W\eta=\omega^{2}\eta
\end{split}
\end{equation}
of course by definition of $T$ and $V$, the $W$ operator is symmetric and real, which guarantees the existence of an orthonormal eigenbasis:

\begin{equation}
\begin{split}
(\eta^{(i)},\eta^{(j)})=\delta_{ij}
\end{split}
\end{equation}

This orthonormality condition translates onto orthonormality of initial eigenstates:
\begin{equation}
\begin{split}
(\eta^{(i)},\eta^{(j)})=(T^{\frac{1}{2}}\xi^{(i)}, T^{\frac{1}{2}}\xi^{(j)})=(\xi^{(i)},T\xi^{(j)})
\end{split}
\end{equation}

Therefore the special solution to the system of equations \ref{eq:system} is given by:
\begin{equation}
\begin{split}
\alpha^{(i)}_{\pm}(t)=\xi^{(i)}e^{\pm\omega_{i}t}
\end{split}
\end{equation}
The general solution is a linear combination of all special solutions:

\begin{equation}
\begin{split}
\alpha(t)=\sum_{k=1}^{n}\xi^{(k)}\left[C_{k}e^{\pm\omega_{k}t}+D_{k}e^{-\pm\omega_{k}t}\right]
\end{split}
\end{equation}

If provided initial conditions: $\alpha(0)=\kappa, \dot{\alpha}(0)=\beta$ and restricting only to real part of solutions, we finally obtain:

\begin{equation}
\begin{split}
\alpha(t)=\sum_{k=1}^{n}\xi^{(k)}\left[(\xi^{(k)},T\kappa)\cos\omega_{k}t+\frac{(\xi^{(k)},T\beta)}{\omega_{k}}\sin\omega_{k}t\right]
\end{split}
\end{equation}

Lets notice key features of the above solution:

\begin{itemize}
  \item a net motion is a superposition of vibrations of a precisely determined frequency
  \item Relative vibrational amplitudes are determined by components $\xi^{(j)}$
\end{itemize}

On the margin note that

\begin{equation}
\begin{split}
(V-\omega^{2}T)\xi = 0  \Rightarrow \ (T^{-\frac{1}{2}}VT^{-\frac{1}{2}}-\lambda I)\xi =0 , \lambda = \omega^{2}
\end{split}
\end{equation}
because $T$ commutes with $V$.

After diagonalization of $V-\omega^{2}T$ the Lagrangian takes separable form:

\begin{equation}
\begin{split}
L=\frac{1}{2}\left[(\dot{\textbf{Q}},\dot{\textbf{Q}})-(\textbf{Q},\Lambda \textbf{Q})\right]
\end{split}
\end{equation}

where: $\Lambda$ is a diagonal matrix of normal frequencies, while  $\textbf{Q}$ is a superposition of original coordinates according to the transformation: $\textbf{Q}=O^{T}T^{\frac{1}{2}}\textbf{x}$ , whereas $O$ is transition matrix into the eigenbasis of the $W$ operator.
In consequence, one can easily transform into Hamilton function to get:
\begin{equation}
\begin{split}
H=\frac{1}{2}\left[(\dot{\textbf{Q}},\dot{\textbf{Q}})+(\textbf{Q},\Lambda \textbf{Q})\right]
\end{split}
\end{equation}
Reader may compare this result to see that it represents exactly the same Hamilton function as in eq.\ref{eq:H_diag}.
\begin{equation}
\begin{split}
\alpha_{i}=\sum_{k=1}^{n}(T^{-\frac{1}{2}}O)_{ik}Q_{k}
\end{split}
\end{equation}
To avoid confusion we shall point out that $\textbf{V}$ matrix is identical with the force constant matrix $\textbf{F}$ and was intentionally written to show the link with the $V$ operator.
\item Lagrangian is simply a sum of two quadratic forms, which stay in 1:1 correspondence with metric matrices. The task is to simultaneously diagonalize those two quadratic forms. Beforehand lets prove the following statement:

\begin{theorem}

Let $V$ be euclidean space. In such case there is 1:1 correspondence between self-adjoint operators on $V$ and symmetric metrics $\tilde{g}$ on $V$:
    
    \begin{equation}
\begin{split}
\tilde{\textbf{g}}(\textbf{x},\textbf{y})=(\textbf{x},\textbf{Ay})
\end{split}
\end{equation}
and following relation holds:
\begin{equation}
\begin{split}
Ker\tilde{\textbf{g}}=Ker\textbf{A}
\end{split}
\end{equation}

If however $\textbf{g},\tilde{\textbf{g}},A$  are matrices in an arbitrary basis of two metrics on $V$ and $A$, then:

\begin{equation}
\begin{split}
\tilde{\textbf{g}}=\textbf{gA}
\end{split}
\end{equation}
\end{theorem}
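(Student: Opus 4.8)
The plan is to treat the three assertions in sequence, writing $(\cdot,\cdot)$ throughout for the fixed euclidean inner product on $V$. For the bijective correspondence I would argue in both directions. Given a self-adjoint $\textbf{A}$, the form $\tilde{\textbf{g}}(\textbf{x},\textbf{y})=(\textbf{x},\textbf{Ay})$ is manifestly bilinear, and its symmetry follows at once from self-adjointness together with symmetry of the inner product, since $(\textbf{x},\textbf{Ay})=(\textbf{Ax},\textbf{y})=(\textbf{y},\textbf{Ax})$. Conversely, given a symmetric bilinear form $\tilde{\textbf{g}}$, for each fixed $\textbf{y}$ the assignment $\textbf{x}\mapsto\tilde{\textbf{g}}(\textbf{x},\textbf{y})$ is a linear functional on $V$; by the Riesz representation theorem, equivalently by non-degeneracy of the euclidean inner product, there is a unique vector I may call $\textbf{Ay}$ with $\tilde{\textbf{g}}(\textbf{x},\textbf{y})=(\textbf{x},\textbf{Ay})$ for all $\textbf{x}$. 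Bilinearity of $\tilde{\textbf{g}}$ forces $\textbf{A}$ to be linear, and symmetry of $\tilde{\textbf{g}}$ forces $\textbf{A}$ to be self-adjoint. Since the two constructions are visibly mutual inverses, the assignment $\textbf{A}\mapsto\tilde{\textbf{g}}$ is a bijection.

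For the kernel identity I would prove the two inclusions directly. If $\textbf{Ay}=0$ then $\tilde{\textbf{g}}(\textbf{x},\textbf{y})=(\textbf{x},\textbf{Ay})=0$ for every $\textbf{x}$, so $\textbf{y}\in\mathrm{Ker}\,\tilde{\textbf{g}}$. Conversely, if $\textbf{y}$ lies in the radical of $\tilde{\textbf{g}}$, then $(\textbf{x},\textbf{Ay})=0$ for every $\textbf{x}$, and non-degeneracy of the inner product forces $\textbf{Ay}=0$, whence $\textbf{y}\in\mathrm{Ker}\,\textbf{A}$. This gives $\mathrm{Ker}\,\tilde{\textbf{g}}=\mathrm{Ker}\,\textbf{A}$.

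For the coordinate relation I would pick an \emph{arbitrary} (not necessarily orthonormal) basis $\{\textbf{e}_i\}$ and expand. Writing $g_{ij}=(\textbf{e}_i,\textbf{e}_j)$ for the Gram matrix of the inner product, $A_{kj}$ for the matrix of $\textbf{A}$ through $\textbf{Ae}_j=\sum_k A_{kj}\textbf{e}_k$, and $\tilde{g}_{ij}=\tilde{\textbf{g}}(\textbf{e}_i,\textbf{e}_j)$, a one-line computation yields $\tilde{g}_{ij}=(\textbf{e}_i,\textbf{Ae}_j)=\sum_k g_{ik}A_{kj}=(\textbf{gA})_{ij}$, that is, $\tilde{\textbf{g}}=\textbf{gA}$. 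I would note in passing that in an orthonormal basis $\textbf{g}=\textbf{1}$ and the relation collapses to $\tilde{\textbf{g}}=\textbf{A}$, so the distinction between the form and the operator only becomes visible once a non-orthonormal basis, such as the mass-weighted one of the vibrational problem, is used.

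The only delicate point is the converse half of the correspondence: one must invoke non-degeneracy, via the Riesz representation theorem, to extract $\textbf{A}$ from the form, and then check that symmetry of $\tilde{\textbf{g}}$ genuinely translates into self-adjointness of $\textbf{A}$ rather than some weaker adjoint relation. In the euclidean setting non-degeneracy is automatic and positive-definiteness of the inner product guarantees invertibility, so no extra hypotheses are required; the remaining steps are bookkeeping.
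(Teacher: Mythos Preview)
Your proof is correct and complete. The paper explicitly omits its own proof of this theorem, stating that it ``is not instructive, hence will be omitted''; your argument via Riesz representation for the bijection, double inclusion for the kernel identity, and direct basis computation for the matrix relation $\tilde{\textbf{g}}=\textbf{gA}$ fills that gap cleanly and is exactly the standard route one would expect.
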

Proof of this theorem is not instructive, hence will be omitted. By using above theorem we may conclude, that for any real number $\lambda$ similar equality holds:
\begin{equation}
\begin{split}
(\tilde{\textbf{g}}-\lambda \textbf{g})(\textbf{x},\textbf{y})=(\textbf{x},(\textbf{A}-\lambda id)\textbf{y})
\end{split}
\end{equation}
As a consequence, the subspace  $V_{\lambda}\equiv Ker(\tilde{\textbf{g}}-\lambda \textbf{g})$ is non-trivial if and only if $\lambda$ is an eigenvalue of the operator $A$, and this subspace is an eigensubspace of $A$. Note that if $\textbf{y}\in Ker(\tilde{\textbf{g}}-\lambda \textbf{g})$ then for any vector $\textbf{x} $ we have $\tilde{\textbf{g}}(\textbf{x},\textbf{y})=\lambda g(\textbf{x},\textbf{y})$. Thus, subspaces $V_{\lambda}$ orthogonal with respect to $\textbf{g}$ are at the same time orthogonal with respect to $\tilde{\textbf{g}}$, while for  metrics contracted to these subspaces there is a proportionality: $\tilde{\textbf{g}}V_{\lambda}=\lambda \textbf{g}V_{\lambda}$. It simply means that determination of the subspace $V_{\lambda}$ brings us a basis where both $\tilde{\textbf{g}}$ and $\textbf{g}$ are diagonal. This statement is a subject of the next theorem:

\begin{theorem}

Let $\textbf{g}$ and $\tilde{\textbf{g}}$ be two symmetric forms on a real, finite dimensional space $V$, and assume that $\textbf{g}$ is positive-definite (it can be checked e.g. via Sylvester criterion \citep{Fuller}). Let $ \lambda_{1},...,\lambda_{s}$ be all possible real numer for which the kernel $\tilde{\textbf{g}}-\lambda_{i}\textbf{g}$ is non-trivial, i.e. $Ker(\tilde{\textbf{g}}-\lambda_{i}\textbf{g})\neq{{0}}$. Then $V$ decomposed into a direct sum:

 \begin{equation}
\begin{split}
V=\bigoplus_{i=1}^{s}Ker(\tilde{\textbf{g}}-\lambda_{i}\textbf{g}),
\end{split}
\end{equation}

and this sum is orthogonal with respect to both metrics. In a basis orthonormal with respect to $\textbf{g}$, consistent with the space decomposition, the metric matrices take diagonal form:

\begin{equation}
\textbf{g'=1},\ \ \tilde{\textbf{{g}}}'=\left(\begin{array}{cccc}
\lambda_{1}\textbf{1}_{n_{1}}& 0 &...&0\\
0&\lambda_{2}\textbf{1}_{n_{2}}&...&0\\
..........&...........&............&..........\\
0&0&...&\lambda_{s}\textbf{1}_{n_{s}}

\end{array}\right)
\end{equation}
\end{theorem}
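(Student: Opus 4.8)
The plan is to exploit the positive-definiteness of $\textbf{g}$ to reduce the simultaneous-diagonalization problem to the ordinary spectral theorem for a single symmetric matrix. Since $\textbf{g}$ is symmetric and positive-definite, it is self-adjoint with a strictly positive spectrum, so Definition \ref{def:power} and the accompanying Lemma apply and furnish a symmetric, positive-definite, invertible square root $\textbf{g}^{1/2}$ together with its inverse $\textbf{g}^{-1/2}$. First I would use these to form the congruent transform
\begin{equation}
\textbf{W}:=\textbf{g}^{-1/2}\tilde{\textbf{g}}\,\textbf{g}^{-1/2},
\end{equation}
and observe that $\textbf{W}^{T}=\textbf{g}^{-1/2}\tilde{\textbf{g}}^{T}\textbf{g}^{-1/2}=\textbf{W}$, because $\tilde{\textbf{g}}$ and $\textbf{g}^{-1/2}$ are both symmetric. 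Thus $\textbf{W}$ is a genuine real symmetric matrix, even though the naive candidate $\textbf{g}^{-1}\tilde{\textbf{g}}$ (the operator $\textbf{A}$ of the preceding theorem) need not be symmetric.

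Next I would invoke the real spectral theorem for $\textbf{W}$: there exists an orthogonal matrix $\textbf{O}$, with $\textbf{O}^{T}\textbf{O}=\textbf{1}$, such that $\textbf{O}^{T}\textbf{W}\textbf{O}=\Lambda$ is diagonal with real entries. Setting $\textbf{P}:=\textbf{g}^{-1/2}\textbf{O}$, a direct computation gives $\textbf{P}^{T}\textbf{g}\,\textbf{P}=\textbf{O}^{T}\textbf{g}^{-1/2}\textbf{g}\,\textbf{g}^{-1/2}\textbf{O}=\textbf{O}^{T}\textbf{O}=\textbf{1}$ and $\textbf{P}^{T}\tilde{\textbf{g}}\,\textbf{P}=\textbf{O}^{T}\textbf{W}\textbf{O}=\Lambda$, so the columns of $\textbf{P}$ form a basis that is orthonormal with respect to $\textbf{g}$ and in which $\tilde{\textbf{g}}$ is diagonal. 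This already establishes the final matrix assertion once the columns are sorted by eigenvalue.

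It then remains to match this basis to the intrinsic decomposition stated in the theorem. I would identify $\lambda_{1},\dots,\lambda_{s}$ with the distinct diagonal entries of $\Lambda$ and, using the correspondence $\tilde{\textbf{g}}(\textbf{x},\textbf{y})-\lambda\,\textbf{g}(\textbf{x},\textbf{y})=\textbf{g}(\textbf{x},(\textbf{A}-\lambda\,\mathrm{id})\textbf{y})$ from the preceding theorem together with the nondegeneracy of $\textbf{g}$, show that $Ker(\tilde{\textbf{g}}-\lambda_{i}\textbf{g})$ coincides with the $\lambda_{i}$-eigenspace of $\textbf{W}$ pulled back through $\textbf{g}^{-1/2}$, of dimension $n_{i}$. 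Mutual orthogonality with respect to $\textbf{g}$ is inherited from the spectral theorem, and orthogonality with respect to $\tilde{\textbf{g}}$ then follows from the identity $\tilde{\textbf{g}}(\textbf{x},\textbf{y})=\lambda_{j}\textbf{g}(\textbf{x},\textbf{y})$ valid for $\textbf{y}\in V_{\lambda_{j}}$, yielding the orthogonal direct sum $V=\bigoplus_{i}Ker(\tilde{\textbf{g}}-\lambda_{i}\textbf{g})$; grouping the columns of $\textbf{P}$ accordingly produces the block-diagonal matrix $\tilde{\textbf{g}}'$. The main obstacle is exactly the spectral-theorem step: the crux is recognizing that although $\textbf{g}^{-1}\tilde{\textbf{g}}$ fails to be symmetric, the symmetric conjugation by $\textbf{g}^{-1/2}$ restores symmetry without disturbing the kernels, so that an orthonormal eigenbasis is guaranteed; everything that follows is bookkeeping.
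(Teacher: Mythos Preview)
Your proof is correct. The paper does not supply a formal proof of this theorem; instead it states the theorem as a summary of the discussion immediately preceding it, which relies on Theorem~2 to identify the operator $\textbf{A}$ (satisfying $\tilde{\textbf{g}}(\textbf{x},\textbf{y})=(\textbf{x},\textbf{A}\textbf{y})$, self-adjoint with respect to the $\textbf{g}$-inner product) and then observes that the kernels $Ker(\tilde{\textbf{g}}-\lambda\textbf{g})$ coincide with the eigenspaces of $\textbf{A}$ and are mutually $\textbf{g}$-orthogonal, hence also $\tilde{\textbf{g}}$-orthogonal. Your route via $\textbf{W}=\textbf{g}^{-1/2}\tilde{\textbf{g}}\,\textbf{g}^{-1/2}$ is exactly the paper's ``approach 1'' (where it appears as $W=T^{-1/2}VT^{-1/2}$) transplanted into the setting of ``approach 2''; since $\textbf{W}=\textbf{g}^{1/2}\textbf{A}\,\textbf{g}^{-1/2}$, the two operators are similar and the arguments are equivalent. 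The practical difference is only where the spectral theorem is invoked: the paper applies it to $\textbf{A}$ in the $\textbf{g}$-inner product, you apply it to $\textbf{W}$ in the standard inner product, which has the minor advantage of not requiring a separate version of the spectral theorem for non-standard inner products.
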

    In practice, determination of Kernel $Ker(\tilde{\textbf{g}}-\lambda_{i}\textbf{g})$ is reduced to solution of a system of homogeneous equations, together with a solvability condition:

     \begin{equation}
\begin{split}
det(\tilde{\textbf{g}}-\lambda \textbf{g})=0
\end{split}
\end{equation}

In the next step, for any found $\lambda$  we solve the system of equations:
\begin{equation}
\begin{split}
(\tilde{\textbf{g}}-\lambda \textbf{g})\textbf{x}=0
\end{split}
\end{equation}
Linearly independent vectors corresponding to the same characteristic value are further orthogonalized with respect to $\textbf{g}$ metric using \textit{Gram-Schmidt} method. After subsequent normalization we finally get:
\begin{equation}
\begin{split}
(\textbf{e}'_{1},...,\textbf{e}'_{n})=(\textbf{e}_{1},...,\textbf{e}_{n})\beta, \ \ \ \ \textbf{g}'=\beta^{T}\textbf{g}\beta, \ \ \tilde{\textbf{g}}'=\beta^{T}\tilde{\textbf{g}}\beta
\end{split}
\end{equation}

\end{enumerate}
The most effective and practical turns out to be the second approach, just described. Therefore we will try to sketch steps needed to be taken in order to find normal vibrations via simultaneous diagonalization of two quadratic forms:
\begin{enumerate}
  \item Having found the Hessian matrix $\textbf{F}$ (cf. \ref{eq:V_int_harm}, potential energy representation) as well as the kinetic energy matrix $\textbf{G}$  (from geometric considerations), we want to find such basis transformation so that both mentioned matrices are diagonal. If our initial coordinates are cartesian, then $\textbf{G}=\textbf{T}$. First, solve the following equation:

\begin{equation}
\begin{split}
det(\textbf{F}-\omega^{2}\textbf{G})= 0
\end{split}
\end{equation}

to determine the allowed frequencies of vibrations in the system.
\item Find bases of the Kernel of the operator: $\textbf{F}-\omega^{2}_{i}\textbf{G}$ where $i=1,..,n$:

\begin{equation}
\begin{split}
(\textbf{F}-\omega^{2}_{i}\textbf{G})\textbf{Q}= 0 \ \ i=1,...,n
\end{split}
\end{equation}

\item If any of the modes is degenerate, then perform orthonormalization with respect to $T$. The kinetic energy operator defines a quadratic form, which set a metric on euclidean space. The orthogonalization is performed with respect to this metric, what we write as: $(\textbf{e}_{i},\textbf{T}\textbf{e}_{j})=\delta_{ij}$.

\item As a result the Lagrangian takes separable form:
    \begin{equation}
\begin{split}
L=\frac{1}{2}\left[(\dot{\textbf{Q}},\beta^{T}\textbf{G}\beta\dot{\textbf{Q}})-(\textbf{\textbf{Q}},\beta^{T}\textbf{F}\beta \textbf{Q})\right]=\frac{1}{2}\sum_{i=1}^{n}\left(\dot{Q}_{i}^{2}-\omega^{2}_{i}Q_{i}^{2}\right)
\end{split}
\end{equation}
Transition matrix to the common eigenbasis is build of normalized vectors $\textbf{Q}$. The relation between original (cartesian) coordinates and new coordinates (normal) is given as: $\textbf{Q}=\beta \textbf{x}$, where $\textbf{F}_{diag}=\beta^{T}\textbf{F}\beta$ , $\textbf{G}_{diag}=\beta^{T}\textbf{G}\beta$. $\textbf{G}_{diag}$ has unit values on its diagonal, while diagonal elements of $\textbf{F}_{diag}$ represent eigen-frequencies of the system.
\item The separated Lagrangian guarantees decoupled equations of motion of harmonic oscillator type, which has a general solution:
\begin{equation}
\begin{split}
Q_{i}=A_{i}\cos\omega_{i}t+B_{i}\sin\omega_{i}t\ \ i=1,...,n
\end{split}
\end{equation}
\item Utilizing the relation between new and old coordinates (components respective vectors) we have:
 \begin{equation}
\begin{split}
\textbf{x}=\beta^{T} \textbf{Q}=\textbf{G}^{-\frac{1}{2}}\textbf{O}\textbf{Q}
\end{split}
\end{equation}
\item Hence the explicit form for components reads:
\begin{equation}
\begin{split}
\alpha_{i}=\sum_{k=1}^{n}\beta^{T}_{ik}\left(A_{k}\cos\omega_{k}t+B_{k}\sin\omega_{k}t\right)
\end{split}
\end{equation}
\item From initial conditions $x(0)=\kappa, \dot{x}(0)=\beta$ we can determine the values of coefficients $A_{k}$ and $B_{k}$:
\begin{equation}
\begin{split}
A_{k}=(\alpha^{(k)},\textbf{G}\kappa) \\
B_{k}=\frac{(\alpha^{(k)},\textbf{G}\beta)}{\omega_{k}}
\end{split}
\end{equation}
\end{enumerate}

Alternative and yet interesting method of finding of normal coordinates may be the application of the Fourier transform to both sides of the equation for eigenvector of both quadratic forms. By that means the equation is transformed into frequency domain, yielding readily separated problem. This can be done on account of first powers of spatial coordinates appearing in the potential energy function. As a result we get a vector equation:

\begin{equation}
\begin{split}
(V-\omega^{2}T)\hat{\textbf{x}}(\omega)=0
\end{split}
\label{eq:fourier}
\end{equation}

where we used integration by parts twice and applied the following trick: before we can act with the Fourier transform our functions must behave 'well' enough, hence we must assume that our system was under an action of some dumping force parametrized by $\epsilon$ so that $\textbf{x}(t)$ vanishes in infinity. After that we take the limit $\epsilon \rightarrow 0$ to retrieve the form of eq. \ref{eq:fourier}.

\section{\label{sec:watson} General theory of nuclear motion}

\subsection{Introduction}

In this section we aim in giving a more detailed insight into the basics of nuclear motion theory. Unlike electronic structure, the nuclear motion calculations require 3 key factors: a) preferably exact representation of the nuclear kinetic energy operator in terms of chosen curvlinear coordinates, b) accurate potential energy surface, c) effective basis set, which allows for evaluation of matrix elements at the lowest possible cost. Having these provided a method of solution of the Schr\"{o}dinger equation is also needed.
Here we treat one aspect of point a). The last section, in turn, catches a bit of point c), serving rigorous theory that underlies commonly used rotational basis sets. Similar treatment of two most popular vibrational basis sets may be found in ref. \cite{Zak_HO} and references therein. Methods of solution of SE are discussed in wide set of textbooks and papers; here we refer to author's unpublished work on discrete variable representation \cite{Zak_DVR}, as one of possible techniques. \\

Normal coordinates are special case of curvlinear coordinates, same as straight line is a special case of a curve. In what follows we introduce several frames of reference that are commonly used by authors. After careful investigation of the procedure for separation of the center of mass we shall focus on a special case of Eckart frame, to yield so called Watson form of the rovibronic Hamiltonian.

\subsection{Rovibronic Coordinates}
This section is intended to provide comprehensive, to a reasonable extent, review of coordinate system choices and their implications on the form of molecular Hamilton operator.
\subsubsection{Laboratory Frame}
Lets consider a system with $N$ point particles. As a model for physical space we shall consider pair $(\Xi,V)$ constituting affine space over three dimensional vector space $V$ with metric $g_{ij}=\delta_{ij}$ associated with set of points $\Xi$ (Euclidean space). At this early stage we don't distinguish between electrons and nuclei, leaving the expressions more general. Such an approach may occur useful when investigating exotic systems like muon atoms (cold fusion)\cite{Jackson1957, Jones1986}, Excitons \cite{Davydov}, etc. The primary, however poorly descriptive choice of coordinate system is space-fixed cartesian frame of reference depicted in fig.\ref{LabFrame} with origin marked as $\varnothing_{0}$.

\begin{figure}[!ht]
\centering
\includegraphics[width=0.7\columnwidth]{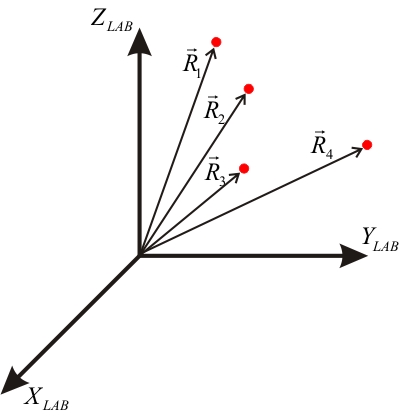}
\caption[An example of a space-fixed laboratory frame of reference for a 4-particle system.]{An example of a space-fixed laboratory frame of reference for 4-particle system.} % The text in the square bracket is the caption for the list of figures while the text in the curly brackets is the figure caption
\label{LabFrame}
\end{figure}

Given a set of physical quantities ( mass: $m_{i}$, charge: $C_{i}e$, electron g-factor: $g$, electron spin: $s_{i}$, nuclear g-factor: $g_{\alpha}$, nuclear spin: $I_{\alpha}$, nulcear charge quadrupole moment: $Q^{\alpha}_{ab}$, nuclear polarizability: $\chi^{\alpha}_{ij}$, etc. ) one can try to extract information about a system by solving stationary Schr\"{o}dinger equation. Resulting eigenvalues and eigenfunctions provide complete description of a system in quantum-mechanical sense. \textit{Hilbert space} of such system divides into two orthogonal subspaces for fermions and bosons, but this is yet another story \cite{Emil:QFT}. Skipping less relevant digressions, a general Hamiltonian for a molecular system may be written in the following form:
\begin{equation}
\hat{H}=\hat{T}_{Total}+\hat{V}_{Coulomb}+\hat{H}_{other}
\end{equation}

$\hat{H}_{other}$ stands here for other forms of energy not included in present discussion, e.g. mutual electron magnetic moment interactions, spin-orbit coupling, hyperfine couplings, etc. These will be the subject of another article. As the system consist of $N$ species, the quantum energy operator depend on $3N$ position operators: $(X_{1},Y_{1},Z_{1},...X_{N},Y_{N},Z_{N})$ and $3N$ momentum operators: $(P_{1x},P_{1y},P_{1z},...P_{Nx},P_{Ny},P_{Nz})$. Our aim is to change the coordinates in a way allowing us firstly to perform easier separation of electronic motion from vibrations of nuclei and rotational motion entire molecule, secondly to visualize the correspondence between expressions appearing in Hamiltonian and type of motion and finally to make some physically justified approximations leading to some exactly solvable situations. These solutions may be later utilized as a basis for more accurate calculations.
\paragraph{}
The first step in changing to rovibronic coordinates is separation of the center of mass. We therefore define isometric affine transformation $(\vec{R}_{CM},M_{CM})_{\varnothing_{LAB}}$ \cite{Katsumi} as follows:
\begin{equation} \label{LAB}
M_{CM}:\quad \Re^{3N} \rightarrow \Re^{3N}: \qquad \vec{R}_{i}=\vec{R}_{CM}+\vec{r}_{i}
\end{equation}
where the translation of the origin is given by center of mass vector in laboratory frame of reference:
\begin{equation}
\vec{R}_{CM}:=\frac{1}{M}\sum_{j=1}^{N}m_{j}\vec{R}_{j}
\end{equation}
Hence, the origin of the new coordinates system is chosen (by the above definition) in the center of mass of a molecule, i.e. $\varnothing_{CM}=\varnothing_{LAB}+\vec{R}_{CM}$ (cf. Fig.\ref{ThreeFrames}). This implies:
\begin{equation}
\sum_{j=1}^{N}m_{j}\vec{r}_{j}=0
\label{eq:CM}
\end{equation}
Thus, instead dealing with $N$ position vectors of all particles we can incorporate center of mass vector as one of the coordinates and eliminate the coordinates of one arbitrary particle, since there will always be exactly one vector linearly dependent on others $(cf.~\ref{eq:CM})$. This is formally another affine transformation with no translational part, and linear part equal to identity except first three relations replacing coordinates of the first particle by center of mass coordinates, which are 0 in our system. Following this line of thought, we've passed from laboratory to center of mass coordinates as shown below:

\begin{equation}
\resizebox{\hsize}{!}{$(X_{1},Y_{1},Z_{1},...,X_{N},Y_{N},Z_{N}) \rightarrow (X_{CM},Y_{CM},Z_{CM},x_{2},y_{2},z_{2},...x_{N},y_{N},z_{N})$}
\end{equation}

where we have erased 'particle 1' coordinates. Now we need to check how momentum operators transform under the given coordinate change. These are 1-forms on Hilbert space. In cartesian laboratory frame (denoted as $L$) the kinetic energy operator reads:
\begin{equation}
\hat{T}=-\frac{\hbar^{2}}{2}\sum_{i=1}^{N}\frac{\Delta_{i}^{L}}{m_{i}}
\end{equation}
or in more general form can be regarded as the contraction of a metric tensor with a 2-vector formed from differentiation operators.
Lets transform the first derivatives according to the \textit{chain rule}:
\begin{align}\nonumber
& \frac{\partial}{\partial X_{1}}=\sum_{i=1}^{N}\frac{\partial x_{i}}{\partial X_{1}}\frac{\partial}{\partial x_{i}}=\frac{\partial X_{CM}}{\partial X_{1}}\frac{\partial}{\partial X_{CM}}+ &&\\
& +\sum_{i=2}^{N}\frac{\partial x_{i}}{\partial X_{1}}\frac{\partial}{\partial x_{i}} =\frac{m_{1}}{M}\frac{\partial}{\partial X_{CM}}+\sum_{i=2}^{N}(\delta_{i1}-\frac{m_{1}}{M})\frac{\partial}{\partial x_{i}} &&
\end{align}
and
\begin{equation}
\frac{\partial}{\partial X_{k}}=\frac{m_{k}}{M}\frac{\partial}{\partial X_{0}}+\sum_{i=2}^{N}(\delta_{ik}-\frac{m_{k}}{M})\frac{\partial}{\partial x_{i}}
\end{equation}

Due to the convention excluding first particle's coordinates from the coordinate system, the cartesian kinetic energy operator of the first particle transforms differently from the rest:
\begin{equation}
\small
\frac{\partial^{2}}{\partial X_{1}^{2}}= \left(\frac{m_{1}}{M}\right)^{2} \left(\frac{\partial^{2}}{\partial X_{CM}^{2}}-2\sum_{i=2}^{N}\frac{\partial^{2}}{\partial X_{CM}\partial x_{i}}+\sum_{i,j=2}^{N}\frac{\partial^{2}}{\partial x_{i}\partial x_{j}}\right)
\end{equation}
consequently,
\small{
\begin{align}\nonumber
 & \frac{\partial^{2}}{\partial X_{k}^{2}} = \left(\frac{m_{k}}{M}\right)^{2} \left(\frac{\partial^{2}}{\partial X_{CM}^{2}}-2\sum_{i=2}^{N}\frac{\partial^{2}}{\partial X_{0}\partial x_{i}}+\sum_{i,j=2}^{N}\frac{\partial^{2}}{\partial x_{i}\partial x_{j}}\right)+ &&\\ 
 & +\frac{m_{k}}{M}\left(\frac{\partial ^{2}}{2\partial X_{CM}\partial x_{k}}-2\sum_{i,j=2}^{N}\frac{\partial^{2}}{\partial x_{i}\partial x_{j}}\right)+\frac{\partial^{2}}{\partial x_{k}^{2}} &&
\end{align}}
\normalsize
One could tempt to draw conclusions about couplings between particle's motions at this stage. However, one remarkable feature of many-body systems should be pointed here: although some intermediate 'one-particle' terms may seem complicated, it is often the case, that their collective treatment reveals numerous cancellations, entailing a simple final picture. Therefore one should not rush for interpretation without analyzing the system as a whole.
Accordingly, after substituting derived Laplace operators into kinetic energy terms we get:
\begin{equation}
\frac{1}{m_{1}}\Delta^{L}_{1}=\frac{m_{1}}{M^{2}}\left(\Delta_{CM}-2\sum_{j=2}^{N}\vec{\nabla}_{CM}\cdot\vec{\nabla}_{j}+\sum_{j,i=2}^{N}\vec{\nabla}_{j}\cdot\vec{\nabla}_{i}\right)
\end{equation},
\begin{align}\nonumber
& \frac{1}{m_{k}}\Delta^{L}_{1}=\frac{m_{k}}{M^{2}}\left(\Delta_{CM}-2\sum_{j=2}^{N}\vec{\nabla}_{CM}
\cdot\vec{\nabla}_{j}+\sum_{j,i=2}^{N}\vec{\nabla}_{j}\cdot\vec{\nabla}_{i}\right)+ &&\\
& +\frac{2}{M}\left(\vec{\nabla}_{CM}\cdot\vec{\nabla}_{k}-\sum_{j=2}^{N}\vec{\nabla}_{j}
\cdot\vec{\nabla}_{j}\right)+\frac{1}{m_{i}}\Delta_{k} &&
\end{align}
Taking sums with respect to all particles the total kinetic energy operator reads:
\begin{equation}
\small
\hat{T}=-\frac{\hbar^{2}}{2}\left(\frac{1}{M}\Delta_{CM}+\sum_{i=2}^{N}\frac{\Delta_{i}}{m_{i}}-\frac{1}{M}\sum_{j,i=2}^{N}\vec{\nabla}_{j}\cdot\vec{\nabla}_{i}\right)\equiv \hat{T}_{CM}+\hat{T}_{0}+\hat{T}'
\label{CM}
\end{equation}
where we can identify the kinetic energy of the center of mass $\hat{T}_{CM}$, total kinetic energy of individual, independent particles $\hat{T}_{0}$ and kinetic energy correction term resulting from coupling of correlated particles motion, called often \textit{mass polarization} or cross-terms $\hat{T}'$. This term can be treated as a perturbation to uncoupled Hamiltonian, and in first order of perturbation theory the correction appears as so called \textit{mass polarization parameter}: $K=\sum_{j,i=2}^{N}\left\langle\vec{\nabla}_{j}\cdot\vec{\nabla}_{i}\right\rangle$ \cite{Yamanaka1999}.

Lets notice at this point emerging coupling between the motions of particles. Because a change in position of any of the particles affects the position of center of mass, which we chose as dynamical variable of the system, and because the origin of coordinate system is located in the center of mass, numerical values of other particles positions undergo a change either. It is not a physical effect, rather an artifact of particular coordinate system. It's the cost one must pay in order to separate the center of mass of a system and thereafter to reduce the dimension of the problem. This cost is however in most cases viable due to the fact couplings are linear in the inverse of total mass of the system, which in case of electrons correspond to non-Born-Oppenheimer (NBO) terms. These are usually neglected on account of electron's mass being three orders of magnitude smaller than that for lightest nucleus. In other words, all except first particle's (which is excluded from the system) kinetic energy operators transform trivial when restricting to BO regime, i.e. when neglecting linear terms of electron mass and total (in practice nuclear) mass quotient. This is quite reasonable approximation, as the motion of light electrons very minutely affects the position of center of mass, therefore very weakly couples to motion of other particles.
Culomb potential energy depends only on distances between particles, hence transforms identically to the center of mass coordinate system:

\begin{equation}
\hat{V}=\frac{1}{2}\sum_{i,j=2}^{N}\frac{q_{1}q_{2}}{|\vec{r}_{i}-\vec{r}_{j}|}
\end{equation}
Because the Hamilton operator in \ref{CM} is separable, we can separate the  wavefunction of center of mass as follows:

\begin{equation}
\Phi=\Phi_{CM}(X_{0},Y_{0},Z_{0})\Phi_{int}(x_{2},y_{2},z_{2},...,x_{N},y_{N},z_{N})
\end{equation}
where $\Phi_{int}$ is the wavefunction of internal motion of the system.

\subsubsection{Nuclear center of mass coordinate frame}

\begin{figure}[tb]
\centering
\includegraphics[width=0.8\columnwidth]{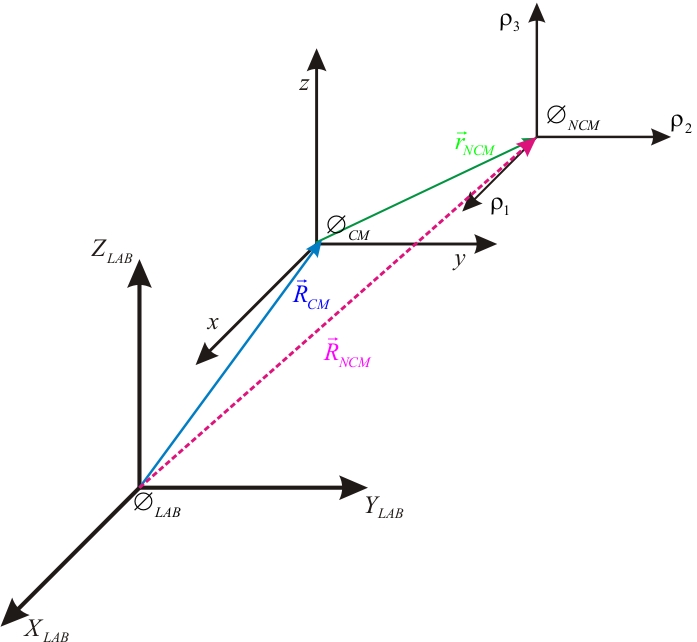}
\caption[Three space-fixed frames of reference: Laboratory, total center of mass and nuclear center of mass]{Three space-fixed frames of reference: Laboratory, total center of mass and nuclear center of mass.} % The text in the square bracket is the caption for the list of figures while the text in the curly brackets is the figure caption
\label{fig:ThreeFrames}
\end{figure}
 It will be very useful to begin our discussion with the following theorem:

\begin{theorem}[Center of mass coupling]
If $S$ denotes the 	entire system and $A$ is a system of particles that constitute a subset of $S$, then whenever transforming via affine map $M_{A}$ from a space-fixed cartesian coordinate system with a frozen origin $\emptyset_{0}$ to a cartesian coordinate system with origin $\emptyset_{1}$ located in the center of mass of subset $A$, the corresponding kinetic energy operator of the entire system $S$ transforms into separable form of simple sum of two kinetic energy operators $\hat{T}_{A}\oplus\hat{T}_{S/A}$ written in explicit as follows:
\begin{equation}
\hat{T}_{A}=-\frac{\hbar^{2}}{2}\sum_{i\in A}\frac{\Delta_{i}}{m_{i}}+\frac{\hbar^{2}}{2M_{A}}\sum_{j,i \in A}\vec{\nabla}_{j}\cdot\vec{\nabla}_{i}
\end{equation}
\begin{equation}
\hat{T}_{S/A}=-\frac{\hbar^{2}}{2}\sum_{i\in S/A}\frac{\Delta_{i}}{m_{i}}-\frac{\hbar^{2}}{2M_{A}}\sum_{j,i \in S/A}\vec{\nabla}_{j}\cdot\vec{\nabla}_{i}
\end{equation}
where differentiation is taken with respect to coordinates in $\emptyset_{1}$ orthogonal coordinate system.
\end{theorem}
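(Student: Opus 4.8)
The plan is to recognize $M_A$ as exactly the center-of-mass separation already carried out for the whole system in Eq.~\eqref{CM} (the decomposition $\hat{T}=\hat{T}_{CM}+\hat{T}_{0}+\hat{T}'$), now performed with respect to the subsystem $A$ while the particles of $B:=S/A$ are carried along, referred to the new co-moving origin. Concretely I would set $\vec{R}_{CM_A}=\frac{1}{M_A}\sum_{i\in A}m_i\vec{R}_i$, pass to relative coordinates $\vec{r}_i=\vec{R}_i-\vec{R}_{CM_A}$ for every particle, retain $\vec{R}_{CM_A}$ as the three translational coordinates, and use the internal relation $\sum_{i\in A}m_i\vec{r}_i=0$ to eliminate one arbitrary particle of $A$. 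This is literally the coordinate set of the earlier derivation with $M\to M_A$ and the sums restricted to $A$, so the algebra of the $A$-block can be imported without repeating it.

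The efficient route is to transform by the chain rule and read the result off an effective inverse-mass metric $G_{cc'}=\sum_{k}m_k^{-1}(\partial c/\partial X_k)(\partial c'/\partial X_k)$, so that $\hat{T}=-\tfrac{\hbar^{2}}{2}\sum_{c,c'}G_{cc'}\,\partial_c\partial_{c'}$. For $k\in B$ one finds $\partial/\partial X_k=\partial/\partial\vec{r}_k$ directly, while for $k\in A$ the derivative picks up the familiar $m_k/M_A$ weights. Evaluating the blocks gives, for the $A$-internal coordinates, $G_{ij}=\delta_{ij}/m_i-1/M_A$, i.e.\ the internal kinetic energy of $A$ carrying the $+\hbar^{2}/(2M_A)$ mass-polarization of $\hat{T}_A$; and for the $B$ coordinates $G_{ll'}=\delta_{ll'}/m_l+1/M_A$, whose positive off-diagonal entries are inherited from the $A$-particle Laplacians through the shared recoiling origin and reproduce precisely the $-\hbar^{2}/(2M_A)$ term of $\hat{T}_{S/A}$. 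The opposite signs are then explained structurally: $A$ is referred to its own center of mass, whereas $B$ is referred to a point that itself carries the momentum of $A$.

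The real content---and where I expect the work to lie---is the direct-sum structure, i.e.\ that no second derivative mixes an $A$-internal coordinate with a $B$ coordinate. I would compute $G_{il}=\sum_{k\in A}m_k^{-1}(\delta_{ik}-m_k/M_A)(-m_k/M_A)=-M_A^{-1}\sum_{k\in A}(\delta_{ik}-m_k/M_A)$ and annihilate it with the mass-weighted identity $\sum_{k\in A}(\delta_{ik}-m_k/M_A)=1-1=0$; the same identity shows that $\vec{R}_{CM_A}$ decouples from the $A$-internal sector. The subtlety I would confront rather than conceal is the disposal of the three translational coordinates: unlike the total center of mass, $\vec{R}_{CM_A}$ is \emph{not} cyclic here, since it retains a coupling of strength $-1/M_A$ to the $B$ coordinates. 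Hence the clean splitting $\hat{T}_A\oplus\hat{T}_{S/A}$ is really the statement about the surviving $3N-3$ internal degrees of freedom, obtained once the overall rigid translation (carried by $\vec{R}_{CM_A}$ with all relative coordinates held fixed, i.e.\ a simultaneous shift of every particle) is removed in the co-moving frame. Arguing that this removal is consistent, and that it is exactly the vanishing of the $A$-internal/$B$ cross block above that renders the two surviving sectors independent, is the crux; the remainder is the routine chain-rule algebra already exhibited for the case $A=S$.
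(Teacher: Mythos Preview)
Your proposal is correct and follows essentially the same route as the paper: both transform via the affine shift $\vec{\rho}_i=\vec{r}_i-\vec{R}_{CM_A}$, push the Laplacian through by the chain rule, and show that the $A$--$B$ cross terms cancel algebraically (your identity $\sum_{k\in A}(\delta_{ik}-m_k/M_A)=0$ is exactly the mechanism behind the paper's rearrangement of sums). The only substantive difference is packaging: the paper computes $\partial^2/\partial x_i^2$ and the mixed second derivatives directly in the overcomplete set $\{\rho_j\}_{j\in S}$ and never introduces $\vec{R}_{CM_A}$ as an explicit coordinate, whereas you organise the same calculation through the contravariant metric $G_{cc'}$ and retain $\vec{R}_{CM_A}$, which lets you surface the residual $\vec{R}_{CM_A}$--$B$ coupling that the paper's formulation silently absorbs. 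Your extra paragraph on disposing of that translational sector is a point the paper's proof does not make explicit, but it is consistent with (and arguably sharpens) the paper's argument rather than departing from it.
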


\begin{proof}
Linear part of the affine transformation relates coordinates of considered systems as:
\begin{equation}
\vec{\rho}_{i}=\vec{r}_{i}-\frac{1}{M_{A}}\sum_{j\in A}m_{j}\vec{r}_{j}
\end{equation}
where $i\in A$.
Using the chain rule we can write the expression for the first derivative:
\begin{equation}
\frac{\partial}{\partial x_{i}}=\sum_{j\in S}\sum_{s=1}^{3}\frac{\partial \rho_{js}}{\partial x_{i}}\frac{\partial}{\partial \rho_{js}}
\end{equation}

Please note that the transformation does not mix different components (can be decomposed into three transformations of orthogonal subspaces), hence summation over $s$ vanishes immediately:
\begin{equation}
\frac{\partial}{\partial x_{i}}=\sum_{j\in S}\left(\delta_{ij}-\frac{m_{i}}{M_{A}}\delta_{iA}\right)\frac{\partial}{\partial \rho_{js}}
\end{equation}
The symbol $\delta_{iA}$ is defined by:
\begin{equation}
\delta_{iA}=\left\{ \begin{array}{ccc}
1\: when\: i\in A \\
0 \:when\: i \notin A\\
 \end{array}\equiv\sum_{l\in A}\delta_{il} \right|
\end{equation}
Consequently,
\begin{equation}
\small
\begin{split}
\frac{\partial^{2}}{\partial x_{i}^{2}}=\sum_{j,k\in S}\left(\delta_{ij}-\frac{m_{i}}{M_{A}}\delta_{iA}\right)\left(\delta_{ik}-\frac{m_{i}}{M_{A}}\delta_{iA}\right)\frac{\partial^{2}}{\partial \rho_{j1}\partial \rho_{k1}}=\\
= \frac{\partial^{2}}{\partial \rho_{i1}^{2}}-2\frac{m_{i}}{M_{A}}\delta_{iA}\sum_{j\in S}\frac{\partial^{2}}{\partial \rho_{i1}\partial \rho_{j1}}+\left(\frac{m_{i}}{M_{A}}\right)^{2}\delta_{iA}\sum_{j,k\in S}\frac{\partial^{2}}{\partial \rho_{j1}\partial \rho_{k1}}
\end{split}
\end{equation}
\normalsize
where $\rho_{k1}$ stands for $1$-st component of pointing vector of particle $k$ in $A$ center of mass system.
Mixed terms can be evaluated in similar way to yield:
\begin{align*}
\frac{\partial^{2}}{\partial x_{i} \partial x_{j}}=\frac{\partial^{2}}{\partial \rho_{i1} \partial \rho_{j1}}-\frac{m_{i}}{M_{A}}\delta_{iA}\sum_{k\in S}\frac{\partial^{2}}{\partial \rho_{j1}\partial \rho_{k1}}-& \\ -\frac{m_{j}}{M_{A}}\delta_{jA}\sum_{l\in S}\frac{\partial^{2}}{\partial \rho_{i1}\partial \rho_{l1}}
+\frac{m_{i}m_{j}}{M_{A}^{2}}\delta_{iA}\delta_{jA}\sum_{k,l\in S}\frac{\partial^{2}}{\partial \rho_{k1}\partial \rho_{l1}}
\end{align*}

After extending above formulas to three dimensions, the intermediate result appear as:
\begin{equation}
\small
\begin{split}
\sum_{i\in S}\frac{\Delta'_{i}}{m_{i}}=\left(\sum_{i\in S}\frac{\Delta_{i}}{m_{i}}-2\frac{1}{M_{A}}\sum_{i\in A}\sum_{j\in S}\vec{\nabla}_{i}\cdot\vec{\nabla}_{j}+
\frac{1}{M_{A}}\sum_{i,j\in S}\vec{\nabla}_{i}\cdot \vec{\nabla}_{j}\right)
\end{split}
\end{equation}
\normalsize
And unlike in the approach of ref.\cite{Bunker}, we obtain simple cross terms transformation:
\begin{equation}
\sum_{i,j\in S}\vec{\nabla'}_{i}\cdot\vec{\nabla'}_{j}=0
\end{equation}
Finally the total kinetic energy operator in $\emptyset_{1}$ and the new coordinates reads :
\begin{equation}
\hat{T}_{\emptyset_{1}}=-\frac{\hbar^{2}}{2}\sum_{i\in S}\frac{\Delta_{i}}{m_{i}}+\hbar^{2}\frac{1}{M_{A}}\sum_{i\in A}\sum_{j\in S}\vec{\nabla}_{i}\cdot\vec{\nabla}_{j}-
\frac{\hbar^{2}}{2M_{A}}\sum_{i,j\in S}\vec{\nabla}_{i}\cdot \vec{\nabla}_{j}
\end{equation}
By rearranging summations one obtain separable operator:
\begin{align}\nonumber
\hat{T}_{\emptyset_{1}}=&-\frac{\hbar^{2}}{2}\sum_{i=1}^{A}\frac{\Delta_{i}}{m_{i}}-\frac{\hbar^{2}}{2}\sum_{i\in S/A}\frac{\Delta_{i}}{m_{i}}-\frac{\hbar^{2}}{M_{A}}\sum_{i,j\in S/A}\vec{\nabla}_{i}\cdot\vec{\nabla}_{j}+\\
&+\frac{\hbar^{2}}{2M_{A}}\sum_{i,j\in A}\vec{\nabla}_{i}\cdot \vec{\nabla}_{j}\equiv \hat{T}_{A}+\hat{T}_{S/A}
\label{T_{S}}
\end{align}
\end{proof}
Now, on the base of relation \ref{CM} we could claim that it is possible to pass from space-fixed frozen-origin laboratory frame to $A$ frame, on account of transitive relation between these frames. To prove that and to reveal some formalism we write that:
\begin{equation}
\left(\vec{R}_{CM},M_{CM}\right)_{\emptyset_{0}}
\end{equation}
We consider three transformations:
\begin{enumerate}
  \item From the laboratory frame to the total center of mass frame, with associated affine transformation: $\left(\vec{R}_{CM},M_{CM}\right)_{\emptyset_{0}}$
  \item From the total center of mass frame to the nuclear center of mass frame: $\left(\vec{r}_{NCM},M_{NCM}\right)_{\emptyset_{1}}$
  \item From the laboratory frame to the nuclear center of mass frame: $\left(\vec{R}_{NCM},M_{LNCM}\right)_{\emptyset_{0}}$
\end{enumerate}
If we are be able to show that combining the first two transformations yields in the third function, the proof is be complete. In other words we ask if:
\begin{equation}
\small
\left(\vec{r}_{NCM},M_{NCM}\right)_{\emptyset_{1}}\circ\left(\vec{R}_{CM},M_{CM}\right)_{\emptyset_{0}}=\left(\vec{R}_{NCM},M_{LNCM}\right)_{\emptyset_{0}}
\end{equation}
holds. Utilizing the properties of affine transformation: $ \left(x,A\right)_{\emptyset}\circ\left(y,B\right)_{\emptyset}=\left(x+Ay,AB\right)_{\emptyset}$ and
$\left(x,A\right)_{\emptyset}=\left(x+(A-id)a,A\right)_{\emptyset+a}$ we can write that:
 \small
\begin{flalign}
 & \left(\vec{r}_{NCM},  M_{NCM}\right)_{\emptyset_{1}}  \circ\left(\vec{R}_{CM},M_{CM}\right)_{\emptyset_{0}}= &&\\\nonumber
 & =\left(\vec{r}_{NCM}-(M_{NCM}-id)\vec{R}_{CM},M_{NCM}\right)_{\emptyset_{0}}\circ\left(\vec{R}_{CM},M_{CM}\right)_{\emptyset_{0}}&&\\\nonumber
 & =\left(\vec{r}_{NCM}+\vec{R}_{CM},M_{NCM}M_{CM}\right)_{\emptyset_{0}}=\left(\vec{R}_{NCM},M_{LNCM}\right)_{\emptyset_{0}} &&
\end{flalign}
\normalsize
 In the present case the subsystem $A$ stands for all nuclei of the system (molecule), hence $S/A$ gathers all electrons.
Lets label by $1,2,...,N_{nuc}$ nuclei of the system and by $N_{nuc}+1,N_{nuc}+2,...,N_{nuc}+N_{el}$, where $N_{nuc}+N_{el}=N$. Linear affine map into the new coordinates system is defined as:
\begin{equation}
M_{NCM}:\quad \Re^{3N-3} \rightarrow \Re^{3N-3}: \qquad \vec{r}_{i}=\vec{r}_{NCM}+\vec{\rho}_{i}
\end{equation}
where:
\begin{equation}
\vec{r}_{NCM}:=\frac{1}{M_{n}}\sum_{j=1}^{N_{nuc}}m_{j}\vec{r}_{j}
\end{equation}
is nuclear center of mass in space fixed total center of mass coordinates system. Therefore explicit transformation may be written as:
\begin{equation}
\vec{\rho}_{i}=\vec{r}_{i}+\frac{m_{e}}{M_{n}}\sum_{j=N_{nuc}+1}^{N}\vec{r}_{j}
\end{equation}
The above theorem guarantees that we can make straightforward transformation from a space-fixed coordinate system into a nuclear center of mass coordinate system obtaining the kinetic energy operator expressed as in eq.\ref{T_{S}}.

\subsubsection{Euler Angles}
\begin{figure}[tb]
\centering
\includegraphics[width=0.8\columnwidth]{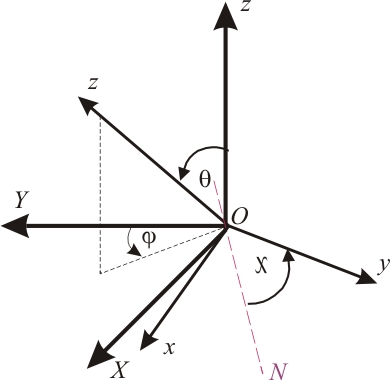}
\caption[Euler Angles in zyz' convention]{Euler angles in zyz' convention} % The text in the square bracket is the caption for the list of figures while the text in the curly brackets is the figure caption
\label{Euler}
\end{figure}
Euler angles define the position of the rotated frame of reference with respect to the space fixed frame. There are several conventions yielding equivalent results, but for our purposes so called \textit{zyz'} convention will be most suitable \cite{Wilson}. It means that in order to transform one frame into the other we apply in proper order: $\phi$ angle rotation around $z$ axis, then $\theta$ angle around $y$ axis, and finally $\chi$ angle around $z$ axis of already partially transformed frame. It is also convenient to introduce \textit{node line} which represents positive sense of rotation from $OZ$ to $Oz$ and lays in the intersection of $xy$ and $XY$ planes. Full coverage of space is achieved when $\phi,\chi \in [0,2\pi]$ and $\theta \in [0,\pi]$.
Explicit transformation relations for coordinates of a vector in rotating frame can be obtained by multiplying appropriate rotation matrices \cite{shankar2010}. 
\begin{align}
 & R_{x}(\alpha)=\left(\begin{array}{ccc}
1 & 0 & 0 \\
0 & \cos\alpha & -\sin\alpha\\
0 & \sin\alpha & \cos\alpha \end{array} \right) \qquad R_{y}(\alpha)=\left(\begin{array}{ccc}
\cos\alpha& 0 & \sin\alpha \\
0 & 1 & 0\\
-\sin\alpha & 0  & \cos\alpha \end{array} \right) && \\ \nonumber
 & R_{z}(\alpha)=\left(\begin{array}{ccc}
\cos\alpha & -\sin\alpha & 0\\
 \sin\alpha & \cos\alpha & 0 \\
 0 & 0 & 1
 \end{array} \right) &&
 \label{rotation}
\end{align} 
Therefore in $zyz'$ scheme the general rotation matrix in physical space reads:
\begin{widetext}
\begin{equation}
R_{zyz'}(\phi,\chi,\theta)=R_{z'}(\theta)R_{y}(\chi)R_{z}(\phi)=\left(\begin{array}{ccc}
\cos\theta\cos\phi\cos\chi-\sin\phi\sin\chi & \cos\theta\sin\phi\cos\chi+\cos\phi\sin\chi & -\sin\theta\cos\chi \\
-\cos\theta\cos\phi\sin\chi-\sin\phi\cos\chi & -\cos\theta\sin\phi\sin\chi+\cos\phi\cos\chi & \sin\theta\sin\chi\\
\sin\theta\cos\phi & \sin\theta\sin\phi & \cos\theta \end{array} \right)
\label{fig:euler}
\end{equation} 
\end{widetext}
\normalsize
and this acts on the column representation of a vector in the space fixed frame, yielding the column representation of a vector in the rotating frame:
 \begin{equation}\label{euler1}
\left(\begin{array}{ccc}
x\\
y\\
z \end{array} \right)=R_{zyz'}(\phi,\chi,\theta)\left(\begin{array}{ccc}
X\\
Y\\
Z \end{array} \right)
\end{equation} 
Of course $R_{zyz'}(\phi,\chi,\theta) \in SO(3)$, hence:
\small
\begin{equation}
R_{zyz'}(\phi,\chi,\theta)R_{zyz'}(\phi,\chi,\theta)^{T}=R_{zyz'}(\phi,\chi,\theta)^{T}R_{zyz'}(\phi,\chi,\theta)=1
\end{equation} 
\normalsize
Rotations in \textit{Hilbert space} associated with rotations in physical space are given by unitary operator: $U\left(R^{\vec{n}}_{\phi}\right)=\exp\left(-\frac{i}{\hbar}\vec{J}\cdot \vec{n}\phi\right)$ as for rotation about $\phi$ around $\vec{n}$ vector.
\subsubsection{Molecule-fixed frame}
\paragraph{Classical approach}

Main advantage of the body-fixed frames of reference together with rovibronic coordinates is intuitive form of associated Hamilton operator of a system, allowing to apply approximations leading to separation of rotational and vibrational motion, etc. Terms responsible for particular physical effects may be then identified, providing convenient way of controlling approximation procedures. They also have considerable computational advantages such as use of $3-j$ symbols in angular potential integrations instead of $6-j$ in space-fixed case \cite{Tennyson1982}.
Lets introduce the body-fixed frame, determine rotational coordinates. Mutual orientation of body-fixed and space fixed frames is commonly given by set of three \textit{Euler angles} (see fig.\ref{fig:euler}). Origins of both coordinate systems are located in molecular center of mass. Previous assumptions lead to the conclusion that three rotating axes must be embedded to the molecular system according to some arbitrary rule. So far we've got a set of $3N-3$ coordinates in molecular center of mass cartesian coordinate system: $(x_{2},y_{2},z_{2},...,x_{N},y_{N},z_{N})$. We want a transformation into rotating frame of reference described by set of new coordinates: $(\theta,\phi,\chi,Q_{1},...,Q_{3N-6})$. These are called \textit{rovibronic coordinates} \cite{Bunker}. $Q_{i}$ stand for functions of cartesian coordinates. Herein, these functions are linear giving rise to normal coordinates. However one may choose any curvlinear internal coordinates (not discussed here). Every molecule has its equilibrium geometry defined as a set of coordinates values minimizing globally internal potential energy of the molecule. Lets denote equilibrium position vectors of all particles in a system as $\vec{a}_{\alpha}$ defining displacement from equilibrium position vectors as:
\begin{equation} \label{displacement}
\vec{\rho}_{\alpha}=\vec{r}_{\alpha}-\vec{a}_{\alpha}
\end{equation}
According to fig.\ref{fig:ThreeFrames} we can write total velocity of $\alpha-th$ particle of a system in laboratory cartesian coordinates system.
\begin{equation} \label{velocity}
\vec{V}_{\alpha}=\dot{\vec{R}}_{\alpha}+\vec{\omega}\times\vec{r}_{\alpha}+\vec{v}_{\alpha}
\end{equation}
The components of above equation correspond to center of mass velocity $\dot{\vec{R}}_{\alpha}$, velocity due to rotation of body-fixed frame $\vec{\omega}\times\vec{r}_{\alpha}$ and motion of particle in the rotating coordinate system $\vec{v}_{\alpha}$ , respectively. Now lets write a classical formula for the kinetic energy of our system of $N$ particles in LAB frame:

\begin{align} \nonumber
 & 2T=\sum_{\alpha=1}^{N}m_{\alpha}\vec{V}_{\alpha}^{2}=M\dot{\vec{R}}_{\alpha}^{2}+\sum_{\alpha=1}^{N}m_{\alpha}\left(\vec{\omega}\times\vec{r}_{\alpha}\right)\cdot
\left(\vec{\omega}\times\vec{r}_{\alpha}\right)+ && \\\nonumber
 & +\sum_{\alpha=1}^{N}m_{\alpha}\vec{v}_{\alpha}^{2}+\sum_{\alpha=1}^{N}2m_{\alpha}\dot{\vec{R}}_{\alpha}+\left(\vec{\omega}\times\vec{r}_{\alpha}\right)+ && \\
 & +2\dot{\vec{R}}_{\alpha}\sum_{\alpha=1}^{N}m_{\alpha}\vec{v}_{\alpha}^{2}+2\vec{\omega}\cdot\sum_{\alpha=1}^{N}m_{\alpha}\left(\vec{r}_{\alpha}\times\vec{v}_{\alpha}\right) &&
\label{kinetic1}
\end{align}
where for the last term we've used cyclic invariance of mixed vector product. After separating centre of mass motion the expression simplifies a bit - this is equivalent to passing into center of mass frame of reference. This yield in three relations confining the coordinates of all particles of a system:

\begin{equation} \label{CM=0}
\sum_{\alpha=1}^{N}m_{\alpha}\vec{r}_{\alpha}=0
\end{equation}
which is followed by (see \ref{velocity})
\begin{equation} \label{CMV=0}
\sum_{\alpha=1}^{N}m_{\alpha}\vec{v}_{\alpha}=0
\end{equation}
The only thing needed to be specified is the embedding of the rotating frame, which will provide another three relations. The natural choice and historical one is the condition of vanishing angular momentum of a collection of all particles in rotating frame:
\begin{equation} \label{J=0}
\vec{J}=\sum_{\alpha=1}^{N}m_{\alpha}\vec{r}\times\dot{\vec{r}}_{\alpha}=0
\end{equation}
Now if $\vec{\rho}_{\alpha}$ is small for all $\alpha$'s then we can within a good approximation write $\vec{r}_{\alpha}\approx\vec{a}_{\alpha}$ and

\begin{equation} \label{J=0approx}
\vec{J}\approx\sum_{\alpha=1}^{N}m_{\alpha}\vec{a}\times\dot{\vec{r}}_{\alpha}=0
\end{equation}
This approximation hold well for most of fairly rigid molecular systems, as the amplitudes of vibrations of semi-rigid molecules generally don't exceed $5\%$ of bond length \cite{Wilson}. For more general point of view see, for example later paper by \textit{Schmiedt, et al.}\cite{Schmiedt2015}. Keeping this intuition \textit{C. Eckart} \cite{Eckart1935} postulated following general conditions:
\begin{equation} \label{eq:Eckart}
\sum_{\alpha=1}^{N}m_{\alpha}\vec{a}\times\dot{\vec{\rho}}_{\alpha}=0
\end{equation}
whereas eq.\ref{J=0approx} stating approximate angular momentum conservation law occurs as it's direct consequence. The specification of the instantaneous position of the moving axes requires six numbers, which may be taken to be the three center of mass coordinates and the Eulerian
angles \cite{Eckart1935}. Hence we should have only $3N-6$ independent internal coordinates $Q_{i}$. Utilizing these conditions we may write the kinetic energy in a new form

\begin{align}\nonumber
 & 2T=\sum_{\alpha=1}^{N}m_{\alpha}\left(\vec{\omega}\times\vec{r}_{\alpha}\right)\cdot
\left(\vec{\omega}\times\vec{r}_{\alpha}\right) &&\\\nonumber
 & +\sum_{\alpha=1}^{N}m_{\alpha}\vec{v}_{\alpha}^{2}+ &&\\\nonumber
 & +2\vec{\omega}\cdot\sum_{\alpha=1}^{N}m_{\alpha}\left(\vec{\rho}_{\alpha}\times\vec{v}_{\alpha}\right)&&
\label{kinetic2}
\end{align} 

where we have rotational energy, vibrational kinetic energy and \textit{Coriolos} coupling energy respectively. This \textit{Eckart}-derived form allows to separate rotations from vibrations with the least cost, letting for most effective perturbation treatment of the Coriolis coupling. Note that Eckart equations have to be solved for euler angles.
After introducing \textit{moment of inertia} tensor $I_{ij}=\sum_{\alpha=1}^{N}m_{\alpha}\left(\vec{r}\cdot\vec{r}I-\vec{r}_{\alpha}\otimes\vec{r}_{\alpha}\right)$; and some algebraic manipulations the kinetic energy reads

\begin{align}\nonumber
& 2T=I_{xx}\omega_{x}^{2}+I_{yy}\omega_{y}^{2}+I_{zz}\omega_{z}^{2}-2I_{xy}\omega_{x}\omega_{y}-2I_{xz}\omega_{x}\omega_{z}-&& \\
& -2I_{yz}\omega_{y}\omega_{z}+\sum_{\alpha=1}^{N}m_{\alpha}\vec{v}_{\alpha}^{2}+
+2\vec{\omega}\cdot\sum_{\alpha=1}^{N}m_{\alpha}\left(\vec{\rho}_{\alpha}\times\vec{v}_{\alpha}\right)&&
\end{align}
Deriving the above relation would be a good simple exercise, left for the reader. One of the possible ways of dealing with internal motion of molecule (relative motion of nuclei) is normal coordinates approach. They are defined by linear relation to cartesian coordinates:
\begin{equation} \label{normaldef}
\rho_{\alpha i}=\sum_{k=1}^{3N-6}\eta^{\alpha}_{ik}Q_{k}
\end{equation}
for $i=1,2,3$ - index of displacement vector components, $\eta^{\alpha}_{ik}$ stands for linear transformation matrix ($3\times N\times 3N-6$) diagonalizing simultaneously kinetic energy and quadratic terms in potential energy of a system ; $Q_{k}$ is $k$-th normal coordinate. Of course $\alpha$ enumerates particles of the system, in practice those are the nuclei. We don't transfer the notation from section \ref{sec:level2}, to save the consistency with original derivations.
Within the normal coordinates framework the Coriolis term reads
\small
\begin{align}\nonumber
 & \vec{\omega}\cdot\sum_{\alpha=1}^{N}m_{\alpha}\left(\vec{\rho}_{\alpha}\times\vec{v}_{\alpha}\right)=\sum_{i,j,k=1}^{3}\sum_{\alpha=1}^{N}\sum_{s,t=1}^{3N-6}\omega^{i}m_{\alpha}\epsilon_{ijk}\eta^{\alpha}_{js}\eta^{\alpha}_{kt}Q_{s}
\dot{Q}_{t}=&&\\
 & =\sum_{i=1}^{3}\sum_{s,t=1}^{3N-6}\omega^{i}\tau_{is}\dot{Q}_{t} &&
\label{normaldef}
\end{align}
 \normalsize
where we introduced \textit{Levi-Civita totally antisymmetric tensor} $\epsilon_{ijk}$ and defined $\tau$ matrix as follows:
\begin{equation}\label{tau}
\tau_{is}:=\sum_{j,k=1}^{3}\sum_{\alpha=1}^{N}m_{\alpha}\epsilon_{ijk}\eta^{\alpha}_{js}\eta^{\alpha}_{kt}Q_{s}
\end{equation}
Here for the first time so called \textit{Coriolis coupling constants} $\xi$ appeared,  defined by the relation
\begin{equation}\label{coriolisconst}
\xi_{is}:=\sum_{j,k=1}^{3}\sum_{\alpha=1}^{N}\epsilon_{ijk}\eta^{\alpha}_{js}\eta^{\alpha}_{kt}
\end{equation}
The kinetic energy transforms now to the form

\begin{align}\nonumber
& 2T=I_{xx}\omega_{x}^{2}+I_{yy}\omega_{y}^{2}+I_{zz}\omega_{z}^{2}-2I_{xy}\omega_{x}\omega_{y}-2I_{xz}\omega_{x}\omega_{z}-&&\\
& -2I_{yz}\omega_{y}\omega_{z}+\sum_{s,t=1}^{3N-6}\sum_{i=1}^{3}\omega^{i}\tau_{is}\dot{Q}_{t}+\sum_{k=1}^{3N-6}\dot{Q}_{t}^{2} &&
\label{kinetic4}
\end{align}

Our aim is to make transition into quantum-mechanical expression for the kinetic energy, thus Hamilton picture is required, which, in turn involves generalized momenta and coordinates. Historical and perhaps more intuitive route utilizes angular momentum representation, what \textit{de facto} makes somewhat around way into the quantum mechanical formalism.  Lets start from the definition of angular momentum:
\begin{equation} \label{Angular}
\begin{split}
\vec{J}:=\sum_{\alpha=1}^{N}m_{\alpha}\vec{r}\times\dot{\vec{r}}_{\alpha}
\end{split}
\end{equation}
Introducing  angular velocity, moment of inertia and normal coordinates we get
\begin{align}\nonumber
 & \vec{J}:=\sum_{\alpha=1}^{N}m_{\alpha}\vec{r}\times\left(\vec{\omega}\times\vec{r}_{\alpha}\right)+
\sum_{\alpha=1}^{N}m_{\alpha}\vec{r}_{\alpha}\times\vec{v}_{\alpha}=&& \\\nonumber
 & 	=\sum_{\alpha=1}^{N}m_{\alpha}\left(r_{\alpha}^{2}\vec{\omega}-\left(\vec{\omega}\cdot\vec{r}_{\alpha}\right)\vec{r}_{\alpha}\right)+
\sum_{\alpha=1}^{N}m_{\alpha}\left(\vec{r}_{\alpha}-\vec{a}_{\alpha}\right)\times\vec{v}_{\alpha}=&&\\\nonumber
 & =\hat{x}\left(I_{xx}\omega_{x}-I_{xy}\omega_{y}-I_{xz}\omega_{z}\right)+\hat{x}\left(I_{yx}\omega_{x}-I_{yy}\omega_{y}-I_{yz}\omega_{z}\right)+&&\\
 &+\hat{z}\left(I_{zx}\omega_{x}-I_{zy}\omega_{y}-I_{zz}\omega_{z}\right)+\sum_{s,t=1}^{3N-6}\vec{\tau}_{s}\dot{Q}_{t}&&
\label{Angularexpansion}
\end{align}
It is desired to relate angular and generalized momenta, therefore we write the momentum conjugated to $Q_{s}$ as
\begin{equation}\label{momentum}
P_{s}=\frac{\partial T}{\partial \dot{Q}_{s}}=\dot{Q}_{s}+\sum_{i=1}^{3}\tau^{i}_{s}\omega_{i}
\end{equation}
now we can construct \textit{vibrational angular momentum}.

Its presence can be easily observed when considering vibrations of acetylene, where some motions of atoms that destroy linearity may contribute to some internal angular momentum of the molecule. Another example may be quasi-free rotations of methyl groups in hydrocarbons. These motions are in fact oscillations, which 'look' like rotations, carrying also some internal angular momentum.
\begin{equation}\label{vibrangular}
j_{i}=\sum_{k=1}^{3N-6}\tau^{i}_{k}P_{k}=\sum_{k=1}^{3N-6}\tau^{i}_{k}\dot{Q}_{k}+\sum_{k=1}^{3N-6}\tau^{i}_{k}\left(\vec{\tau}\cdot\vec{\omega}\right)
\end{equation}
It is clear that
\begin{equation}\label{kineticc}
2T=\vec{J}\cdot\vec{\omega}+\sum_{k=1}^{3N-6}P_{k}\dot{Q}_{k}
\end{equation}
and
\begin{equation}\label{momentum}
\dot{Q}_{k}=P_{k}-\vec{\tau}_{s}\cdot\vec{\omega}
\end{equation}
hence,
\begin{equation}\label{kinetic5}
2T=\left(\vec{J}-\vec{j}\right)\cdot\vec{\omega}+\sum_{k=1}^{3N-6}P_{k}^{2}
\end{equation}
The only task now is to get rid of angular velocity. Making use of relation between angular momentum and angular velocity
\begin{equation}\label{angularmomentumvel}
\begin{split}
J_{x}=I_{xx}\omega_{x}-I_{xy}\omega_{y}-I_{xz}\omega_{z}+\sum_{k=1}^{3N-6}\tau^{x}_{k}\dot{Q}_{k}\\
J_{y}=-I_{yx}\omega_{x}+I_{yy}\omega_{y}-I_{yz}\omega_{z}+\sum_{k=1}^{3N-6}\tau^{y}_{k}\dot{Q}_{k}\\
J_{z}=-I_{zx}\omega_{x}-I_{zy}\omega_{y}+I_{zz}\omega_{z}+\sum_{k=1}^{3N-6}\tau^{z}_{k}\dot{Q}_{k}
\end{split}
\end{equation}

what follows from \ref{vibrangular}
\begin{widetext}
\begin{equation}
\begin{split}
J_{x}-j_{x}=\left(I_{xx}-\sum_{k=1}^{3N-6}(\tau^{x}_{k})^{2}\right)\omega_{x}-\left(I_{xy}+\sum_{k=1}^{3N-6}\tau^{x}_{k}\tau^{y}_{k}\right)\omega_{y}-\left(I_{xz}+\sum_{k=1}^{3N-6}\tau^{x}_{k}\tau^{z}_{k}\right)\omega_{z}\\
J_{y}-j_{y}=-\left(I_{yx}+\sum_{k=1}^{3N-6}\tau^{y}_{k}\tau^{x}_{k}\right)\omega_{x}+\left(I_{yy}-\sum_{k=1}^{3N-6}(\tau^{y}_{k})^{2}\right)\omega_{y}-\left(I_{yz}+\sum_{k=1}^{3N-6}\tau^{y}_{k}\tau^{z}_{k}\right)\omega_{z}\\
J_{z}-j_{z}=-\left(I_{zx}+\sum_{k=1}^{3N-6}\tau^{z}_{k}\tau^{x}_{k}\right)\omega_{x}-\left(I_{zy}+\sum_{k=1}^{3N-6}\tau^{z}_{k}\tau^{y}_{k}\right)\omega_{y}+\left(I_{zz}-\sum_{k=1}^{3N-6}(\tau^{z}_{k})^{2}\right)\omega_{z}
\end{split}
\label{angularmomentumvel}
\end{equation}
\end{widetext}

By inverting the above relations to express angular velocities on left hand side of equations lets rewrite this system in a matrix form. Formally we intoduce
\begin{equation}\label{rovibAngMom}
\vec{M}=\vec{J}-\vec{j}
\end{equation}
and
\begin{equation}\label{omegavec}
\vec{\omega}= \left(\begin{array}{ccc}
\omega_{x} \\
\omega_{y} \\
\omega_{z} \end{array} \right)
\end{equation}
relation \ref{angularmomentumvel} may be formulated in compact form
\begin{equation}\label{rovibAngMomMat}
\vec{M}=I\vec{\omega}
\end{equation}
from which it's straightforward to obtain angular velocities by inverting real symmetric matrix $I$. Then kinetic energy takes more compact form, being a good starting point for the quantum Hamiltonian:
\begin{equation}\label{kinetic6}
T=\frac{1}{2}\vec{M}^{T}\mu\vec{M}+\frac{1}{2}\sum_{k=1}^{3N-6}P_{k}^{2}
\end{equation}
where $\vec{M}:=\vec{J}-\vec{j}$ and $\mu$ is the inverse matrix of $I$. This exact expression plays a crucial role in classical mechanics of non-rigid rotating bodies \cite{Wilson}.

\subsubsection{Podolsky Trick}

When attempting to derive quantum-mechanical form of the Hamilton operator for the general system discussed in previous paragraph, one might try to apply \textit{Jordan} rules straightforwardly to eq. \ref{kinetic6}. Unfortunately such approach will result in wrong answer, mainly because we've made transformation from cartesian to rovibronic(curvlinear) coordinates: $\left(x_{1},y_{1},z_{1},...,x_{N},y_{N},z_{N}\right)= \overrightarrow{\xi}\left(q_{1},...,q_{3N}\right)$ where $\xi$ is composed of transformation into molecular center of mass system and transformation molecule-fixed rotating frame of reference by use of Euler angles \ref{fig:ThreeFrames}. Both transformations are affine transformations. The first one is simply $(\vec{R}_{CM},M_{CM})_{\varnothing_{LAB}}$ and the latter has identity as translation part, while linear part is a combination of three Euler rotations. Classical expression for total energy in Lagrange formalism may be simply written as \cite{Islampour1983}:
\begin{equation}\label{clas_Lagrange}
\resizebox{.95\hsize}{!}{$E(\vec{r}_{1},...,\vec{r}_{N},\dot{\vec{r}}_{1},...,\dot{\vec{r}}_{N})=\frac{1}{2}\sum_{ij}\sqrt{m_{i}m_{j}}\delta_{ij}\dot{\vec{r}}_{i}\cdot\dot{\vec{r}}_{j}+V(\vec{r}_{1},...,\vec{r}_{N})$}
\end{equation}
and corresponding Hamilton form:
\begin{equation}\label{clas_hamilton}
\resizebox{.95\hsize}{!}{$H(\vec{r}_{1},...,\vec{r}_{N},\vec{p}_{1},...,\vec{p}_{N})=\frac{1}{2}\sum_{ij}\frac{\delta_{ij}\vec{p}_{i}\cdot\vec{p}_{j}}{\sqrt{m_{i}m_{j}}}+V(\vec{r}_{1},...,\vec{r}_{N})$}
\end{equation}
Transformation into generalized coordinates affects the expression for Lagrange-form energy in following way:
\begin{equation}\label{clas_Lagrange}
\resizebox{.95\hsize}{!}{$E(q_{1},...,q_{3N},\dot{q}_{1},...,\dot{q}_{3N})=\frac{1}{2}\sum_{ij}g_{ij}\dot{q}_{i}\dot{q}_{j}+V(q_{1},...,q_{3N})$}
\end{equation}
where use of chain rule for coordinate change results in form of quadratic form metric tensor:
\begin{equation}\label{metric-tensor}
g_{ij}=\sum_{n,\alpha}m_{n}\left(\frac{\partial r_{n\alpha}}{\partial q_{i}}\right)\left(\frac{\partial r_{n\alpha}}{\partial q_{j}}\right)
\end{equation}
We can use the definition of generalized momentum $P_{i}=\frac{\partial(H-V)}{\partial \dot{q}}$ to transform above relations into Hamilton form:
\begin{equation}\label{clas_hamilton}
H(q_{1},...,q_{3N},P_{1},...,P_{N})=\frac{1}{2}\sum_{ij=1}^{3N}g^{ij}P_{i}P_{j}+V(q_{1},...,q_{3N})
\end{equation}
where $g^{ij}$ matrix is inverse of $g_{ij}$ (we've simply raised two indices in metric tensor).
\begin{equation}\label{metric-tensor_inverse}
g^{ij}=\sum_{n,\alpha}m_{n}^{-1}\left(\frac{\partial q_{i} }{\partial r_{n\alpha}}\right)\left(\frac{\partial q_{j} }{\partial r_{n\alpha}}\right)
\end{equation}
For clarity of notation we will up from now use \textit{Einstein} summation convention, discerning upper and lower indices. Therefore classical Hamilton function will be written as follows: $H=\frac{1}{2}g^{ij}P_{i}P_{j}+V(q_{i})$.
In cartesian coordinates the metric tensor has unit matrix representation (up to a constant factor).
However when transformed into, for example spherical or elliptical coordinates, it takes more complicated form.
The proper form of the quantum Hamiltonian for general coordinates system was given by B.Podolsky in 1928 \cite{Podolsky1928}:
\begin{equation}\label{podolsky}
\hat{H}=\frac{1}{2}g^{-\frac{1}{4}}\hat{p}_{i}g^{-\frac{1}{2}}g^{ij}\hat{p}_{j}g^{-\frac{1}{4}}+\hat{V}
\end{equation}
where $g$ is the determinant of the metric tensor. The above expression can be derived in following way. First we build \textit{Laplace-Beltrami} operator \cite{Jost} as
\begin{equation}\label{LB1}
\Delta:=div\left(\vec{\nabla}\right)
\end{equation}
with divergence of a vector field $F$ defined as
\begin{equation}\label{div}
divF:=g^{-\frac{1}{2}}\partial_{i}g^{-\frac{1}{2}}F^{i}
\end{equation}
and gradient of a scalar field $\phi$:
\begin{equation}\label{grad}
\vec{\nabla}\phi:=\partial^{i}\phi=g^{ij}\partial_{j}\phi
\end{equation}
Combining the two results in
\begin{equation}\label{LB1}
\Delta:=div\left(\vec{\nabla}\right)=g^{-\frac{1}{2}}\partial_{i}g^{-\frac{1}{2}}g^{ij}\partial_{j}
\end{equation}
Therefore if the transformation from cartesian to generalized coordinates is given by the relation $q^{i}=q^{i}(x_{1},y_{1},z_{1},...,x_{N},y_{N},z_{N})$ then the quantum-mechanical Hamilton operator transforms according to the expression:
\begin{equation}\label{podolsky0}
\hat{H}=\frac{1}{2}g^{-\frac{1}{2}}\hat{p}_{i}g^{-\frac{1}{2}}g^{ij}\hat{p}_{j}+\hat{V}
\end{equation}
and the corresponding Schr\"{o}dinger equation reads
\begin{equation}\label{podolsky0}
\frac{1}{2}g^{-\frac{1}{2}}\hat{p}_{i}g^{-\frac{1}{2}}g^{ij}\hat{p}_{j}\psi(q)+\left(\hat{V}-E\hat{id}\right)\psi(q)=0
\end{equation}
Note that wavefunction in cartesian representation was normalized according to the condition:
\begin{equation}\label{normalization}
\int_{V}|\psi(x)|^{2}dx=1
\end{equation}
where $dx=dx_{1}dy_{1}dz_{1}...dx_{N}dy_{N}dz_{N}$ is the volume element in space $V$.
Now, after transformation our wavefunction is expressed by generalized coordinates. But volume element transforms linearly with factor equal to \textit{Jacobian}: $dq=Jac[q(x)]dx=g^{\frac{1}{2}}dx$. Now demanding consistency:
\begin{equation}\label{normalization1}
\int_{V}|\psi(x)|^{2}dx=\int_{V}|\psi(q)|^{2}dq=\int_{V}|\psi(q)|^{2}g^{\frac{1}{2}}dx=1
\end{equation}
we find that
\begin{equation}\label{normalization1}
\psi(q)=g^{-\frac{1}{4}}\psi(x)
\end{equation}
Derived expression for the kinetic energy \ref{kinetic6} involves however angular momenta, while in Podolsky approach we utilize generalized momenta representation. Thus, the mapping from the classical expression
into the quantum-mechanical form cannot
be done directly.
In order to apply the Podolsky procedure to our form of the Hamilton function, we must first find the form of the podolsky Hamiltonian in the  angular momentum representation, then prove that this formula is consistent with the original one, i.e. gives the proper quantum-mechanical energy operator. Lets start from the assumption that generalized momentum is linearly related to the rovibronic angular momentum. This is very strong statement, and will need a detailed proof.
\begin{equation}\label{angular-linear}
p_{i}=\sum_{k=1}^{3}a_{ik}M'_{k}
\end{equation}
then classical expression for kinetic energy \ref{kinetic6} takes the form
\begin{equation}\label{kinetic7}
2T=\sum_{i,j,k,l}a_{ik}a_{jl}g^{ij}M'_{k}M'_{l}\equiv\sum_{k,l}G_{kl}M'_{k}M'_{l}
\end{equation}
Ask then what conditions must be satisfied in order that
\begin{equation}\label{podolskyangular}
\hat{H}=\frac{1}{2}G^{\frac{1}{4}}\sum_{ij}\hat{M}'_{i}G^{ij}G^{-\frac{1}{2}}\hat{M}'_{j}G^{\frac{1}{4}}+\hat{V}
\end{equation}
while knowing $G$? First lets invert eq. \ref{angular-linear}
\begin{equation}\label{linear-angular}
M'_{i}=\sum_{k=1}^{3}(a^{-1})_{ik}p_{k}\equiv\sum_{k=1}^{3}a^{ik}p_{k}
\end{equation}
of course
\begin{equation}\label{identity}
\sum_{k=1}^{3}a_{ik}a^{kj}=\delta_{i}^{k}
\end{equation}
Inserting \ref{linear-angular} into \ref{podolskyangular} yields in
\begin{align}\nonumber
 & \hat{H}=\frac{1}{2}a^{\frac{1}{2}}g^{\frac{1}{4}}\sum_{i,j,l,k,r,t}a^{ik}\hat{p}_{k}a_{ir}a_{tj}g^{ij}a^{-1}g^{-\frac{1}{2}}a^{jl}p_{l}a^{\frac{1}{2}}g^{\frac{1}{4}}+\hat{V}=&&\\\nonumber
 & =\frac{1}{2}a^{\frac{1}{2}}g^{\frac{1}{4}}\sum_{i,j,k,r,t}a^{ik}\hat{p}_{k}a_{ir}g^{ij}a^{-1}g^{-\frac{1}{2}}p_{l}a^{\frac{1}{2}}g^{\frac{1}{4}}+\hat{V}\equiv && \\
 & \equiv \frac{1}{2}s^{-\frac{1}{2}}g^{\frac{1}{4}}\sum_{i,j}\hat{p}_{i}g^{-\frac{1}{2}}g^{ij}\hat{p}_{j}g^{\frac{1}{4}}s^{\frac{1}{2}}+\hat{V} &&
\end{align}
only if
\begin{equation}\label{podolsky-condition}
a^{\frac{1}{2}}\sum_{k,r}a^{ik}\hat{p}_{k}a_{ir}a^{-1}=a^{-\frac{1}{2}}\hat{p}_{i}
\end{equation}
This result mean that if we're able to find linear relation between angular (or any other quantity) and linear momentum, and this relation would fulfill above condition, then we can replace classical Hamilton function represented by \ref{kinetic7} with \ref{podolskyangular}. Knowing the coefficients $a_{ij}$ and the metric tensor $g_{ij}$ is sufficient to find the quantum-mechanical expression for the Hamiltonian of a non-rigid body. This enables us to set up the next step, namely finding geometrical relation between rovibronic angular momentum and linear generalized momentum conjugated with euler angles.
Consequently lets expand angular momentum in terms linear momenta via the chain rule:

\begin{equation}
\resizebox{0.99\hsize}{!}{$\vec{J}=\frac{\partial T}{\partial\vec{\omega}}=\frac{\partial \dot{\theta}}{\partial\vec{\omega}}\frac{\partial T }{\partial\dot{\theta}}+
\frac{\partial \dot{\phi}}{\partial\vec{\omega}}\frac{\partial T }{\partial\dot{\phi}}+
\frac{\partial \dot{\chi}}{\partial\vec{\omega}}\frac{\partial T }{\partial\dot{\chi}}\equiv \frac{\partial \dot{\theta}}{\partial\vec{\omega}}p_{\theta}+
\frac{\partial \dot{\phi}}{\partial\vec{\omega}}p_{\phi}+
\frac{\partial \dot{\chi}}{\partial\vec{\omega}}p_{\chi}$}
\end{equation}
assuming that potential energy is independent of generalized velocities. Thereby, all we need is relation between the components of $\dot{\vec{\phi}},\dot{\vec{\theta}},\dot{\vec{\chi}}$ and components of angular velocity in molecule-fixed rotating frame $\omega_{x},\omega_{y},\omega_{z}$.
Imagine space and molecule-fixed frames of reference with angular velocities $\omega$ drawn along the corresponding axes. Vector as a tensor object must be invariant to coordinate change, while its components transform according to general tensor transformation. This implies vector equality
\begin{equation}
\vec{\omega}'=\vec{\omega}
\end{equation}
of angular velocities in both coordinate frames.
General expression for angular velocity in space-fixed frame reads
\begin{equation}
\vec{\omega}'=\omega'_{X}\hat{i}'+\omega'_{Y}\hat{j}'+\omega'_{Z}\hat{k}'
\end{equation}
and in rotating molecule-fixed frame:
\begin{equation}
\vec{\omega}=\omega_{x}\hat{i}+\omega_{y}\hat{j}+\omega_{z}\hat{k}
\end{equation}
After that we need to express $\dot{\vec{\phi}},\dot{\vec{\theta}},\dot{\vec{\chi}}$ components in a space-fixed basis. Lets make use of geometrical relations:
\begin{align}\nonumber
 & \dot{\vec{\phi}}=\dot{\phi}\hat{k}' &&\\
 & \dot{\vec{\theta}}=\frac{\hat{k}'\times\hat{k}}{||\vec{k}'\times\vec{k}||}\dot{\theta}=\frac{1}{\sin\theta}\hat{k}'\times\left(\cos\theta\cdot\hat{k}'+\sin\theta \cos\phi\hat{i}'+\sin\theta \sin\phi\hat{j}'\right)\dot{\theta}=\cos\phi\cdot\dot{\theta}\cdot\hat{j}'-\sin\phi\cdot\dot{\theta}\cdot\hat{i}'&& \\
& \dot{\vec{\chi}}=\cos\theta\cdot\dot{\chi}\cdot\hat{k}'+\sin\theta \cos\phi \cdot \dot{\chi}\cdot\hat{i}'+\sin\theta \sin\phi \cdot \dot{\chi}\cdot\hat{j}'&&
\label{omegavec}
\end{align}
On the other hand the components of angular velocities in both frames are related by Euler angles transformation matrix:
\begin{widetext}
\begin{equation}\label{omegaa}
\left(\begin{array}{ccc}
\omega_{x} \\
\omega_{y} \\
\omega_{z} \end{array} \right) = \left(\begin{array}{ccc} \cos\theta\cos\phi\cos\chi-\sin\phi\sin\chi & \cos\theta\sin\phi\cos\chi & -\sin\theta\cos\chi \\
-\cos\theta\cos\phi\sin\chi-\sin\phi\cos\chi & -\cos\theta\sin\phi\sin\chi+\cos\phi\cos\chi & \sin\theta\sin\chi \\
\sin\theta\cos\phi & \sin\theta\sin\phi & \cos\theta \\ \end{array}\right)\left(\begin{array}{ccc}
\omega'_{X}\\
\omega'_{Y}\\
\omega'_{Z} \end{array}\right)
\end{equation}
\end{widetext}
Provided components of $\vec{\omega}'$ expressed by Euler angles time derivatives (cf.\ref{omegavec}) we can write final transformation as

\begin{widetext}
\begin{equation}\label{omegaa}
\left(\begin{array}{ccc}
\omega_{x} \\
\omega_{y} \\
\omega_{z} \end{array} \right) = \left(\begin{array}{ccc} \cos\theta\cos\phi\cos\chi-\sin\phi\sin\chi & \cos\theta\sin\phi\cos\chi & -\sin\theta\cos\chi \\
-\cos\theta\cos\phi\sin\chi-\sin\phi\cos\chi & -\cos\theta\sin\phi\sin\chi+\cos\phi\cos\chi & \sin\theta\sin\chi \\
\sin\theta\cos\phi & \sin\theta\sin\phi & \cos\theta \\ \end{array}\right)\left(\begin{array}{ccc}
\sin\theta\cos\phi\dot{\chi}-\sin\phi\dot{\theta}\\
\sin\theta\cos\phi\dot{\chi}-\sin\phi\dot{\theta}\\
\dot{\phi}+\cos\theta\dot{\chi}\end{array}\right)
\end{equation}
\end{widetext}
obtaining set of linear equations with respect to Euler angles time derivatives. We can easily identify resulting transformation matrix and write inverse relations (it's faster to inverse this linear system by simple substitutions):
\begin{equation}\label{omegaa}
\left(\begin{array}{ccc}
\dot{\theta} \\
\dot{\phi} \\
\dot{\chi} \end{array} \right) = \left(\begin{array}{ccc} \sin\chi & \cos\chi & 0 \\
-\csc\theta\cos\chi & \sin\chi\csc\theta & 0 \\
\cot\theta\cos\chi & -\cot\theta\sin\chi & 1 \\ \end{array}\right)\left(\begin{array}{ccc}
\omega_{x} \\
\omega_{y} \\
\omega_{z} \end{array}\right)
\end{equation}
Now we are ready to write explicit relation from eq. \ref{angular-linear}:
\begin{equation}\label{Angular-generalized momentum}
\left(\begin{array}{ccc}
J_{x} \\
J_{y} \\
J_{z} \end{array} \right) = \left(\begin{array}{ccc} \sin\chi & -\csc\theta\cos\chi & \cot\theta\cos\chi \\
\cos\chi & \sin\chi\csc\theta & -\cot\theta\sin\chi \\
0 & 0 & 1 \\ \end{array}\right)\left(\begin{array}{ccc}
p_{\theta} \\
p_{\phi} \\
p_{\chi} \end{array}\right)
\end{equation}
After introducing vibrational angular momentum $\vec{j}$ to complete general momenta space, we are able to read coefficients in eq.\ref{linear-angular}:
\begin{widetext}
\begin{equation}\label{a-matrix}
\left(\begin{array}{ccc}
M_{x} \\
M_{y} \\
M_{z} \\
P_{1}\\
\vdots\\
P_{3N-6}\end{array} \right) = \left(\begin{array}{cccccc} \sin\chi & -\csc\theta\cos\chi & \cot\theta\cos\chi & -\tau^{x}_{1} &\ldots& -\tau^{x}_{3N-6} \\
\cos\chi & \sin\chi\csc\theta & -\cot\theta\sin\chi & -\tau^{y}_{1} &\ldots& -\tau^{y}_{3N-6} \\
0 & 0 & 1 & -\tau^{z}_{1} &\ldots& -\tau^{z}_{3N-6}\\
 & & & & & \\
 & 0_{3N-6\times 3} & & & 1_{3N-6} & \\
  & & & & & \\ \end{array}\right)\left(\begin{array}{ccc}
p_{\theta} \\
p_{\phi} \\
p_{\chi}\\
P_{1} \\
\vdots\\
P_{3N-6}\\ \end{array}\right)
\end{equation}
\end{widetext}
Note that we have 3 generalized momenta associated with Euler angles and $3N-6$ momenta incorporated in vibrational angular momentum and vibrational kinetic energy $\vec{j}=\sum_{k}\tau_{k}\cdot\vec{P}_{k}$. Relation \ref{linear-angular} requires also inverse $a$ matrix coefficients, which can be obtained with little mainpulation on \ref{a-matrix}:
\scriptsize
\begin{widetext}
\begin{equation}\label{a-matrix}
\left(\begin{array}{ccc}
p_{\theta} \\
p_{\phi} \\
p_{\chi}\\
P_{1} \\
\vdots\\
P_{3N-6}\\
\end{array} \right) = \left(\begin{array}{cccccccccc} \sin\chi & \cos\chi & 0 & -\sin\chi\cdot\tau^{x}_{1} &\cos\chi\cdot\tau^{y}_{1} & 0 & \ldots & -\sin\chi\cdot\tau^{x}_{3N-6} & \cos\chi\cdot\tau^{y}_{3N-6} & 0 \\
-\sin\chi\cos\chi & \sin\chi\sin\theta & \cos\theta & -\sin\theta\cos\chi\tau^{x}_{1} & \sin\theta\sin\chi\tau^{y}_{1} & \cos\theta\tau^{z}_{1} & \ldots & -\sin\theta\cos\chi\tau^{x}_{3N-6} & \sin\theta\sin\chi\tau^{y}_{3N-6} & \cos\theta\tau^{z}_{3N-6} \\
0 & 0 & 1 & 0 & 0 & \tau^{z}_{1} &\ldots& 0 & 0 & \tau^{z}_{3N-6}\\
 & & & & & \\
 & 0_{3N-6\times 3} & & & & 1_{3N-6} & &\\
  & & & & & \\ \end{array}\right)\left(\begin{array}{ccc}
M_{x} \\
M_{y} \\
M_{z} \\
P_{1}\\
\vdots\\
P_{3N-6} \end{array}\right)
\end{equation}
\end{widetext}
\normalsize
Utilizing some trigonometric identities we find that $a^{-1}\equiv\det a^{-1}=\frac{1}{\sin\theta}$; of course from Cauchy matrix theorem $a\equiv\det a=\sin\theta$. In the next step we should investigate if the condition
\begin{equation}\label{podolsky-condition}
a^{\frac{1}{2}}\sum_{i,j}a^{ij}\hat{p}_{j}a_{ij}a^{-1}=a^{-\frac{1}{2}}\hat{p}_{i}
\end{equation}
is satisfied. It may be rewritten to a simpler form:
\begin{equation}\label{podolsky-condition-simpler}
\sum_{i,j,k}a^{ji}\hat{p}_{i}a_{kj}a^{-1}=0
\end{equation}
Consequently,
\begin{equation}\label{podolsky-condition-simpler}
\sum_{i,j,k}a^{ji}a_{kj}\hat{p}_{i}a^{-1}+a^{-1}\sum_{i,j,k}a^{ji}\hat{p}_{i}a_{kj}=0
\end{equation}
\begin{equation}\label{podolsky-condition-simpler}
\sum_{k}\hat{p}_{k}a^{-1}+a^{-1}\sum_{i,j}a^{ji}\hat{p}_{i}\sum_{k}a_{kj}=0
\end{equation}
\begin{equation}\label{PC-decomposed}
-\frac{\cos\theta}{\sin^{2}\theta}+\frac{1}{\sin\theta}\sum_{i,j}a^{ij}\left(\hat{p}_{i}\sum_{k}a_{kj}\right)=0
\end{equation}
Now lets evaluate following a-matrix sums:
\begin{align}\nonumber
 & \sum_{k}a_{k1}=\sin\chi-\sin\theta\cos\chi &&\\\nonumber
 & \sum_{k}a_{k2}=\cos\chi-\sin\theta\sin\chi &&\\\nonumber
 & \sum_{k}a_{k3}=\cos\theta+1 &&\\\nonumber
 & \sum_{k}a_{kj}=(\sin\chi-\sin\theta\cos\chi)\tau^{x}_{j}+(\cos\chi-\sin\theta\sin\chi)\tau^{y}_{j}+&&\\
 & +(\cos\theta+1)\tau^{z}_{j}+1 \qquad j>3&&
\end{align}
Acting with momenta operators on respective sums:\\
$p_{1}\equiv p_{\theta}$
\begin{align}\nonumber
 & p_{\theta}\sum_{k}a_{k1}=-\cos\theta\cos\chi && \\\nonumber
 & p_{\theta}\sum_{k}a_{k2}=\cos\theta\sin\chi &&\\\nonumber
 & p_{\theta}\sum_{k}a_{k3}=-\sin\theta &&\\
 & p_{\theta}\sum_{k}a_{kj}=-\cos\theta\cos\chi\tau^{x}_{j}+\cos\theta\sin\chi\tau^{y}_{j}-\sin\theta\tau^{z}_{j} \quad j>3 &&
\end{align}
$p_{2}\equiv p_{\phi}$
\begin{align}\nonumber
 & p_{\phi}\sum_{k}a_{k1}=0 &&\\\nonumber
 & p_{\phi}\sum_{k}a_{k2}=0 &&\\\nonumber
 & p_{\phi}\sum_{k}a_{k3}=0 &&\\
 & p_{\phi}\sum_{k}a_{kj}=0 \quad j>3 &&
\end{align}
$p_{3}\equiv p_{\chi}$
\begin{align}\nonumber
 & p_{\chi}\sum_{k}a_{k1}=\cos\chi+\sin\theta\sin\chi &&\\\nonumber
 & p_{\chi}\sum_{k}a_{k2}=-\sin\chi+\sin\theta\cos\chi&&\\\nonumber
 & p_{\chi}\sum_{k}a_{k3}=0 &&\\\nonumber
 & p_{\chi}\sum_{k}a_{kj}=(\sin\theta\sin\chi+\cos\chi)\tau^{x}_{j}+ &&\\
 & +(\sin\theta\cos\chi-\sin\chi)\tau^{y}_{j} \quad j>3 &&
\end{align}
$p_{i}\equiv p_{i},\qquad i>3$
\begin{align}\nonumber
 & p_{3}\sum_{k}a_{k1}=0 &&\\\nonumber
 & \vdots &&\\
 & p_{3N-6}\sum_{k}a_{k3N-6}=0 &&
\end{align}
This allows us to write explicit expression for second term in \ref{PC-decomposed}:
\begin{align}
 & \sum_{i,j}a^{ij}\left(\hat{p}_{i}\sum_{k}a_{kj}\right)=\sum_{i}a^{i1}\left(\hat{p}_{i}\sum_{k}a_{k1}\right) &&\\\nonumber
 & +\sum_{i}a^{i1}\left(\hat{p}_{i}\sum_{k}a_{k1}\right)+&&\\
 & +\sum_{i}a^{i3}\left(\hat{p}_{i}\sum_{k}a_{k3}\right)+\sum_{i,j>4}a^{ij}\left(\hat{p}_{i}\sum_{k}a_{kj}\right)=... &&
\end{align}

two last terms give no contribution to the final sum and eventually
\begin{equation}
\begin{split}
...=-\sin\chi\cos\chi\cos\theta+\cos^{2}\chi\cot\theta+\cos\theta\cos\chi\sin\theta+\\
+\sin\chi\cos\chi\cos\theta-\cos\theta\sin\chi\cos\chi+\sin^{2}\chi\cot\theta=\cot\theta
\end{split}
\end{equation}
Coming back to \ref{PC-decomposed} we find Podolsky condition fulfilled:
\begin{equation}
-\frac{\cos\theta}{\sin^{2}\theta}+\frac{\cot\theta}{\sin\theta}=0
\end{equation}
This means that we can pass from classical quadratic form of the Hamilton function to the quantum-mechanical operator form denoted below \cite{Dennison1940}
\begin{equation}\label{Hamiltonian-quantum}
\hat{H}=\frac{1}{2}\mu^{\frac{1}{4}}\sum_{i,j}\hat{M}_{i}\mu_{ij}\mu^{-\frac{1}{2}}\hat{M}_{j}\mu^{\frac{1}{4}}+
\frac{1}{2}\mu^{\frac{1}{4}}\sum_{k}\hat{P}_{k}\mu^{-\frac{1}{2}}\hat{P}_{k}\mu^{\frac{1}{4}}+\hat{V}
\end{equation}
where following relation must be satisfied:
\begin{equation}\label{mu}
\mu=a^{T}\cdot g\cdot a
\end{equation}
\subsection{Watson Simplification}
After almost 30 years after derivation of the Hamiltonian from eq.\ref{Hamiltonian-quantum} by Darling and Dennison \cite{Dennison1940} J.K.G. Watson came up with a tricky way to simplify this operator using some sum rules and commutation relations \cite{Watson1968}. The final form of the Hamiltonian is very similar to the classical one
\begin{equation}\label{Watson-hamiltonian}
\hat{H}=\frac{1}{2}\sum_{i,j}\hat{M}_{i}\mu_{ij}\hat{M}_{j}+
\frac{1}{2}\sum_{i}\hat{P}_{i}^{2}+\hat{U}+\hat{V}
\end{equation}
where $\hat{U}$ occurs to be mass dependent contribution to the potential
\begin{equation}\label{Watson-hamiltonian}
\hat{U}=-\frac{\hbar^{2}}{8}\sum_{i}\mu_{ii}\equiv-\frac{\hbar^{2}}{8}Tr\mu
\end{equation}
To remind, in present case we are treating all particles of a system as point masses, not distinguishing between nuclei and electrons. For the sake of convention lets denote instantaneous position of $i-th$ particle in body fixed rotating frame $\vec{r}_{i}=\left(r_{ix},r_{iy},r_{iz}\right)$. Reference configuration, not necessarily equilibrium, will be abbreviated as $\vec{r}^{0}_{i}$. We require now six constraints that allow to specify the position and orientation of moving axes at every instant (relative to set of particles). In fact it was already introduced as the Eckart frame \ref{eq:Eckart}.To summarize and gather all conditions within present notation:
\begin{enumerate}
  \item \begin{equation} \label{CM=0}
\sum_{\i=1}^{N}m_{\i}\vec{r}_{\i}=0
\end{equation}
setting rotating frame of reference of molecular center of mass. As long as there's no external fields translation in space is the symmetry of the molecular Hamiltonian \cite{Bunker}, due to space uniformity. Hence molecular center of mass motion can be separated and corresponding constant energy taken as a reference.
  \item 
\begin{equation} 
\sum_{i}m_{i}\vec{r}^{0}_{i}\times\dot{\vec{\rho}}_{i}=0
\end{equation}
There's no angular momentum of generated by the system with respect to the rotating frame(in fact the conditions are more general $\sum_{i}m_{i}\vec{r}^{0}_{i}\times\vec{\rho}_{i}=0$)
\end{enumerate}
In order to separate the center of mass motion and rotation of the axis system we introduce a new set of $3N$ coordinates, say normal coordinates\footnote{sometimes the whole set of $3N$ coordinates is called rovibronic coordinates, while $3N-6$ internal coordinates pertain the 'normal' attribute}, related to cartesian by orthogonal transformation $\hat{R}=\in SO(3N)$. Of course the conditions for the body-fixed frame may be written in both types of coordinates, but normal coordinates provide expressions with easily discernible rotational, vibrational and coupling parts, making the problem more transparent.
In practise we have to decompose $\hat{R}$ into two: $\hat{R}=\hat{L}\cdot\hat{W}$.

\begin{align}\nonumber
 & Q_{1}=M^{-\frac{1}{2}}\sum_{i}m_{i}\rho_{i1} &&\\\nonumber
 & Q_{2}=M^{-\frac{1}{2}}\sum_{i}m_{i}\rho_{i2} &&\\
 & Q_{3}=M^{-\frac{1}{2}}\sum_{i}m_{i}\rho_{i3} &&
\end{align}

\begin{widetext}
\begin{equation}
 \left.	\begin{aligned}
Q_{4}=\sum_{\beta,\gamma,\delta}\left(I^{0}\right)^{\frac{1}{2}}_{1\beta}\sum_{i}m^{\frac{1}{2}}_{i}\epsilon_{\beta\gamma\delta}r_{i\gamma}^{0}m^{\frac{1}{2}}_{i}\rho_{i\delta}\\
Q_{5}=\sum_{\beta,\gamma,\delta}\left(I^{0}\right)^{\frac{1}{2}}_{2\beta}\sum_{i}m^{\frac{1}{2}}_{i}\epsilon_{\beta\gamma\delta}r_{i\gamma}^{0}m^{\frac{1}{2}}_{i}\rho_{i\delta}\\
Q_{6}=\sum_{\beta,\gamma,\delta}\left(I^{0}\right)^{\frac{1}{2}}_{3\beta}\sum_{i}m^{\frac{1}{2}}_{i}\epsilon_{\beta\gamma\delta}r_{i\gamma}^{0}m^{\frac{1}{2}}_{i}\rho_{i\delta}
       \end{aligned}
 \right\}
 \qquad \left(I^{0}\right)^{\frac{1}{2}_{\alpha\beta}}\sum_{i}m_{i}\left(\vec{r}_{i}^{0}\times\vec{\rho}_{i}\right)^{i}
\end{equation}
\end{widetext}

\begin{equation}
 \left.\begin{aligned}
Q_{k}=\sum_{i,\alpha}l_{\alpha i k}m^{\frac{1}{2}}_{i}\rho_{i\alpha}
       \end{aligned}
 \right\}
 \qquad k=7,...,3N
\end{equation}
In order to apply some rotating axes embedding we make use of Eckart conditions and write $Q(1)=...=Q(6)=0$.
$L$ matrix in one of possible ways can be understood as $3N\times3N-6$ rectangular transformation to normal coordinates unitary  matrix:
\small
\begin{widetext}
\begin{equation}\label{L-matrix}
\left(\begin{array}{ccccccc}
\frac{m^{\frac{1}{2}}_{1}}{M^{\frac{1}{2}}} & 0 & 0 & \ldots &  \frac{m^{\frac{1}{2}_{N}}}{M^{\frac{1}{2}}} & 0 & 0 \\
0& \frac{m^{\frac{1}{2}}_{1}}{M^{\frac{1}{2}}} & 0 &  \ldots & 0 & \frac{m^{\frac{1}{2}}_{N}}{M^{\frac{1}{2}}} & 0 \\
0& 0& \frac{m^{\frac{1}{2}}_{1}}{M^{\frac{1}{2}}}  & \ldots & 0 &0 &\frac{m^{\frac{1}{2}}_{N}}{M^{\frac{1}{2}}} \\
\sum_{\beta,\gamma}\left(I^{0}\right)^{\frac{1}{2}}_{1\beta}m^{\frac{1}{2}}_{1}\epsilon_{\beta\gamma1}r_{1\gamma}^{0} & & &\ldots & & & \sum_{\beta,\gamma}\left(I^{0}\right)^{\frac{1}{2}}_{1\beta}m^{\frac{1}{2}}_{N}\epsilon_{\beta\gamma1}r_{N\gamma}^{0}\\
\sum_{\beta,\gamma}\left(I^{0}\right)^{\frac{1}{2}}_{2\beta}m^{\frac{1}{2}}_{1}\epsilon_{\beta\gamma1}r_{1\gamma}^{0} & & &\ldots & & & \sum_{\beta,\gamma}\left(I^{0}\right)^{\frac{1}{2}}_{2\beta}m^{\frac{1}{2}}_{N}\epsilon_{\beta\gamma1}r_{N\gamma}^{0}\\
\sum_{\beta,\gamma}\left(I^{0}\right)^{\frac{1}{2}}_{3\beta}m^{\frac{1}{2}}_{1}\epsilon_{\beta\gamma1}r_{1\gamma}^{0} & & &\ldots & & & \sum_{\beta,\gamma}\left(I^{0}\right)^{\frac{1}{2}}_{3\beta}m^{\frac{1}{2}}_{N}\epsilon_{\beta\gamma1}r_{N\gamma}^{0}\\
l_{111} & l_{211} &l_{311}  & \ldots & l_{1N1} & l_{2N1} & l_{3N1}\\
\vdots\\ & &     & \vdots &\\
\vdots\\ & &    & \vdots &\\
l_{113N-6} & l_{213N-6} &l_{313N-6} & \ldots & l_{1N3N-6} & l_{2N3N-6} & l_{3N3N-6}\\
\end{array} \right)
\end{equation}
\end{widetext}
\normalsize
Imposing unitarity of the above matrix $L^{T}L=LL^{T}=1$  provides useful relations:
\begin{itemize}
  \item 
  \begin{align}\nonumber
  &\sum_{s=1}^{3N}\left(L^{T}\right)_{1s}L_{s1}=\frac{m_{1}}{M}+&&\\
  &+\sum_{\beta,\gamma,\alpha,\eta,\nu}\left(I^{0}\right)^{\frac{1}{2}}_{\alpha\beta}m_{1}\epsilon_{\beta\gamma\alpha}\epsilon_{\eta\nu\alpha}\left(I^{0}\right)^{\frac{1}{2}}_{\alpha\eta}r^{0}_{1\gamma}r^{0}_{1\nu}+&&\\
  & +\sum_{k=1}^{3N-6}l^{2}_{11k}=1 &&
  \end{align}

  \item \begin{equation}
  \begin{split}
  \sum_{i=1}^{N}\sum_{i=1}^{N}l_{\alpha ik}l_{\alpha il}=\delta_{kl} \qquad ,\alpha=1,2,3 \quad k=1,...,3N-6
  \end{split}
  \end{equation}
  \item \begin{equation}
  \begin{split}
  \sum_{\beta,\gamma,\delta=1}^{3}\sum_{i=1}^{N}\left(I^{0}\right)^{-\frac{1}{2}}_{1\beta}m^{\frac{1}{2}}_{i}\epsilon_{\beta\gamma\alpha}r_{i\gamma}^{0}
  l_{\alpha ik}=0
  \end{split}
  \end{equation}
  Left hand side of the last equation can be rearranged to:
  \begin{equation}
  \begin{split}
  \sum_{\beta=1}^{3}\left(I^{0}\right)^{-\frac{1}{2}}_{1\beta}\sum_{\gamma,\delta=1}^{3}\sum_{i=1}^{N}m^{\frac{1}{2}}_{i}\epsilon_{\beta\gamma\alpha}r_{i\gamma}^{0}
  l_{\alpha ik}=0
  \end{split}
  \end{equation}
  We can discover that moment of inertia tensor stays symmetric under index transposition, therefore provided that the entire above expression is zero, the other factor in the equation must be antisymmetric under odd index permutation. This implies that $\sum_{\gamma,\delta=1}^{3}\sum_{i=1}^{N}\sum_{i}m^{\frac{1}{2}}_{i}\epsilon_{\beta\gamma\delta}r_{i\gamma}^{0}m^{\frac{1}{2}}$ must be antisymmetric, and because Levi-Civita tensor is antisymmetric the following relation must be fulfilled:

  \begin{equation}
  \begin{split}
  \sum_{i}m^{\frac{1}{2}}_{i}r_{i\alpha}^{0}l_{\beta ik}=\sum_{i}m^{\frac{1}{2}}_{i}r_{i\beta}^{0}l_{\alpha ik}
  \end{split}
  \end{equation}
  \item Finally summing over rows we get
\begin{align}\nonumber
  & M^{-1}m^{\frac{1}{2}}_{i}m^{\frac{1}{2}}_{j}\delta_{\xi\phi}+\sum_{\alpha,\beta,\gamma,\eta,\nu}m^{\frac{1}{2}}_{i}\epsilon_{\beta\gamma\xi}
  \left(I^{0}\right)^{-\frac{1}{2}}_{\alpha\beta}r^{0}_{i\gamma}m^{\frac{1}{2}}_{j}\epsilon_{\eta\nu\phi}
  \left(I^{0}\right)^{-\frac{1}{2}}_{\alpha\eta}r^{0}_{i\nu}+&&\\
  & +\sum_{k=1}^{3N-6}l_{\xi ik}l_{\phi jk}=\delta_{ij}\delta_{\xi\phi} &&
\end{align}
  \item The last relation arising from the orthogonality of $L$ is useful a transformation of cartesian coordinates into normal modes:
  \begin{equation}
  \begin{split}
  r_{i\alpha}=r^{0}_{i\alpha}+m^{-\frac{1}{2}}_{i}\sum_{k=1}^{3N-6}l_{\alpha ik}Q_{k}+M^{-1}\sum_{i}m_{j}\rho_{j\alpha}+
  \\+\sum_{\beta\gamma\nu}\epsilon_{\alpha\beta\gamma}r^{0}_{i\gamma}\left(I^{0}\right)^{-\frac{1}{2}}_{\beta\nu}R_{\alpha}
  \end{split}
  \end{equation}
  the last to terms vanish in Eckart frame, yielding 'standard' relations between cartesian and normal coordinates.
\end{itemize}

There are two kinds of vibration-rotation interaction coefficients

\begin{enumerate}
  \item Coriolis coupling coefficients in vibrational angular momentum:
  \begin{equation}
  j_{\alpha}=\sum_{k,l}C^{\alpha}_{kl}\hat{Q}_{k}\hat{P}_{l}
  \end{equation}
  where $\hat{P}$ is momentum conjugate to $\hat{Q}$ i.e. $\frac{\partial L}{\partial\dot{Q}_{l}}=P_{l}$.
  The coefficients are defined in following way
   \begin{equation}
  C^{\alpha}_{kl}:=\sum_{\beta,\gamma,i}\epsilon_{\alpha\beta\gamma}l_{\beta ik}l_{\gamma il}=-C^{\alpha}_{lk}
  \end{equation}
  Note that $C^{\alpha}_{kl}=\sum_{i}\left(\vec{l}_{ik}\times\vec{l}_{il}\right)^{\alpha}$, hence $C^{\alpha}_{kk}=0$.
  \item Interaction coefficients (dependence of moment of inertia on normal coordinates):
   \begin{equation}\label{interaction_coeff}
  a^{\alpha\beta}_{k}:=\left(\frac{\partial I_{\alpha\beta}}{\partial Q_{k}}\right)_{0}
  \end{equation}
  Because moment of inertia tensor can be expanded in normal modes basis in following way
  \begin{widetext}
  \small
\begin{align}\nonumber
  & I_{\alpha\beta}=\sum_{\gamma,\delta,\eta}\epsilon_{\alpha\gamma\delta}\epsilon_{\beta\eta\delta}\sum_{i}m_{i}r_{i \gamma }r_{i \delta}=&&\\\nonumber
  & =\sum_{\gamma,\delta,\eta}\epsilon_{\alpha\gamma\delta}\epsilon_{\beta\eta\delta}\sum_{i}m_{i}\left(r^{0}_{i \gamma }+m^{-\frac{1}{2}}_{i}\sum_{n=1}^{3N-6}l_{\gamma in}Q_{n}\right)\left(r^{0}_{i \eta }+m^{-\frac{1}{2}}_{i}\sum_{s=1}^{3N-6}l_{\eta is}Q_{s}\right)= &&\\
  & =\sum_{\gamma,\delta,\eta}\epsilon_{\alpha\gamma\delta}\epsilon_{\beta\eta\delta}\sum_{i}m_{i}\left(r^{0}_{i \gamma }r^{0}_{i \eta }+r^{0}_{i \gamma }\sum_{s=1}^{3N-6}m^{-\frac{1}{2}}_{i}l_{\eta is}Q_{s}+r^{0}_{i \eta }\sum_{n=1}^{3N-6}m^{-\frac{1}{2}}_{i}l_{\gamma in}Q_{n}+m^{-1}_{i}\sum_{s,n=1}^{3N-6}l_{\eta is}l_{\gamma in}Q_{n}Q_{s}\right) &&
\end{align}
\end{widetext}
\normalsize
  thus finally:

\begin{align}\nonumber
  & a^{\alpha\beta}_{k}=\sum_{\gamma,\delta,\eta, i}\left(r_{i \gamma }^{0}m^{-\frac{1}{2}}_{i}l_{\eta ik}+r_{i \eta }^{0}m^{-\frac{1}{2}}_{i}l_{\gamma ik}\right.+ &&\\\nonumber
  & \left.+m^{-1}_{i}\sum_{s=1}^{3N-6}\left(l_{\eta is}l_{\gamma ik}Q_{s}+l_{\eta ik}l_{\gamma is}Q_{s}\right)\epsilon_{\alpha\gamma\delta}\epsilon_{\beta\eta\delta}\right)=&&\\
  &=2\sum_{\gamma,\delta,\eta, i}r_{i \gamma }^{0}m^{-\frac{1}{2}}_{i}l_{\eta ik}\epsilon_{\alpha\gamma\delta}\epsilon_{\beta\eta\delta}&&
\end{align}
  As one could expect our interaction coefficient is symmetric with respect to indices permutation.
  \normalsize
\end{enumerate}
Now making use of above definitions and orthogonality of $L$ matrix we can find the following sum rules \cite{Watson1968}
\begin{equation} \label{sumrule1}
  \begin{split}
  \sum_{n=1}^{3N-6}C_{kn}^{\alpha}C^{\beta}_{ln}=\delta_{\alpha\beta}-\sum_{i=1}^{N}l_{\beta ik}l_{\alpha il}-\frac{1}{4}a^{\alpha\gamma}_{k}\left(I^{0}\right)^{-1}_{\gamma\delta}a^{\delta\beta}_{l}
  \end{split}
  \end{equation}
\begin{align}\nonumber
 & \sum_{k=1}^{3N-6}a^{\alpha\beta}_{k}a^{\gamma\delta}_{k}=4\sum_{i=1}^{N}\left(\delta_{\alpha\beta}\delta_{\gamma\delta}m_{i}r^{0}_{i\epsilon}r^{0}_{i\epsilon}-
  \delta_{\alpha\beta}m_{i}r^{0}_{i\gamma}r^{0}_{i\delta}\right.-&&\\\nonumber
  & -\delta_{\gamma\delta}m_{i}r^{0}_{i\alpha}r^{0}_{i\beta}+\delta_{\alpha\gamma}m_{i}r^{0}_{i\beta}r^{0}_{i\delta}-&&\\
  &\left.-\sum_{j,\theta,\eta,\epsilon,\xi}\epsilon_{\alpha\epsilon\eta}\epsilon_{\gamma\xi\theta}m_{i}r^{0}_{i\beta}r^{0}_{i\epsilon}m_{j}r^{0}_{j\delta}r^{0}_{j\xi}
  \left(I^{0}\right)^{-1}_{\eta\theta}\right) &&
\label{sumrule2}
\end{align}
\begin{align}\nonumber
 & \sum_{l=1}^{3N-6}\xi^{\alpha}_{kl}a^{\beta\gamma}_{l}=\frac{1}{2}\sum_{\epsilon}\epsilon_{\alpha\beta\gamma}a^{\epsilon\epsilon}_{k}-
  \sum_{\epsilon}\epsilon_{\alpha\beta\epsilon}a^{\epsilon\gamma}_{k}-&&\\
  &-\sum_{i,\epsilon,\xi,\delta}\epsilon_{\beta\delta\epsilon}m_{i}r^{0}_{i\delta}r^{0}_{i\gamma}
   \left(I^{0}\right)^{-1}_{\epsilon\gamma}a^{\xi\alpha}_{k} &&
\label{sumrule3}
\end{align}
  For future purposes it is essential to derive commutation relations for angular momentum operator. Thereby we claim that:
  \begin{theorem}
  Angular momentum associated with rotations of body-fixed coordinates frame with respect to space-fixed frame obeys following commutation relations
\begin{equation}
\left[\hat{J}_{\alpha},\hat{J}_{\beta}\right]=-i\hbar\epsilon_{\alpha\beta\gamma}\hat{J}_{\gamma}
\end{equation}
  \end{theorem}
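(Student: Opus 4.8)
The plan is to prove the relation by direct computation, starting from the explicit realization of the body-fixed components $\hat{J}_x,\hat{J}_y,\hat{J}_z$ as first-order differential operators in the Euler angles. Equation \ref{Angular-generalized momentum} already expresses each component as a linear combination $\hat{J}_\alpha=\sum_{k}A_{\alpha k}(\theta,\chi)\,\hat{p}_k$, where the index $k$ runs over $\theta,\phi,\chi$ and the conjugate momenta are quantized canonically as $\hat{p}_\theta=-i\hbar\,\partial_\theta$, $\hat{p}_\phi=-i\hbar\,\partial_\phi$, $\hat{p}_\chi=-i\hbar\,\partial_\chi$. The three facts I would rely on are the canonical relation $[\hat{q},\hat{p}]=i\hbar$, the mutual commutativity $[\hat{p}_k,\hat{p}_l]=0$, and the crucial structural observation that every coefficient $A_{\alpha k}$ depends on $\theta$ and $\chi$ only, never on $\phi$. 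This last fact is what ultimately forces the antisymmetric (``anomalous'') minus sign and distinguishes the body-fixed algebra from its space-fixed counterpart, where the opposite sign appears.

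The technical engine is the commutator identity for two such first-order operators,
\begin{equation}
[\hat{J}_\alpha,\hat{J}_\beta]=-i\hbar\sum_{k,l}\left(A_{\alpha k}\,\partial_k A_{\beta l}-A_{\beta k}\,\partial_k A_{\alpha l}\right)\hat{p}_l ,
\end{equation}
which follows immediately from $[\hat{p}_k,B]=-i\hbar\,\partial_k B$ once the symmetric products $A_{\alpha k}A_{\beta l}\,\hat{p}_k\hat{p}_l$ cancel between the two orderings. This shows the commutator is again a first-order operator, so it suffices to match the coefficients of $\hat{p}_\theta,\hat{p}_\phi,\hat{p}_\chi$ against the target $-i\hbar\epsilon_{\alpha\beta\gamma}\hat{J}_\gamma$. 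I would dispatch the two easy brackets first: since $\hat{J}_z=\hat{p}_\chi$, the bracket $[\hat{J}_z,\hat{J}_x]$ collapses to $-i\hbar\,\partial_\chi$ acting on the coefficients $\sin\chi$, $-\csc\theta\cos\chi$, $\cot\theta\cos\chi$ of $\hat{J}_x$, and a one-line differentiation reproduces exactly the coefficients of $\hat{J}_y$, giving $[\hat{J}_z,\hat{J}_x]=-i\hbar\,\hat{J}_y$; the analogous bracket yields $[\hat{J}_z,\hat{J}_y]=+i\hbar\,\hat{J}_x$. Both already display the negative overall sign, confirming the statement on two of the three cyclic instances.

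The main obstacle is the remaining bracket $[\hat{J}_x,\hat{J}_y]$, the only one in which both operators carry nontrivial $\theta$- and $\chi$-dependence and all three momenta enter. Here the full identity must be used, and the coefficient of each $\hat{p}_l$ is a sum of products of $\{\sin\chi,\cos\chi,\csc\theta,\cot\theta\}$ with their derivatives; the simplification leans on $\partial_\theta\csc\theta=-\csc\theta\cot\theta$, $\partial_\theta\cot\theta=-\csc^2\theta$, together with the Pythagorean identities $\cot^2\theta+1=\csc^2\theta$ and $\sin^2\chi+\cos^2\chi=1$. After collecting terms I expect the coefficients of $\hat{p}_\theta$ and $\hat{p}_\phi$ to cancel identically and the coefficient of $\hat{p}_\chi$ to reduce to $-i\hbar$, giving $[\hat{J}_x,\hat{J}_y]=-i\hbar\,\hat{J}_z$, consistent with $\epsilon_{xyz}=+1$. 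As an independent cross-check, rather than grinding this third bracket out directly one could instead recover it from the first two via the Jacobi identity, which verifies that the three signs are mutually consistent. Assembling the cyclic results then establishes $[\hat{J}_\alpha,\hat{J}_\beta]=-i\hbar\epsilon_{\alpha\beta\gamma}\hat{J}_\gamma$ for all components.
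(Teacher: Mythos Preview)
Your proposal is correct and follows essentially the same route as the paper: both start from the explicit expression of $\hat{J}_\alpha$ as first-order differential operators in the Euler angles (eq.~\ref{Angular-generalized momentum}), use the basic lemma $[\hat{p}_k,f]=-i\hbar\,\partial_k f$, and verify the relation component by component. The paper works out $[\hat{J}_x,\hat{J}_z]$ in detail and defers the rest to ``similarly for other permutations''; you organize the computation slightly more systematically via the general first-order commutator formula, but the substance is the same.

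One genuine slip: your claim that the third bracket $[\hat{J}_x,\hat{J}_y]$ can be \emph{recovered} from the other two via the Jacobi identity is not right. Applying Jacobi to $\hat{J}_x,\hat{J}_y,\hat{J}_z$ and inserting $[\hat{J}_z,\hat{J}_x]=-i\hbar\hat{J}_y$ and $[\hat{J}_y,\hat{J}_z]=-i\hbar\hat{J}_x$ yields only $\bigl[[\hat{J}_x,\hat{J}_y],\hat{J}_z\bigr]=0$, i.e.\ that $[\hat{J}_x,\hat{J}_y]$ commutes with $\hat{p}_\chi$. This is a constraint (its coefficients are $\chi$-independent) but far from determining the bracket; in particular it does not fix the sign or rule out an extra additive piece. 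So the ``hard'' bracket really does have to be ground out directly, as you outline in the preceding paragraph. Since your main argument already does this and invokes Jacobi only as an afterthought, the proof stands --- just drop the word ``recover'' and call it what it is: a consistency check.
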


  \begin{proof}
  We've shown that angular momentum is related to generalized momenta by linear expressions (cf.\ref{Angular-generalized momentum}). Then left hand side for choice of $x$ and $z$ components is equal to (on account of linearity of commutators):
  \begin{equation}
  \small
 \left[\hat{J}_{x},\hat{J}_{z}\right]=\left[\sin\chi\hat{P}_{\theta},\hat{P}_{\chi}\right]-\left[\frac{\cos\chi}{\sin\theta}\hat{P}_{\phi},\hat{P}_{\chi}\right]+
 \left[\cot\theta\cos\chi\hat{P}_{\chi},\hat{P}_{\chi}\right]=...
 \end{equation}
 \normalsize
 here comes perfect moment to introduce following useful lemma:
 \begin{align}\nonumber
  \left[f(x),\hat{p}_{x}\right]&=f(x)\hat{p}_{x}-p_{x}f(x)=-i\hbar f(x)\frac{d}{dx}+i\hbar\frac{df}{dx}+ &&\\
 & +i\hbar f(x)\frac{d}{dx}=i\hbar\frac{d}{dx}f(x) &&
 \end{align}
 In other words, commutator of any differentiable function of variable $x$ with $x$ component of quantum-mechanical momentum operator is proportional to derivative of function $f$ with respect to $x$. This relation simplifies many commutator operations.
   \begin{equation}
   \begin{split}
 ...=i\hbar\cos\chi \hat{P}_{\theta}+i\hbar\sin\chi\csc\theta\hat{P}_{\phi}+ \left[\cot\theta\cos\chi,\hat{P}_{\chi}\right]\hat{P}_{\chi}=\\
 =i\hbar \left(\cos\chi \hat{P}_{\theta}+\sin\chi\csc\theta\hat{P}_{\phi}-\cot\theta\sin\chi\hat{P}_{\chi}\right)=i\hbar\hat{J}_{y}
 \end{split}
 \end{equation}
 and similarly for other permutations.
  \end{proof}
  Above relations are historically named \textit{anomalous commutation relations} to be distinguished from commutation relations for space-fixed angular momentum operators, where there's no $"-"$ in front of right hand side. Now lets turn our attention into vibrational angular momentum operators:
    \begin{theorem}
  Angular momentum associated with vibrations of atoms in molecules with respect to body-fixed frame (Eckart frame)obeys following commutation relations
\begin{equation}
\left[\hat{j}_{\alpha},\hat{j}_{\beta}\right]=i\hbar\epsilon_{\alpha\beta\gamma}\hat{j}_{\gamma}
\end{equation}
  \end{theorem}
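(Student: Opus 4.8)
The plan is to treat $\hat j_\alpha$ as the bilinear operator $\hat j_\alpha=\sum_{k,l}C^\alpha_{kl}\hat Q_k\hat P_l$, where the Coriolis coefficients $C^\alpha_{kl}=\sum_{\beta,\gamma,i}\epsilon_{\alpha\beta\gamma}l_{\beta ik}l_{\gamma il}=\sum_i(\vec l_{ik}\times\vec l_{il})_\alpha$ are constants fixed by the equilibrium geometry and antisymmetric in $k,l$. Since $C^\alpha_{kk}=0$ there is no operator-ordering ambiguity, and the canonical relations $[\hat Q_k,\hat P_l]=i\hbar\delta_{kl}$, $[\hat Q_k,\hat Q_l]=[\hat P_k,\hat P_l]=0$ are all that is needed. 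First I would compute the elementary commutator $[\hat Q_k\hat P_l,\hat Q_m\hat P_n]=i\hbar(\delta_{nk}\hat Q_m\hat P_l-\delta_{lm}\hat Q_k\hat P_n)$, which follows by commuting one $\hat P$ past one $\hat Q$ and observing that the pure $\hat Q\hat Q\hat P\hat P$ pieces cancel.

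Substituting this into $[\hat j_\alpha,\hat j_\beta]=\sum C^\alpha_{kl}C^\beta_{mn}[\hat Q_k\hat P_l,\hat Q_m\hat P_n]$ and contracting the Kronecker deltas collapses the four-fold sum into a single matrix commutator in the $(3N-6)$-dimensional normal-coordinate index space, yielding $[\hat j_\alpha,\hat j_\beta]=-i\hbar\sum_{a,b}[C^\alpha,C^\beta]_{ab}\hat Q_a\hat P_b$ with $[C^\alpha,C^\beta]=C^\alpha C^\beta-C^\beta C^\alpha$. Comparing with the target $i\hbar\epsilon_{\alpha\beta\gamma}\hat j_\gamma=i\hbar\epsilon_{\alpha\beta\gamma}\sum_{a,b}C^\gamma_{ab}\hat Q_a\hat P_b$, the entire theorem reduces to the purely algebraic matrix identity $[C^\alpha,C^\beta]=-\epsilon_{\alpha\beta\gamma}C^\gamma$. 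I would stress the sign flip here: the $C$-matrices close on the \emph{anomalous} algebra, and it is precisely this minus sign that converts into the \emph{ordinary} (positive) commutators for the operators $\hat j$.

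To establish the matrix identity I would expand $(C^\alpha C^\beta)_{kl}=\sum_m\sum_{i,j}\epsilon_{\alpha\rho\sigma}\epsilon_{\beta\mu\nu}l_{\rho ik}l_{\sigma im}l_{\mu jm}l_{\nu jl}$ and eliminate the vibrational sum over $m$ using orthogonality of $\mathbf L$. Since $\mathbf L$ is orthogonal, $\sum_{m=1}^{3N-6}l_{\sigma im}l_{\mu jm}=\delta_{ij}\delta_{\sigma\mu}-\tfrac{1}{M}m_i^{1/2}m_j^{1/2}\delta_{\sigma\mu}-\Omega^{\mathrm{rot}}_{\sigma i,\mu j}$, the last two terms being the translational and rotational projectors built from the first six rows of eq.(\ref{L-matrix}). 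Inserting the leading $\delta_{ij}\delta_{\sigma\mu}$ term, applying the standard $\epsilon\epsilon=\delta\delta-\delta\delta$ contraction, and antisymmetrizing in $\alpha\leftrightarrow\beta$ reproduces exactly $\sum_i(l_{\beta ik}l_{\alpha il}-l_{\alpha ik}l_{\beta il})$, which the same contraction shows equals $-\epsilon_{\alpha\beta\gamma}C^\gamma_{kl}$. So the leading term alone already gives the desired identity.

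The hard part is then showing that the two correction projectors contribute nothing after antisymmetrization. The translational piece vanishes term-by-term because orthogonality of the translational rows against every vibrational column forces $\sum_i m_i^{1/2}l_{\beta ik}=0$. The rotational piece is the genuine obstacle: I would collapse it with the relation $\sum_i m_i^{1/2}r^{0}_{i\alpha}l_{\beta ik}=\sum_i m_i^{1/2}r^{0}_{i\beta}l_{\alpha ik}$ derived earlier from $\mathbf L\mathbf L^T=\mathbf 1$, whose content is that the mass-weighted moment $\sum_i m_i^{1/2}r^{0}_{i\gamma}l_{\alpha ik}$ is symmetric in its two Cartesian labels; contracting this symmetric object against the antisymmetric Levi-Civita tensors carried by $C^\alpha$ and $C^\beta$ makes the rotational correction drop out. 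Keeping the index bookkeeping of the double cross products straight while verifying these cancellations is where the real labour lies; once it is done the matrix identity, and hence the stated commutation relation, follows immediately.
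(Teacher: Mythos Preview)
Your proposal is correct and is essentially the same computation as the paper's, only packaged differently. The paper also reduces $[\hat j_\alpha,\hat j_\beta]$ to the bilinear $\sum C^\alpha C^\beta\,\hat Q\hat P$ form, but instead of inserting the completeness relation for $\mathbf L$ directly it invokes the pre-derived sum rule $\sum_n C^\alpha_{kn}C^\beta_{ln}=\delta_{\alpha\beta}\delta_{kl}-\sum_i l_{\beta ik}l_{\alpha il}-\tfrac14 a^{\alpha\gamma}_k(I^0)^{-1}_{\gamma\delta}a^{\delta\beta}_l$; that sum rule is precisely your completeness relation after the translational projector has been dropped and the rotational projector has been rewritten through the interaction coefficients $a^{\alpha\beta}_k$. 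The paper's residual $a$-term is literally your $\Omega^{\mathrm{rot}}$ contribution, and the symmetry argument it uses to kill it (symmetric tensor contracted with an object antisymmetric under simultaneous $k\leftrightarrow l$, $\alpha\leftrightarrow\beta$) is the same content as your observation that $\sum_i m_i^{1/2}r^0_{i\alpha}l_{\beta ik}$ is symmetric in $\alpha,\beta$. Your version has the mild advantage of being self-contained rather than deferring to a lemma; the paper's has the advantage that the cancellation step is a one-liner once the sum rule is in hand.
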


  \begin{proof}
 Lets expand left hand side from definition of vibrational angular momentum
\begin{align}\nonumber
&\left[\hat{j}_{\alpha},\hat{j}_{\beta}\right]=\sum_{k,l,m,n=1}^{3N-6}C^{\alpha}_{kl}C^{\beta}_{mn}\left[\hat{Q}_{k}\hat{P}_{l},\hat{Q}_{m}\hat{P}_{n}\right]=&&\\\nonumber
& =i\hbar\sum_{k,l,m,n=1}^{3N-6}C^{\alpha}_{kl}C^{\beta}_{mn}\left(-\hat{Q}_{k}\delta_{lm}\hat{P}_{n}+\hat{Q}_{m}\delta_{kn}\hat{P}_{l}\right)=&&\\\nonumber
& =i\hbar\sum_{k,l,m}\left(C^{\alpha}_{kl}C^{\beta}_{mk}-C^{\alpha}_{mk}C^{\beta}_{kl}\right)\hat{Q}_{m}\hat{P}_{l}=&&\\
& =i\hbar\sum_{k,l,m}\left(C^{\alpha}_{km}C^{\beta}_{lm}-C^{\alpha}_{km}C^{\beta}_{lm}\right)\hat{Q}_{k}\hat{P}_{l} &&
\end{align}
where the last equality comes from relabelling summation indices: $m\rightarrow k, k\rightarrow m, l \rightarrow l$ with subsequent transposition of $m,l$ on Coriolis coefficients. Further use of the sum rules \ref{sumrule1}-\ref{sumrule3} yields:
\begin{align}\nonumber
& \left[\hat{j}_{\alpha},\hat{j}_{\beta}\right]=i\hbar\sum_{k,l}\left[\delta_{\alpha\beta}\delta_{kl}-\sum_{i,\gamma,\delta}\left(l_{\beta ik}l_{\alpha il}+\frac{1}{4}a^{\alpha\gamma}_{k}\left(I^{0}\right)^{-1}_{\gamma\delta}a^{\delta\beta}_{l}\right)\right. &&\\\nonumber
&\left.-\left(\delta_{\beta\alpha}\delta_{kl}-\sum_{i,\gamma,\delta}\left(l_{\alpha ik}l_{\beta il}+\frac{1}{4}a^{\beta\gamma}_{k}\left(I^{0}\right)^{-1}_{\gamma\delta}a^{\delta\alpha}_{l}\right)\right)\right]\hat{Q}_{k}\hat{P}_{l}= &&\\\nonumber
&=i\hbar\sum_{k,l=1}^{3N-6}\sum_{i=1}^{N}\left(l_{\alpha ik}l_{\beta il}-l_{\beta ik}l_{\alpha il}\right)\hat{Q}_{k}\hat{P}_{l}-&&\\
& -\frac{i\hbar}{4}\sum_{k,l,\gamma,\delta}a^{\alpha\gamma}_{k}\left(I^{0}\right)^{-1}_{\gamma\delta}a^{\delta\beta}_{l}\left(\hat{Q}_{k}\hat{P}_{l}-
\hat{Q}_{l}\hat{P}_{k}\right)=...&&
\end{align}
in last transformation we changed summation indices $k\rightarrow l, \gamma \rightarrow \delta$. Now contracting first term in last line by noticing totally antisymmetric term, and applying Coriolis coefficient definition:
\begin{align}\nonumber
 & i\hbar\sum_{k,l=1}^{3N-6}\sum_{i=1}^{N}\left(l_{\alpha ik}l_{\beta il}-l_{\beta ik}l_{\alpha il}\right)\hat{Q}_{k}\hat{P}_{l}=&&\\\nonumber
 & =i\hbar\sum_{k,l=1}^{3N-6}\sum_{i=1}^{N}\epsilon_{\gamma\alpha\beta}l_{\alpha ik}l_{\beta il}\hat{Q}_{k}\hat{P}_{l}=&&\\
 & =\sum_{k,l}C^{\alpha}_{kl}\hat{Q}_{k}\hat{P}_{l}=i\hbar\epsilon_{\alpha\beta\gamma}\hat{j}_{\gamma} &&
\end{align}

\begin{equation}
 \begin{split}
...=i\hbar\epsilon_{\alpha\beta\gamma}\hat{j}_{\gamma}-\frac{i\hbar}{4}\sum_{k,l,\gamma,\delta}a^{\alpha\gamma}_{k}\left(I^{0}\right)^{-1}_{\gamma\delta}a^{\delta\beta}_{l}\left(\hat{Q}_{k}\hat{P}_{l}-
\hat{Q}_{l}\hat{P}_{k}\right)
\end{split}
\end{equation}
but last expression is a product of totally symmetric (interaction coefficients are and moment of inertia are symmetric tensors)  and totally antisymmetric factor, with summation over all indices, summing up eventually to zero.
  \end{proof}
  As introduced in \ref{kinetic6} the $\mu_{\alpha\beta}$ is defined as reciprocal of $I'_{\alpha\beta}$
  \begin{equation}
  \mu_{\alpha\beta}=\left(I'^{-1}\right)_{\alpha\beta}
  \end{equation}
  where
  \begin{equation}
  I'_{\alpha\beta}=I_{\alpha\beta}-\sum_{klm}C_{kl}^{\alpha}C_{lm}^{\beta}Q_{k}Q_{l}
  \end{equation}
  here terms linear in normal coordinates are the same in $I$ and $I'$, therefore interaction coefficients $a_{k}^{\alpha\beta}$ are the same for both.
Now we want to express our refined moment of inertia by normal coordinates
\begin{align}\nonumber
& I_{\alpha\beta}=\sum_{\gamma,\delta,\eta}\sum_{i}\epsilon_{\alpha\gamma\eta}\epsilon_{\beta\delta\eta}m_{i}r_{i\gamma}r_{i\delta}=
  \sum_{\gamma,\delta\eta}\sum_{i}\epsilon_{\alpha\gamma\eta}\epsilon_{\beta\delta\eta}m_{i}r^{0}_{i\gamma}r^{0}_{i\delta}+&&\\\nonumber
  & +\sum_{k}a^{\alpha\beta}_{k}Q_{k} +\sum_{\gamma,\delta,\eta}\sum_{i}\sum_{s,n}\epsilon_{\alpha\gamma\eta}\epsilon_{\beta\delta\eta}l_{\eta is}l_{\gamma in}Q_{n}Q_{s}= I^{0}_{\alpha\beta}+&& \\\nonumber
  &+\sum_{k}a^{\alpha\beta}_{k}Q_{k}+\sum_{\gamma,\delta,\eta}\sum_{i}\sum_{s,n}\left(\delta_{\alpha\beta}\delta_{\gamma\eta}-\delta_{\alpha\eta}\delta_{\beta\gamma}\right)l_{\eta is}l_{\gamma in}Q_{n}Q_{s}=  && \\\nonumber
  & =I^{0}_{\alpha\beta}+\sum_{k}a^{\alpha\beta}_{k}Q_{k}+\delta_{\alpha\beta}\sum_{\gamma}\sum_{i}\sum_{s,n}l_{\gamma is}\l_{\gamma im}Q_{n}Q_{s}-  && \\\nonumber
 & -\sum_{i}\sum_{s,n}l_{\alpha is}l_{\beta in}Q_{n}Q_{s}-\sum_{k}\delta_{\alpha\beta}Q_{k}^{2}+ &&\\\nonumber
 & +\sum_{i}\sum_{k,l}l_{\alpha il}l_{\beta ik}Q_{l}Q_{k}+\frac{1}{4}\sum_{k,l}\sum_{\gamma\delta}a^{\alpha\gamma}_{k}\left(I^{0}\right)^{-1}_{\gamma\delta}a^{\delta\beta}_{l}Q_{k}Q_{l}=&&\\
& =I^{0}_{\alpha\beta}+\sum_{k}a^{\alpha\beta}_{k}Q_{k}+\frac{1}{4}\sum_{k,l}\sum_{\gamma\delta}a^{\alpha\gamma}_{k}\left(I^{0}\right)^{-1}_{\gamma\delta}a^{\delta\beta}_{l}Q_{k}Q_{l} &&
\end{align}
  By taking a closer look at above expression one may infer that there exist a simpler factorized form of the last sum:
  \begin{equation}
  I'_{\alpha\beta}=\sum_{\gamma\delta}I''_{\alpha\gamma}\left(I'^{-1}\right)_{\gamma\delta}I''_{\delta\beta} \qquad i.e.\; I'=I''\left(I^{0}\right)^{-1}I''
  \end{equation}
  where $I''_{\alpha\beta}=I^{0}_{\alpha\beta}+\frac{1}{2}\sum_{k}a^{\alpha\beta}_{k}Q_{k}$. Note that $I'$ as a function of normal coordinates is much simpler than $I$, although physical significance of $I'$ is harder to visualize. That's the reason for introducing $I'$ and $\mu$ subsequently.
  In order to rearrange Podolsky Hamiltonian into simpler form it will be necessary to calculate two types of commutators, namely $\left[\hat{j}_{\alpha},\mu\right], \left[\hat{j}_{\alpha},\mu_{\alpha\beta}\right]$, hence lets start with matrix form of $\mu$ operator, for easier manipulation:
    \begin{equation}
 \resizebox{\hsize}{!}{$ \left[\hat{j}_{\alpha},\mu\right]=\left[\hat{j}_{\alpha},I''^{-1}I^{0}I''^{-1}\right]=\left[\hat{j}_{\alpha},I''^{-1}\right]I^{0}I''^{-1}+
  I''^{-1}I^{0}\left[\hat{j}_{\alpha},I''^{-1}\right] $}
  \end{equation}
  using the fact that $I''I''^{-1}=1$ results in equality: $\left[\hat{j}_{\alpha},I''\right]I''^{-1}+I''\left[\hat{j}_{\alpha},I''^{-1}\right]=0$ which when inserted into above expression yields in:
      \begin{equation}
  \left[\hat{j}_{\alpha},\mu\right]=-I''^{-1}\left[\hat{j}_{\alpha},I''\right]\mu-\mu\left[\hat{j}_{\alpha},I''\right]I''^{-1}
  \end{equation}
By considering matrix elements of $\mu$ we can continue calculations in following fashion:
      \begin{equation}
 \resizebox{\hsize}{!}{$ \sum_{\alpha}\left[\hat{j}_{\alpha},\mu_{\alpha\beta}\right]=-\sum_{\alpha,\gamma,\delta}\left(\left(I''^{-1}\right)_{\alpha\gamma}\left[\hat{j}_{\alpha},I''_{\gamma\delta}\right]\mu_{\delta\beta}-\mu_{\alpha\gamma}
  \left[\hat{j}_{\alpha},I''_{\gamma\delta}\right]\left(I''^{-1}\right)_{\delta\beta}\right)=0$}
  \end{equation}
This result is quite surprising, as the vibrational angular momentum is a differential operator, which usually do not commute with functions of coordinates. Having that done, attention can be turned into the commutator of the determinant $\mu$.  Lets note that since $\mu=\det\left(I''^{-1}\right)^{2}\det\left(I^{0}\right)$ is expressed by the determinant of $I''$ we can consider only $\left[\hat{j}_{\alpha},I''\right]$. After some calculations we obtain:
   \begin{equation}
  \left[\hat{j}_{\alpha},I''\right]=i\hbar\sum_{\gamma,\eta,\epsilon,\delta,\xi}I''\epsilon_{\gamma\eta\epsilon}K^{0}_{\eta\delta}\left(I^{0}\right)^{-1}_{\epsilon\xi}
  I''_{\xi\alpha}\left(I''^{-1}\right)_{\gamma\delta}
  \end{equation}
  At present stage we can attempt to write down the Podolsky Hamiltonian in a new form. Firstly lets deal with the Hamiltonian in \ref{Hamiltonian-quantum} investigating the two terms separately:
\begin{align}\nonumber
 & \mu^{\frac{1}{4}}\hat{M}_{i}\mu_{ij}\mu^{-\frac{1}{2}}\hat{M}_{j}\mu^{\frac{1}{4}}=\mu_{ij}\mu^{\frac{1}{4}}\hat{M}_{i}\mu^{-\frac{1}{2}}\hat{M}_{j}\mu^{\frac{1}{4}}=&&\\\nonumber
  & =\mu_{ij}\mu^{\frac{1}{4}}\hat{M}_{i}\mu^{-\frac{1}{4}}\hat{M}_{j}+\mu_{ij}\mu^{\frac{1}{4}}\hat{M}_{i}\mu^{-\frac{1}{2}}\left[\hat{M}_{j},\mu^{\frac{1}{4}}\right]=&&\\\nonumber
  & =\mu_{ij}\hat{M}_{i}\hat{M}_{j}+\mu_{ij}\mu^{\frac{1}{4}}\left[\hat{M}_{i},\mu^{-\frac{1}{4}}\right]\hat{M}_{j}+\mu_{ij}\mu^{-\frac{1}{4}}\hat{M}_{i}\left[\hat{M}_{j},\mu^{\frac{1}{4}}\right]+&&\\\nonumber
  & +\mu_{ij}\mu^{\frac{1}{4}}\left[\hat{M}_{i},\mu^{-\frac{1}{2}}\right]\left[\hat{M}_{j},\mu^{\frac{1}{4}}\right]=\mu_{ij}\hat{M}_{i}\hat{M}_{j}+\mu_{ij}\mu^{\frac{1}{4}}\left[\hat{M}_{i},\mu^{-\frac{1}{4}}\right]\hat{M}_{j}+&&\\\nonumber
  &\mu_{ij}\mu^{-\frac{1}{4}}\left[\hat{M}_{i},\left[\hat{M}_{j},\mu^{\frac{1}{4}}\right]\right] +\mu_{ij}\mu^{\frac{1}{4}}\left[\hat{M}_{i},\mu^{-\frac{1}{2}}\right]\left[\hat{M}_{j},\mu^{\frac{1}{4}}\right]=&&\\\nonumber
  & =\mu_{ij}\hat{M}_{i}\hat{M}_{j}+\mu_{ij}I''^{-\frac{1}{2}}\left[\hat{j}_{i},I''^{\frac{1}{2}}\right]\hat{M}_{j}+\mu_{ij}I''^{\frac{1}{2}}\left[\hat{j}_{j},I''^{-\frac{1}{2}}\right]\hat{M}_{i}+&&\\
  &+\mu_{ij}\mu_{ij}I''^{\frac{1}{2}}\left[\hat{j}_{j},\left[\hat{j}_{j},I''^{-\frac{1}{2}}\right]\right]+\mu_{ij}\mu_{ij}I''^{-\frac{1}{2}}\left[\hat{j}_{i},I''\right]\left[\hat{j}_{j},I''^{-\frac{1}{2}}\right]=... &&
\end{align}
where we've made use of commutation relation for $\mu$ tensor elements derived earlier, as well as of the expression for determinant of $\mu$ matrix.  Lets treat $\hat{j}_{i}$ as differential operator (as it is indeed), then:
\begin{equation}
\left[\hat{j}_{i},I''^{\frac{1}{2}}\right]=\frac{1}{2}I''^{-\frac{1}{2}}\left[\hat{j}_{i},I''\right]
  \end{equation}

\begin{align}\nonumber
  &...=\mu_{ij}\hat{M}_{i}\hat{M}_{j}+\mu_{ij}I''^{\frac{1}{2}}\left[\hat{j}_{i},-\frac{1}{2}I''^{-\frac{3}{2}}\left[\hat{j}_{j},I''\right]\right]-\\\nonumber
 &\frac{1}{2}\mu_{ij}I''^{-2}\left[\hat{j}_{i},I''\right]
  \left[\hat{j}_{j},I''\right] =\mu_{ij}\hat{M}_{i}\hat{M}_{j}-\frac{1}{2}\mu_{ij}I''^{-1}\left[\hat{j}_{i},\left[\hat{j}_{j},I''\right]\right]-\\\nonumber
  &-\frac{1}{2}\mu_{ij}I''^{\frac{1}{2}}
  \left[\hat{j}_{i},I''^{-\frac{3}{2}}\right]\left[\hat{j}_{j},I''\right]-\frac{1}{2}\mu_{ij}I''^{-2}
  \left[\hat{j}_{i},I''\right]\left[\hat{j}_{j},I''\right]=\\\nonumber
  &=\mu_{ij}\hat{M}_{i}\hat{M}_{j}-\frac{1}{2}\mu_{ij}I''^{-1}\left[\hat{j}_{i},\left[\hat{j}_{j},I''\right]\right]+\frac{3}{4}\mu_{ij}I''^{-2}\left[\hat{j}_{i},I''\right]\left[\hat{j}_{j},I''\right]- \\\nonumber
  &-\frac{1}{2}\mu_{ij}I''^{-2}\left[\hat{j}_{i},I''\right]\left[\hat{j}_{j},I''\right]=\mu_{ij}\hat{M}_{i}\hat{M}_{j}-\\
  &-\frac{1}{2}\mu_{ij}I''^{-1}\left[\hat{j}_{i},\left[\hat{j}_{j},I''\right]\right]+\frac{1}{4}\mu_{ij}I''^{-2}\left[\hat{j}_{i},I''\right]\left[\hat{j}_{j},I''\right]
\end{align}
  Now consider the vibrational kinetic energy term in \ref{Hamiltonian-quantum}:
\begin{align}\nonumber
 &  \mu^{\frac{1}{4}}\hat{P}_{k}\mu^{-\frac{1}{2}}\hat{P}_{k}\mu^{\frac{1}{4}}=\mu^{\frac{1}{4}}\hat{P}_{k}\mu^{-\frac{1}{4}}\hat{P}_{k}+ &&\\\nonumber
 &  +\mu^{\frac{1}{4}}\hat{P}_{k}\mu^{-\frac{1}{2}}\left[\hat{P}_{k},\mu^{\frac{1}{4}}\right]=\hat{P}_{k}^{2}+ &&\\
 & +\mu^{\frac{1}{4}}\left[\hat{P}_{k},\mu^{-\frac{1}{4}}\right]\hat{P}_{k}+  \mu^{\frac{1}{4}}\hat{P}_{k}\mu^{-\frac{1}{2}}\left[\hat{P}_{k},\mu^{\frac{1}{4}}\right]=... &&
\end{align}
  Evaluating separately appropriate commutators:
\begin{align}\nonumber
 &\left[\hat{P}_{k},\mu^{-\frac{1}{4}}\right]=\left(I^{0}\right)^{-\frac{1}{4}}\left[\hat{P}_{k},I''^{\frac{1}{2}}\right]=-i\hbar \left(I^{0}\right)^{-\frac{1}{4}}\frac{\partial I''^{\frac{1}{2}}}{\partial Q_{k}}\\\nonumber
& \left[\hat{P}_{k},\mu^{\frac{1}{4}}\right]=\left(I^{0}\right)^{\frac{1}{4}}\left[\hat{P}_{k},I''^{-\frac{1}{2}}\right]=-i\hbar \left(I^{0}\right)^{\frac{1}{4}}\frac{\partial I''^{-\frac{1}{2}}}{\partial Q_{k}}
\left[\hat{P}_{k},\mu^{\frac{1}{4}}\right]=&&\\
& =-i\hbar \left(I^{0}\right)^{-\frac{1}{2}}\frac{\partial I''}{\partial Q_{k}}&&
\end{align}
  hence,

\begin{align}\nonumber
& ...=\hat{P}_{k}^{2}-i\hbar I''^{-\frac{1}{2}}\frac{\partial I''^{\frac{1}{2}}}{\partial Q_{k}}\left(-i\hbar \frac{\partial}{\partial Q_{k}}\right)+
\mu^{-\frac{1}{4}}\hat{P}_{k}\left[\hat{P}_{k},\mu^{\frac{1}{4}}\right]+&&\\\nonumber
& +\mu^{\frac{1}{4}}\left[\hat{P}_{k},\mu^{-\frac{1}{2}}\right]\left[\hat{P}_{k},\mu^{\frac{1}{4}}\right]=\hat{P}_{k}^{2}-\hbar^{2}I''^{-\frac{1}{2}}\frac{\partial I''^{\frac{1}{2}}}{\partial Q_{k}} \frac{\partial}{\partial Q_{k}}-&&\\\nonumber
&-\hbar^{2}I''^{\frac{1}{2}}\frac{\partial^{2} I''^{-\frac{1}{2}}}{\partial Q_{k}^{2}}-\hbar^{2}I''^{\frac{1}{2}}\frac{\partial I''^{-\frac{1}{2}}}{\partial Q_{k}} \frac{\partial}{\partial Q_{k}}-\hbar^{2}I''^{-\frac{1}{2}}\frac{\partial I''}{\partial Q_{k}} \frac{\partial I''^{-\frac{1}{2}}}{\partial Q_{k}}=&&\\
& =\hat{P}_{k}^{2}+\frac{1}{2}\hbar^{2}I''^{-1}\frac{\partial^{2} I''}{\partial Q_{k}^{2}}-\frac{1}{4}\hbar^{2}I''^{-2}\left(\frac{\partial I''}{\partial Q_{k}}\right)^{2}&&
\end{align}

  Following Watson we denote respective terms in the residual Hamiltonian as $U_1,U_2,U_3,U_4$.

First lets evaluate $U_{1}$:
\begin{widetext}
\begin{align}\nonumber
 & U_{1}=\frac{1}{8}\sum_{\alpha\beta}I''^{-2}\mu_{\alpha\beta}\left[\hat{j}_{\alpha},I''\right]\left[\hat{j}_{\beta},I''\right]=\frac{1}{8}\sum_{\alpha\beta}\left(I'^{-1}\right)_{\alpha\beta}+ &&\\\nonumber
 & +i\hbar I''\sum_{\gamma,\eta,\epsilon,\xi,\delta}\epsilon_{\gamma\eta\epsilon}K^{0}_{\eta\delta}\left(I^{0-1}\right)_{\epsilon\xi}\left(I''\right)_{\xi\alpha}\left(I''^{-1}\right)
_{\gamma\delta}+ i\hbar\sum_{\theta,\phi,\psi,\nu,\omega}\epsilon_{\theta\phi\psi}K^{0}_{\phi\omega}\left(I^{0-1}\right)_{\psi\nu}\left(I''\right)_{\nu\beta}\left(I''^{-1}\right)
_{\theta\omega}=&&\\
&-\frac{\hbar^{2}}{8}\sum_{\alpha,\beta,\lambda,\kappa}\sum_{\gamma,\eta,\epsilon,\xi,\delta}\sum_{\theta,\phi,\psi,\nu,\omega}  \left(I''^{-1}\right)_{\alpha\lambda}
I^{0}_{\lambda\kappa}\left(I''^{-1}\right)_{\kappa\beta}\epsilon_{\gamma\eta\epsilon}\epsilon_{\theta\phi\psi}K^{0}_{\eta\delta}K^{0}_{\phi\omega}
\left(I^{0-1}\right)_{\phi\xi}
I''_{\xi\alpha}\left(I''^{-1}\right)_{\gamma\delta}\left(I^{0-1}\right)_{\psi\nu}I''_{\nu\beta}\left(I''^{-1}\right)_{\theta\omega}=...&&
\end{align}
\end{widetext}
  and making use of the \textit{Kronecker} deltas $\delta_{\xi\lambda},\delta_{\kappa\nu},\delta_{\nu\phi}$ we finally obtain
    \begin{equation}
  \begin{split}
...=-\frac{\hbar^{2}}{8}\sum_{\gamma,\eta,\epsilon,\delta}\sum_{\theta,\phi,\psi,\omega}\epsilon_{\gamma\eta\epsilon}\epsilon_{\theta\phi\psi}K^{0}_{\eta\delta}K^{0}_{\phi\omega}
\left(I''^{-1}\right)_{\gamma\delta}\left(I^{0-1}\right)_{\psi\phi}\left(I''^{-1}\right)_{\theta\omega}
   \end{split}
  \end{equation}
To simplify second term $U_{2}$ lets first compute following inner expression:
\begin{widetext}
\begin{align}\nonumber
& \sum_{\beta}\mu_{\alpha\beta}\left[\hat{j}_{\alpha},I''\right]=\sum_{\beta,\gamma,\delta}\left(I''^{-1}\right)_{\alpha\gamma}I^{0}_{\gamma\delta}\left(I''^{-1}\right)_{\delta\beta}\cdot
\sum_{\eta,\nu,\phi,\psi,\omega}i\hbar I''\epsilon_{\eta\nu\phi}K^{0}_{\nu\psi}\left(I^{0-1}\right)_{\phi\omega}I''_{\omega\beta}\left(I''^{-1}\right)_{\eta\psi}= &&\\
&=i\hbar I''\sum_{\gamma,\delta,\eta,\nu,\psi}\left(I''^{-1}\right)_{\alpha\gamma}\left(I''^{-1}\right)_{\eta\psi}K^{0}_{\nu\psi}\epsilon_{\eta\nu\gamma}
=i\hbar I''\sum_{\delta,\eta,\nu,\psi}\left(I''^{-1}\right)_{\eta\psi}K^{0}_{\nu\psi}\sum_{\gamma}\left(I''^{-1}\right)_{\alpha\gamma}\epsilon_{\eta\nu\gamma}=... &&
\end{align}
\end{widetext}
  using identity:
  \begin{equation}
  \sum_{\gamma}\left(I''^{-1}\right)_{\alpha\gamma}\epsilon_{\eta\nu\gamma}=\sum_{\theta,\mu}\epsilon_{\alpha\theta\mu}\left(I''\right)_{\eta\theta}\left(I''^{-1}\right)_{\gamma\mu}I''^{-1}
\end{equation}
we find:
    \begin{equation}
  \begin{split}
..=i\hbar I''\sum_{\delta,\eta,\nu,\psi,\theta,\mu}\left(I''^{-1}\right)_{\eta\psi}K^{0}_{\nu\psi}\left(I''\right)_{\eta\theta}\left(I''\right)_{\nu\mu}\epsilon_{\alpha\theta\mu}=\\
=i\hbar\sum_{\nu,\psi,\mu}K^{0}_{\nu\psi}\left(I''\right)_{\nu\mu}\epsilon_{\alpha\psi\mu}
   \end{split}
  \end{equation}
  With such preparation we're ready to evaluate $U_{2}$ expression:

\begin{align}\nonumber
& U_{2}=-\frac{1}{4}\sum_{\alpha,\beta}\left(I''^{-1}\right)\left[\hat{j}_{\alpha},\mu_{\alpha\beta}\left[\hat{j}_{\beta},I''\right]\right]=&&\\\nonumber
& =-\frac{1}{4}I''^{-1}\sum_{\alpha,\beta}\left(I''^{-1}\right)\left[\hat{j}_{\alpha},i\hbar\sum_{\nu,\psi,\mu}K^{0}_{\nu\psi}\left(I''\right)_{\nu\mu}\epsilon_{\alpha\psi\mu}\right]=&&\\
& -\frac{i\hbar}{4}I''^{-1}\sum_{\alpha,\beta,\nu,\psi,\mu}\epsilon_{\alpha\psi\mu}K^{0}_{\psi\nu}\left[\hat{j}_{\alpha},I''_{\nu\mu}\right]=...&&
\end{align}
  in the meantime lets expand above commutator
\begin{align}\nonumber
& \left[\hat{j}_{\alpha},I''_{\nu\mu}\right]=i\hbar\left(\sum_{\epsilon,\eta,\xi}\epsilon_{\alpha\nu\epsilon}I''_{\epsilon\mu}-\frac{1}{2}\epsilon_{\alpha\nu\mu}
I''_{\epsilon\epsilon}+\right.&&\\
&+\left.\epsilon_{\nu\eta\epsilon}K^{0}_{\eta\mu}\left(I^{0-1}\right)_{\xi\xi}I''_{\xi\alpha}\right) &&
\end{align}
  hence,
\begin{align}\nonumber
& ...=\frac{\hbar^{2}}{4}I''^{-1}\sum_{\alpha,\beta,\nu,\psi}\sum_{\mu,\phi,\eta,\xi}\left(\epsilon_{\alpha\psi\mu}\epsilon_{\alpha\nu\epsilon}K^{0}_{\psi\nu}I''_{\epsilon\mu}-\right(.&&\\
& \left.-\frac{1}{2}\epsilon_{\alpha\psi\mu}\epsilon_{\alpha\nu\mu}K^{0}_{\psi\nu}I''_{\epsilon\epsilon}+
\epsilon_{\alpha\psi\mu}\epsilon_{\nu\eta\epsilon}K^{0}_{\psi\nu}K^{0}_{\eta\mu}\left(I^{0-1}\right)_{\phi\xi}I''_{\xi\alpha}\right)&&
\end{align}
  after using the identity:
  \begin{equation}
  \epsilon_{\alpha\beta\gamma}\epsilon_{\alpha\eta\phi}=\delta_{\beta\eta}\delta_{\gamma\phi}-\delta_{\beta\phi}\delta_{\gamma\eta}
  \end{equation}
  
\begin{align}\nonumber
& ...=\frac{\hbar^{2}}{4}I''^{-1}\sum_{\epsilon,\mu,\nu,\eta,\psi,\nu,\xi}\left(K^{0}_{\nu\nu}I''_{\epsilon\epsilon}-K^{0}_{\epsilon\mu}I''_{\epsilon\mu}-\frac{1}{2}K^{0}_{\nu\nu}
I''_{\epsilon\epsilon}\right.+&&\\\nonumber
& \left.\frac{1}{2}K_{\mu\mu}^{0}I''_{\epsilon\epsilon}+\epsilon_{\alpha\psi\mu}\epsilon_{\nu\eta\epsilon}
K^{0}_{\psi\nu}K^{0}_{\eta\mu}\left(I^{0-1}\right)_{\phi\xi}I''_{\xi\alpha}\right)=&&\\
& \frac{\hbar^{2}}{4}I''^{-1}\sum_{\epsilon,\mu,\nu,\eta,\psi,\nu}\left(-K^{0}_{\epsilon\mu}+\epsilon_{\alpha\psi\mu}\epsilon_{\nu\eta\epsilon}K^{0}_{\psi\nu}K^{0}_{\eta\mu}\left(I^{0-1}\right)_{\epsilon\mu}\right)I''_{\epsilon\mu} &&
\end{align}
  In order to calculate the third term we will need a general lemma regarding differentiation of determinants:
  \begin{theorem}
  For a given reversible matrix $\textbf{A}$ containing elements remaining functions of $x$ the following equality holds:
  \begin{equation}
  \frac{\partial\det \textbf{A}}{\partial x}=\det \left(\textbf{A}\right) Tr\left(\textbf{A}^{-1}\frac{\partial \textbf{A}}{\partial x}\right)
  \end{equation}
  \end{theorem}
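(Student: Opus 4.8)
The plan is to establish this identity, known as \emph{Jacobi's formula}, by combining the chain rule with the cofactor (Laplace) expansion of the determinant. Since $\det\textbf{A}$ is a polynomial in the $n^{2}$ entries $A_{ij}$, and each entry is a differentiable function of $x$, the chain rule immediately gives
\begin{equation}
\frac{\partial \det\textbf{A}}{\partial x}=\sum_{i,j}\frac{\partial \det\textbf{A}}{\partial A_{ij}}\frac{\partial A_{ij}}{\partial x}.
\end{equation}
The whole argument then reduces to identifying the partial derivative $\partial \det\textbf{A}/\partial A_{ij}$ with a quantity that can be repackaged into the trace of $\textbf{A}^{-1}\,\partial_{x}\textbf{A}$.

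First I would recall the Laplace expansion of the determinant along the $i$-th row, $\det\textbf{A}=\sum_{j}A_{ij}C_{ij}$, where $C_{ij}=(-1)^{i+j}M_{ij}$ is the $(i,j)$ cofactor and $M_{ij}$ the corresponding minor. The decisive observation is that $C_{ij}$ is built entirely from entries lying outside row $i$ and column $j$, so in particular it does not contain $A_{ij}$ itself. Differentiating the expansion with respect to $A_{ij}$ therefore yields the clean relation
\begin{equation}
\frac{\partial \det\textbf{A}}{\partial A_{ij}}=C_{ij}.
\end{equation}
Next I would invoke the adjugate formula $\textbf{A}^{-1}=\det(\textbf{A})^{-1}\,\mathrm{adj}(\textbf{A})$, with $\mathrm{adj}(\textbf{A})_{ji}=C_{ij}$, which follows from $\textbf{A}\,\mathrm{adj}(\textbf{A})=\det(\textbf{A})\,\textbf{1}$; here the reversibility of $\textbf{A}$ is exactly what guarantees $\det\textbf{A}\neq0$, so that $\textbf{A}^{-1}$ exists and the rewriting $C_{ij}=\det(\textbf{A})\,(\textbf{A}^{-1})_{ji}$ is legitimate.

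Substituting this back into the chain-rule expression gives
\begin{equation}
\frac{\partial \det\textbf{A}}{\partial x}=\det(\textbf{A})\sum_{i,j}(\textbf{A}^{-1})_{ji}\frac{\partial A_{ij}}{\partial x}=\det(\textbf{A})\sum_{j}\left(\textbf{A}^{-1}\frac{\partial \textbf{A}}{\partial x}\right)_{jj},
\end{equation}
and the remaining sum over $j$ is precisely $Tr\left(\textbf{A}^{-1}\partial_{x}\textbf{A}\right)$, which completes the proof. The step I expect to be the main obstacle is the clean justification of $\partial \det\textbf{A}/\partial A_{ij}=C_{ij}$; everything else is routine bookkeeping once the adjugate identity is available. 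An equivalent route avoiding cofactors would be to use the multilinearity of $\det$ in the columns of $\textbf{A}$, writing $\partial_{x}\det\textbf{A}$ as a sum of determinants each with a single differentiated column and then re-expanding; alternatively, on any neighbourhood where a matrix logarithm exists one may simply differentiate $\det\textbf{A}=\exp\,Tr\ln\textbf{A}$. I would nonetheless favour the cofactor approach, since it is elementary, requires no smoothness beyond differentiability of the entries, and makes the appearance of $\textbf{A}^{-1}$ transparent through the adjugate.
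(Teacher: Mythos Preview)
Your proof is correct and complete: the cofactor identification $\partial\det\textbf{A}/\partial A_{ij}=C_{ij}$ together with the adjugate formula is the standard derivation of Jacobi's formula, and your justification of each step is sound. The paper itself states this theorem without proof, invoking it purely as a lemma needed to differentiate $\det I''$ with respect to the normal coordinates $Q_k$; so there is no approach in the paper to compare against, and your argument in fact supplies what the paper omits.
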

  Recalling:
       \begin{equation}
  \begin{split}
U_{3}=-\frac{\hbar^{2}}{8}I''^{-2}\sum_{k}\left(\frac{\partial I''}{\partial Q_{k}}\right)^{2}
   \end{split}
  \end{equation}
  we may use the lemma to find that:
         \begin{equation}
  \begin{split}
\frac{\partial I''}{\partial Q_{k}}=I''\sum_{\alpha\beta}\left(I''^{-1}\right)_{\alpha\beta}\frac{\partial I''_{\alpha\beta}}{\partial Q_{k}}=\frac{1}{2}\sum_{\alpha\beta}\left(I''^{-1}\right)_{\alpha\beta}a^{\alpha\beta}_{k}
   \end{split}
  \end{equation}
  due to the introduced definition of interaction coefficient \ref{interaction_coeff}, and the fact that $I''$ is linear in normal coordinates, thus $\frac{\partial I''_{\alpha\beta}}{\partial Q_{k}}=\left(\frac{\partial I''_{\alpha\beta}}{\partial Q_{k}}\right)_{0}$. Coming back to $U_{3}$,
         \begin{equation}
  \begin{split}
U_{3}=-\frac{\hbar^{2}}{8}I''^{-2}I''^{2}\sum_{k}\sum_{\alpha,\beta,\gamma,\delta}\left(I''^{-1}\right)_{\alpha\beta}\left(I''^{-1}\right)_{\gamma\delta}
a^{\alpha\beta}_{k}a^{\gamma\delta}_{k}=\\
-\frac{\hbar^{2}}{8}\sum_{k}\sum_{\alpha,\beta,\gamma,\delta}\sum_{\epsilon,\eta,\xi,\theta}\left(I''^{-1}\right)_{\alpha\beta}\left(I''^{-1}\right)_{\gamma\delta}
\left(\delta_{\alpha\beta}\delta_{\gamma\delta}K^{0}_{\epsilon\epsilon}\right.-\\
\left.-\delta_{\alpha\beta}K^{0}_{\gamma\delta}-\delta_{\gamma\delta}K^{0}_{\alpha\beta}+\delta_{\alpha\gamma}K^{0}_{\beta\delta}-\epsilon_{\alpha\epsilon\eta}
\epsilon_{\gamma\xi\theta}K^{0}_{\beta\epsilon}K^{0}_{\delta\xi}\left(I^{0-1}\right)_{\eta\theta}\right)
   \end{split}
  \end{equation}
  where we've made use of sum rule for interaction coefficients \ref{sumrule2}. Mostly in similar way we can evaluate the last term $U_{4}$
    \begin{equation}
  \begin{split}
U_{4}=-\frac{\hbar^{2}}{4}I''^{-1}\sum_{k}\frac{\partial^{2}I''}{\partial Q_{k}^{2}}
   \end{split}
  \end{equation}
  remembering determinant definition $I''=\frac{1}{6}\sum_{\alpha,\beta,\gamma,\delta,\epsilon,\xi}\epsilon_{\alpha\beta\gamma}\epsilon_{\delta\epsilon\xi}I''_{\alpha\delta}I''_{\beta\epsilon}I''_{\gamma\xi}$
     \begin{equation}
  \begin{split}
\frac{\partial^{2}I''}{\partial Q_{k}^{2}}=\frac{1}{4}\sum_{\alpha,\beta,\gamma,\delta,\lambda,\eta}\epsilon_{\eta\alpha\gamma}\epsilon_{\lambda\beta\delta}I''_{\eta\lambda}a^{\alpha\beta}_{k}a^{\gamma\delta}
   \end{split}
  \end{equation}
  inserting into original expression, and after short manipulations we get
      \begin{equation}
  \begin{split}
  U_{4}=\frac{\hbar^{2}}{4}I''^{-1}\sum_{\alpha,\beta,\theta,\xi,\epsilon,\lambda}I''_{\alpha\lambda}\left(2K^{0}_{\alpha\lambda}-\epsilon_{\lambda\beta\delta}
\epsilon_{\theta\epsilon\xi}K^{0}_{\beta\epsilon}K^{0}_{\delta\xi}\left(I^{0-1}\right)_{\alpha\theta}\right)
   \end{split}
  \end{equation}
  Upon adding $U_{1}$ and $U_{3}$ we notice that last term in $U_{3}$ cancels $U_{1}$, hence
\begin{align}\nonumber  
    \small
U_{1}+U_{3}=&
-\frac{\hbar^{2}}{8}\sum_{k}\sum_{\alpha,\beta,\gamma,\delta}\sum_{\epsilon,\eta,\xi,\theta}\left(I''^{-1}\right)_{\alpha\beta}\left(I''^{-1}\right)_{\gamma\delta}\times\\
&\times\left(\delta_{\alpha\beta}\delta_{\gamma\delta}K^{0}_{\epsilon\epsilon}-\delta_{\alpha\beta}K^{0}_{\gamma\delta}-\delta_{\gamma\delta}K^{0}_{\alpha\beta}+\delta_{\alpha\gamma}K^{0}_{\beta\delta}\right)
\end{align}
\normalsize
  and remaining terms sum up to:
            \begin{equation}
  \begin{split}
U_{2}+U_{4}=-\frac{\hbar^{2}}{4}I''^{-1}\sum_{\alpha,\lambda}\left(K^{0}_{\alpha\lambda}I''_{\alpha\lambda}-2K^{0}_{\alpha\lambda}I''_{\alpha\lambda}\right)=\\
\frac{\hbar^{2}}{4}I''^{-1}\sum_{\alpha,\lambda}K^{0}_{\alpha\lambda}I''_{\alpha\lambda}
   \end{split}
  \end{equation}
  making use of relation
  \begin{equation}
    \begin{split}
I''_{\alpha\delta}=\frac{1}{2}I''\sum_{\gamma,\epsilon,\xi,\beta}\epsilon_{\alpha\beta\gamma}\epsilon_{\delta\epsilon\xi}I''^{-1}_{\epsilon\beta}I''^{-1}_{\xi\gamma}
   \end{split}
  \end{equation}
  we find
              \begin{equation}
  \begin{split}
U_{2}+U_{4}=-\frac{\hbar^{2}}{8}\sum_{\alpha,\lambda,\gamma,\epsilon,\xi,\beta}K^{0}_{\alpha\lambda}\epsilon_{\alpha\beta\gamma}\epsilon_{\lambda\epsilon\xi}I''^{-1}_{\epsilon\beta}I''^{-1}_{\xi\gamma}
   \end{split}
  \end{equation}
\begin{align}\nonumber  
    \small
U=-&\frac{\hbar^{2}}{8}\sum_{\alpha,\beta,\gamma,\delta,\epsilon}\left(\left(I''^{-1}\right)_{\alpha\beta}
\left(I''^{-1}\right)_{\gamma\delta}\left(\delta_{\alpha\beta}\delta_{\gamma\delta}K^{0}_{\epsilon\epsilon}-\delta_{\alpha\beta}K^{0}_{\gamma\delta}
-\right.\right.\\
& \left.\left.-\delta_{\gamma\delta}K^{0}_{\alpha\beta}+\delta_{\alpha\gamma}K^{0}_{\beta\delta}\right)-\sum_{\lambda,\xi}K^{0}_{\alpha\lambda}\epsilon_{\alpha\beta\gamma}\epsilon_{\lambda\epsilon\xi}I''^{-1}_{\epsilon\beta}I''^{-1}_{\xi\gamma}\right)
\end{align}
\normalsize
  changing indices in last sum above $\epsilon \rightarrow \alpha,\quad \xi \rightarrow \delta$ and utilizing fact that $I''$ is symmetric matrix we have:
    \begin{align}\nonumber  
    \small
U=-&\frac{\hbar^{2}}{8}\sum_{\alpha,\beta,\gamma,\delta,\epsilon,\lambda}\left(I''^{-1}\right)_{\alpha\beta}
\left(I''^{-1}\right)_{\gamma\delta}\left(\delta_{\alpha\beta}\delta_{\gamma\delta}K^{0}_{\epsilon\epsilon}-\delta_{\alpha\beta}K^{0}_{\gamma\delta}
-\right.\\
&-\left.\delta_{\gamma\delta}K^{0}_{\alpha\beta}+\delta_{\alpha\gamma}K^{0}_{\beta\delta}K^{0}_{\epsilon\lambda}\epsilon_{\epsilon\beta\delta}\epsilon_{\lambda\alpha\gamma}I''^{-1}_{\epsilon\beta}I''^{-1}_{\xi\gamma}\right)
\end{align}
\normalsize
  Finally it is required to use conversion formula between \textit{Levi-Civita }tensor product and \textit{Kronecker} Deltas:
  \begin{equation}
  \epsilon_{\epsilon\beta\delta}\epsilon_{\lambda\alpha\gamma}=\left|\begin{array}{ccc}
  \delta_{\epsilon\lambda} & \delta_{\epsilon\alpha} & \delta_{\epsilon\gamma}\\
  \delta_{\beta\lambda} & \delta_{\beta\alpha} & \delta_{\beta\gamma}\\
  \delta_{\delta\lambda} & \delta_{\delta\alpha}& \delta_{\delta\gamma}
  \end{array}\right|
  \end{equation}
  by cancelling emerging identical terms we continue to
    \begin{align}\nonumber  
    \small
U=-&\frac{\hbar^{2}}{8}\sum_{\alpha,\beta,\gamma,\delta,\epsilon,\lambda}\left(I''^{-1}\right)_{\alpha\beta}
\left(I''^{-1}\right)_{\gamma\delta}\left(\delta_{\alpha\delta}\delta_{\beta\gamma}K^{0}_{\epsilon\epsilon}+\right.\\
&\left.+\delta_{\alpha\gamma}K^{0}_{\beta\delta}
-\delta_{\beta\gamma}K^{0}_{\alpha\delta}-\delta_{\alpha\delta}K^{0}_{\gamma\beta}\right)
\end{align}
\normalsize
 rearranging some terms
 % \begin{widetext}
    \begin{align}\nonumber  
    \small
U=-&\frac{\hbar^{2}}{8}\sum_{\alpha,\beta,\gamma,\delta,\epsilon,\lambda}\left(I''^{-1}\right)_{\alpha\beta}
\left(I''^{-1}\right)_{\delta\gamma}\left(\delta_{\alpha\delta}\left(\delta_{\beta\gamma}K^{0}_{\epsilon\epsilon}-K^{0}_{\gamma\beta}\right)\right.\\
&\left.+K^{0}_{\beta\delta}\delta_{\alpha\gamma}
-K^{0}_{\alpha\delta}\delta_{\beta\gamma}\right)
\end{align}
\normalsize
  %  \end{widetext}
  Now lets recall the definition of equilibrium moment of inertia $I^{0}_{\alpha\beta}=\delta_{\alpha\beta}K_{\gamma\gamma}^{0}-K_{\alpha\beta}^{0}$
As both $K^{0}_{\alpha\beta}$ and $I''$ are symmetric tensors and we perform summation over all indices, hence last two term in brackets give antisymmetric contribution while $I''$ give symmetric contribution to the product, yielding vanishing expression.
        \begin{equation}
  \begin{split}
U=-\frac{\hbar^{2}}{8}\sum_{\beta,\gamma,\delta}\left(I''^{-1}\right)_{\delta\beta}I^{0}_{\beta\gamma}
\left(I''^{-1}\right)_{\gamma\delta}=-\frac{\hbar^{2}}{8}Tr\mu
   \end{split}
  \end{equation}
  which proves the result obtained by Watson.
\textit{ Eckart-Watson} Hamiltonian approach have been successfully applied to potassium cyanide rovibrational problem by Tennyson and	Sutcliffe \cite{Ten1982KCNMolPhys}. However, there are numerous examples in the literature where other forms of the molecular Hamiltonian provide better convergence of energies, by proper treatment of large amplitude motions.

\section{\label{sec:rotor}Rigid rotor approximation}
In the last section of this paper we present solution and a short discussion of the model of rigid rotor in 3D. Resulting eigenstates are ubiquitous in nuclear motion theory, thereby this model should be deemed as important.
In section \ref{sec:level2} we ended with an approximate form for rotational Hamiltonian:
\begin{equation}\label{rotorhamil}
\hat{H}_{rot}=\frac{1}{2}\sum_{\alpha=1}^{3}\mu_{\alpha\alpha}^{e}\hat{J}_{\alpha}^{2}
\end{equation}
Hence the corresponding rigid rotor Schr\"{o}dinger equation reads:
\begin{equation}
\frac{1}{2}\sum_{\alpha=1}^{3}\mu_{\alpha\alpha}^{e}\hat{J}_{\alpha}^{2}\Phi_{rot}(\theta, \phi, \chi)=E_{rot}\Phi_{rot}(\theta, \phi, \chi)
\end{equation}
For spectroscopic purposes it will be convenient to rewrite the above equation into wavenumber units, because energy scale of rotational transitions is of order of tens/hundreds wavenumbers. Introducing rotational constants, e.g. $A_{e}=\frac{\hbar^{2}\mu_{aa}^{e}}{2hc}$, with convention that $A_{e}\ge B_{e} \ge C_{e}$ we find a new form of SE:
 \begin{equation}
\resizebox{\hsize}{!}{$\hbar^{-2}\left(A_{e}\hat{J}_{a}^{2}+B_{e}\hat{J}_{b}^{2}+C_{e}\hat{J}_{c}^{2}\right)\Phi_{rot}(\theta, \phi, \chi)=E_{rot}\Phi_{rot}(\theta, \phi, \chi)$}
\end{equation}
This is called the asymmetric top Schr\"{o}dinger equation, since all rotational constants take different values. The simplest molecular asymmetric top model may be applied to $H_{2}D^{+}$ cation, or simply water. There are three general cases of top systems:
\begin{itemize}
\item The symmetric top molecule, in which two rotational constants are equal, dividing into two groups:
\subitem Prolate symmetric top when $A_{e}>B_{e}=C_{e}$ (e.g. $CH_{3}Cl, trans-Ni(H_{2}O)_{4}Cl_{2}, CO_2)$(linear subtype),$propyne$ (cigar or rugby shaped molecules)
\subitem Oblate symmetric top when $A_{e}=B_{e}>C_{e}$ (e.g. $SO_{3}, BF_{3}, NO_{3}^{-}, C_{6}H_{6}, NH_{3})$ (disk or doughnut shaped molecules)
\item The spherical top molecule when $A_{e}=B_{e}=C_{e}$ (e.g. $SF_{6}, CH_{4}, SiH_{4})$
\item The asymmetric top molecule when $A_{e}\ne B_{e} \ne C_{e} \ne A_{e}$ (e.g. $H_{2}O, NO_{2}$)
\end{itemize}
It should be noted that symmetric top molecules must contain three-fold or higher symmetry axis, as an element of the point group. Molecules possessing at most two-fold axis are asymmetric tops (most of larger molecules are).
\subsection{Symmetric top molecule}
Common convention for symmetric tops is to choose space-fixed $z$ axis as the one with the distinct rotational constant $A_{e}$.
Prolate symmetric top yields in following SE:
 \begin{equation}\label{rigidSE}
\resizebox{.98\hsize}{!}{$\hbar^{-2}\left(A_{e}\hat{J}_{z}^{2}+B_{e}(\hat{J}_{b}^{2}+\hat{J}_{c}^{2})\right)\Phi_{rot}(\theta, \phi, \chi)=E_{rot}\Phi_{rot}(\theta, \phi, \chi)$}
\end{equation}
Angular momentum operators defined in the previous section, together with the use of chain rule results in:

\begin{align}\nonumber
 & \left(\frac{1}{\sin\theta}\frac{\partial}{\partial \theta}\left(\sin\theta\frac{\partial}{\partial\theta}\right)+\frac{1}{\sin^{2}\theta}\frac{\partial^{2}}{\partial \phi^{2}}\right.+ &&\\
 &\resizebox{\hsize}{!}{$ \left.+\left(\cot^{2}\theta+\frac{A_{e}}{B_{e}}\right)\frac{\partial^{2}}{\partial \chi^{2}}-2\frac{\cos\theta}{\sin^{2}\theta}\frac{\partial^{2}}{\partial \phi \partial \chi}+\frac{E_{rot}}{B_{e}}\right)\Phi_{rot}(\theta, \phi, \chi)=0 $}&&
\end{align}
 
As Euler angles $\chi, \phi$ occur only in derivatives, they are cyclic coordinates hence, we can make an anzatz:
 \begin{equation}
\Phi_{rot}(\theta, \phi, \chi)=\Theta(\theta)e^{im\phi}e^{ik\chi}
\end{equation}
Cyclic boundary conditions $\Phi_{rot}(\theta, \phi, \chi+2\pi)=\Phi_{rot}(\theta, \phi+2\pi, \chi)=\Phi_{rot}(\theta, \phi, \chi)$  imply $m,k \in Z$. After some manipulations we obtain the following equation for $\Theta(\theta)$:
 \begin{equation}
\resizebox{.98\hsize}{!}{$\left(\frac{1}{\sin\theta}\frac{d}{d \theta}\left(\sin\theta\frac{d}{d\theta}\right)+\left[\Delta-\frac{m^{2}-2mk\cos\theta+k^{2}}{\sin^{2}\theta}\right]\right)\Theta(\theta)=0$}
\label{theta}
\end{equation}
where
\begin{equation}\label{Rigid:delta}
  \Delta=\frac{\left(E_{rot}-(A_{E}-B_{e})k^{2}\right)}{B_{e}}
\end{equation}

This is already one dimensional second-order linear differential equation, which can be solved analytically by writing
  \begin{equation}
\Theta(\theta)=x^{\frac{|k-m|}{2}}(1-x)^{\frac{|k+m|}{2}}F(x)
\end{equation}
where $x=\frac{(1-\cos\theta)}{2}$. Note that also
\begin{equation}
\Theta(\theta)=\left(\sin\frac{\theta}{2}\right)^{|k-m|}\left(\cos\frac{\theta}{2}\right)^{|k+m|}F(\sin^{2}\frac{\theta}{2})
\end{equation}
After some lengthy differentiation and algebra eq. \ref{theta} reduces to
 \begin{equation}
x(1-x)\frac{d^{2}F}{dx^{2}}+(\alpha-\beta x)\frac{dF}{dx}+\gamma F=0
\label{Hypergeometric}
\end{equation}
with
 \begin{equation}\label{Rigid:alpha}
\alpha=1+|k-m|
\end{equation}
 \begin{equation}\label{Rigid:beta}
\beta=\alpha+1+|k+m|
\end{equation}
and
\begin{equation}\label{Rigid:gamma}
\gamma=\Delta-\frac{\beta(\beta-2)}{4}
\end{equation}
Following Frobenius method we write expansion form of our solution:
\begin{equation}\label{Frobenius}
F(x)=\sum_{n=0}^{\infty}a_{n}x^{n}
\end{equation}
First and second derivative read:
\begin{align}\nonumber
 & \frac{dF}{dx}=\sum_{n=1}^{\infty}a_{n}nx^{n-1} &&\\
 & \frac{d^{2}F}{dx^{2}}=\sum_{n=2}^{\infty}a_{n}n(n-1)x^{n-2} &&
\end{align}
Substituting into \ref{Hypergeometric}:
\begin{align}\nonumber
 & \gamma a_{0}+\alpha a_{1} +\left[\left(\gamma -\beta\right)a_{1}+2\left(1+\alpha)\right)a_{2}\right]x &&\\
 & +\sum_{n=2}^{\infty}\left[\left(\gamma - n(n-1)-\beta n\right)a_{n}+(n+1)(n+\alpha)a_{n+1}\right]x^{n}=0
\end{align}
Due to linear independence of basis monomials we can equate consequent coefficients to $0$
\begin{align}\nonumber
 & a_{1}=\frac{\gamma}{2}a_{0}&&\\\nonumber
 & a_{2}=a_{1}\frac{\beta-\gamma}{2+2\alpha} &&\\
 & a_{n+1}=\frac{-\gamma +\beta n +n(n-1)}{(n+1)(n+\alpha)}a_{n} &&
  \label{coeffs}
\end{align}
  The $a_{0}$ coefficient is chosen so that the rotational wavefunction is normalised.
  In order to $\Theta(\theta)$ be a proper representation of wavevector it must be finite, thus the series expansion \ref{Frobenius} must truncate at finite term labeled by $n_{max}$, then

  \begin{equation}\label{max}
    a_{n_{max}+1}=0
  \end{equation}
  giving the condition, which provides eigenvalues of rigid rotor SE (cf.\ref{rigidSE})

  \begin{equation}\label{Rigid:condition}
    \beta n_{max} +n_{max}(n_{max}-1)-\gamma=0
  \end{equation}
  Substituting eqs. \ref{Rigid:delta},\ref{Rigid:beta},\ref{Rigid:gamma} and making the abbreviation $J=n_{max}+\frac{|k+m|+|k-m|}{2}$ gives the eigenenergies as functions of $J$ and $k$:

  \begin{equation}\label{Rigid:eigenenergy}
    E_{rot}=B_{e}J(J+1)+(A_{e}-B_{e})k^{2}
  \end{equation}
  We can easily figure out that:
    \begin{align}\nonumber
   & J=0,1,2,... \; & k=0,\pm 1,\pm 2,...,\pm J \; and &\\
   & & m=0,\pm 1,\pm 2,...,\pm J &
  \label{Rigid:numbers}
    \end{align} 
  It turns out that $F(x)$ solution is hypergeometric function \cite{Andrews}, which enables us writing rotational wavefunction in compact form:

  \begin{equation}\label{Rigid:Solution}
    \resizebox{\hsize}{!}{$\Phi_{Jkm}(\theta,\phi,\chi)=N_{Jkm}x^{\frac{|k-m|}{2}}(1-x)^{\frac{|k+m|}{2}}F(\frac{1}{2}\beta-J-1,\frac{1}{2}\beta+J;\alpha,x)e^{im\phi}e^{ik\chi}$}
  \end{equation}
  The normalization constant $N_{Jkm}$ is defined by the condition:

  \begin{equation}\label{Rigid:normalization}
    \int_{0}^{2\pi}\int_{0}^{2\pi}\int_{0}^{\pi}\Phi_{Jkm}^{*}\Phi_{Jkm}\sin\theta d\theta d\phi d\chi =1
  \end{equation}
  and a choice of phase factor. In fact symmetric top rotational wavefunction have many different representations. One most popular is given below:
   \begin{equation}\label{Rigid:Solution2}
    \Phi_{Jkm}(\theta,\phi,\chi)=\sqrt{\frac{2J+1}{8\pi^{2}}}D^{(J)}_{km}(\theta,\phi,\chi)
  \end{equation}
  where $D^{(J)}_{km}(\theta,\phi,\chi)$ is \textit{Wigner D-matrix}\cite{Bunker}. Sometimes abbreviation of type $ \Phi_{Jkm}(\theta,\phi,\chi)\equiv |J,k,m\rangle$ is made, but it should be treated with caution, as \textit{ket} vector represent \textit{state vector} from \textit{Hilbert space} and  $\Phi_{Jkm}$ states for spectral representation of this state vector (hence is representation-dependent) and belongs to $L^{2}(C^{3},d^{3})$ space of square-integrable functions, which in fact is isomorphic with the original \textit{Hilbert space}. The explicit form of the wavefunction $\Phi_{Jkm}(\theta,\phi,\chi)$ is
  \begin{equation}\label{Rigid:Solution3}
  \resizebox{\hsize}{!}{$ N\left[\sum_{\sigma}(-1)^{\sigma}\frac{(\cos\frac{1}{2}\theta)^{2J+k-m-2\sigma}(-\sin\frac{1}{2}\theta)^{m-k+2\sigma}}{\sigma!(J-m-\sigma)!(m-k+\sigma)!(J+k-\sigma)!}\right]
   e^{im\phi}e^{ik\chi}$}
  \end{equation}
  where,
\begin{equation}\label{Rigid:normalization2}
 N=\left[\frac{(J+m)!(J-m)!(J+k)!(J-k)!(2J+1)}{8\pi^{2}}\right]^{\frac{1}{2}}
  \end{equation}
  and $\sigma$ runs from 0 or $(k-m)$, whichever is the larger, up to $(J-m)$ or $(J+k)$, whichever is smaller \cite{Bunker}. Note that the symmetric top Hamiltonian commutes with $\hat{J}^{2}, \hat{J}_{\rho_{3}},\hat{J}_{z}$, therefore they have common eigenfunctions. Apparently different situation emerges in case of asymmetric top (all three rotational constants are different). Here the Hamiltonian \ref{rotorhamil} commutes with $\hat{J}^{2}, \hat{J}_{\rho_{3}}$ but not with $\hat{J}_{z}$:
  \begin{equation}
\left[\hat{H},\hat{J}_{z}\right]=\left[\hat{J}_{x},\hat{J}_{y}\right]_{+}\left(B_{e}-A_{e}\right)
  \end{equation}
  where molecule fixed angular momentum operators satisfy commutation relations:
  \begin{equation}\label{anomalcommutation}
  \left[\hat{J}_{i},\hat{J}_{j}\right]=-i\hbar \epsilon_{ijk} \hat{J}_{k}
  \end{equation}
  Therefore $\hat{H}$ will have common eigenbasis with $\hat{J}^{2}$ and $\hat{J}_{\rho_{3}}$ but not with $\hat{J}_{z}$. In order to obtain eigenfunctions, a  good starting point would be to express Hamilton matrix in the symmetric top eigenbasis. Such approach provides final states representation as linear combination of \textit{Wigner} functions with different $k$ values:

\[  \left( \begin{array}{ccc}
\Psi_{1} \\
\vdots \\
\Psi_{2J+1} \end{array} \right)=\textbf{C}_{J,m} \left( \begin{array}{ccc}
|J,-J,m\rangle \\
\vdots \\
|J,J,m\rangle \end{array} \right).\] 
where $\textbf{C}$ diagonalizes $\hat{H}$ block for a given $(J,m)$. At this stage it is convenient to write explicitly Hamilton matrix elements in symmetric top basis. To do this, lets rewrite Hamiltonian as below
\begin{align}\nonumber
\hat{H}= &\hbar^{-2}\left[\frac{1}{2}\left(B_{e}+C_{e}\right)\hat{J}^{2}+\left[A_{e}-\frac{1}{2}\left(B_{e}+C_{e}\right)\right]\hat{J}_{z}^{2}\right.+&&\\
&\left.+\frac{1}{4}\left(B_{e}-
C_{e}\right)\left((\hat{J}_{m}^{+})^{2} + (\hat{J}_{m}^{-})^{2}\right)\right] &&
\label{asymetricham}
\end{align}
\normalsize
where we introduced molecule fixed ladder operators: $\hat{J}_{m}^{\pm}:=\hat{J}_{x}\pm i\hat{J}_{y}$. The purpose of abbreviating this new operators lays within their marvellous abilities to ladder and lower $k$ quantum number in our basis. And this feature is due to special commutation relations fulfilled by ladder operators:
 \begin{equation}\label{laddercommutation}
\left[\hat{J}_{z},\hat{J}_{m}^{\pm}\right]=\mp\hbar\hat{J}_{m}^{\pm}
\end{equation}
Because $|J,k,m\rangle$ are eigenfunctions of $\hat{J}_{z}$: $\hat{J}_{z}|J,k,m\rangle=\hbar k|J,k,m\rangle$ the laddered function $\hat{J}_{m}^{\pm}|J,k,m\rangle$ is also eigenfunction of $\hat{J}_{z}$ to the $\hbar(k \mp 1)$ eigenvalue. 
%\begin{proof}
Firstly lets act on standard basis element with a product of operators: $\hat{J}_{z}\hat{J}_{m}^{\pm}|J,k,m\rangle$. This is equal, on account of \ref{laddercommutation} to $\hat{J}_{m}^{\pm}\hat{J}_{z}|J,k,m\rangle+\hat{J}_{m}^{\pm}|J,k,m\rangle$. Now utilizing eigenequation for $\hat{J}_{z}$ we find
\begin{equation}
\hat{J}_{z}\hat{J}_{m}^{\pm}|J,k,m\rangle=\hbar (k\mp1)\hat{J}_{m}^{\pm}|J,k,m\rangle
\end{equation}
laddered function to be also eigenfunction of $\hat{J}_{z}$ but to eigenvalue shifted by $1$. That's the general property of operators obeying \ref{laddercommutation} commutation relations. The inverse in sign here is due to anomal molecule fixed commutation relations \ref{anomalcommutation}.
%\end{proof}
Therefore we presume that laddered function should be proportional to its 'neighbour' with shifted $k$ value:
\begin{equation}
\hat{J}_{m}^{\pm}|J,k,m\rangle=N_{J,k}|J,k\mp 1,m\rangle
\end{equation}
Requiring orthonormality for both hand sides states it's straightforward to write down final relation:
\begin{equation}
\hat{J}_{m}^{\pm}|J,k,m\rangle=\hbar \sqrt{J(J+1)-k(k\mp 1)}|J,k\mp 1,m\rangle
\end{equation}
where we made use of equality: $\hat{J}_{m}^{\mp}\hat{J}_{m}^{\pm}=\hat{J}^{2}-\hat{J}_{z}(\hat{J}_{z}\mp \hbar)$.
In order to obtain general symmetric top wavefunction from some known and easy to obtain generating function, it's needed to find a way to ladder $m$ values (see ref. \cite{Zak_HO}), i.e. space fixed z-axis projection of angular momentum quantum number. This goal may be acheived by introducing space fixed angular momentum ladder operators in similar way to molecule fixed case. Now angular momentum operators satisfy normal commutation relations:
  \begin{equation}\label{anomalcommutation}
  \left[\hat{J}_{\rho i},\hat{J}_{\rho j}\right]=i\hbar \epsilon_{ijk} \hat{J}_{\rho k}
  \end{equation}
 and analogical definition of ladder operators $\hat{J}_{s}^{\pm}:=\hat{J}_{\rho1}\pm i\hat{J}_{\rho2}$ may be applied to retrieve result similar to the previous case:
  \begin{equation}
\hat{J}_{\rho3}\hat{J}_{s}^{\pm}|J,k,m\rangle=\hbar( m\pm1)\hat{J}_{s}^{\pm}|J,k,m\rangle
\end{equation}
  and
  \begin{equation}
\hat{J}_{s}^{\pm}|J,k,m\rangle=\hbar \sqrt{J(J+1)-m(m\pm 1)}|J,k,m\pm 1\rangle
\end{equation}
allowing to derive the following important relation:
 \begin{equation}
|J,\pm|k|,\pm|m|>=N\left(\hat{J}_{m}^{\mp}\right)^{k}\left(\hat{J}_{s}^{\pm}\right)^{m}|J,0,0\rangle
\end{equation} 
Normalisation factor is given by
   \begin{equation}
N=\hbar^{-(|k|+|m|)}\sqrt{\frac{(J-|m|)!(J-|k|)!}{(J+|m|)!(J+|k|)!}}
\end{equation}
Note the analogy to the second quantization procedures. Here we obtained a general state as a result of acting with creation operators on the 'vacuum' state of rotations space. These operators were derived from position representation differential operators, giving rare link between traditional quantum mechanics and quantum field theory formalism.
Knowing the above symmetric top wavefunctions, it's easy to write down matrix elements of a particular angular momentum associated operators:

\begin{align}\nonumber
&<J,k,m|\hat{J}^{2}|J,k,m>=\hbar^{2}J(J+1)\\\nonumber
&<J,k,m|\hat{J}_{z}|J,k,m>=k\hbar\\\nonumber
&<J,k,m|\hat{J}_{\rho3}|J,k,m>=m\hbar\\\nonumber
&<J,k,m\pm1|\hat{J}^{\pm}_{s}|J,k,m>=\hbar\sqrt{J(J+1)-m(m\pm1)}\\
&<J,k\mp1,m|\hat{J}^{\pm}_{m}|J,k,m>=\hbar\sqrt{J(J+1)-k(k\mp1)}
\label{elements}
\end{align}
At this stage we're ready to build the Hamilton matrix for the asymmetric top, which a $J(2J+1)^{2}$ dimensional symmetric matrix. Nevertheless on account of relations from eq.\ref{elements} all elements between states with different $J$ vanish, making the matrix block-diagonal. Because eigenenergies of $\hat{H}$ don't depend on $m$ (see \ref{asymetricham}) each block with fixed $m$ will yield with identical eigenvalues, hence we can restrict to only one of them, e.g. $m=0$. Every $m$ replica is $2J+1$ dimensional with states labelled by $k=-J,-J+1,...,J-1,J$. For a given $J$ there are $2J+1$ identical blocks because $m=-J,-J+1,...,J-1,J$. The whole matrix may be theoretically analytically diagonalized by diagonalising subsequent blocks with rising $J$. And so for $J=0$, $m=0$ and $k=0$, hence it's already diagonalized with eigenvalue equal to $0$, which means that for asymmetric top we can find states with no rotational motion at all! The anticipated wavefunction is a constant $|0,0,0\rangle=\sqrt{\frac{1}{8\pi^{2}}}$. For $J=1$ we have three dimensional block with basis functions $|1,-1,0\rangle, |1,0,0\rangle, |1,1,0\rangle$. In order to diagonalize this block we can take advantage of one of the symmetries of Hamiltonian, namely the parity. Due to Hamiltonian invariance under parity operation the states must transform according to irreducible representation of parity group, which in fact is $C_{i}$ with two representations: $A$ and $B$. For rotations this means that the proper wavefunction must take into account clockwise and anti-clockwise rotations of a systems. Equivalently states with $k$ and $-k$ must be allowed. Symmetry adapted states are therefore in-phase and out-of-phase linear combinations of the mentioned two:
   \begin{equation}
|1,1,0,\pm\rangle=\frac{1}{\sqrt{2}}\left(|1,1,0\rangle\pm|1,-1,0\rangle\right)
\end{equation}
 By lucky coincidence this unitary transformation already diagonalizes the $J=1$ block. The corresponding eigenvalues read
\begin{align}\nonumber
&\langle1,1,0,+|\hat{H}|1,1,0,+\rangle=A_{e}+B_{e}&\\
&\langle1,1,0,-|\hat{H}|1,1,0,-\rangle=A_{e}+C_{e} \\
&\langle1,0,0|\hat{H}|1,0,0\rangle=B_{e}+C_{e}
\end{align}
Results have proper dimension of $cm^{-1}$. Note that in present case the $|k|$ no longer labels the states (the states are not eigenstates of $J_{z}$), therefore cannot constitute a \textit{good quantum number}. Instead new index called \textit{parity} has been introduced. Heading now towards $J=2$ block which is $5\times 5$ lets build similar symmetry adapted states:
\begin{align}\nonumber
&|2,2,0,E^{\pm}\rangle=\frac{1}{\sqrt{2}}\left(|2,2,0\rangle\pm|2,-2,0\rangle\right) &\\
&|2,0,0,E^{+}\rangle=|2,0,0\rangle\\
&|2,1,0,O^{\pm}\rangle=\frac{1}{\sqrt{2}}\left(|2,1,0\rangle\pm|2,-1,0\rangle\right)
\end{align}
The block factorizes into subblocks $E^{+},E^{-},O^{+},O^{-}$. For $J$ even (as in our case) the $E^{+}$ block has dimension $\frac{J+2}{2}$ while the other three $\frac{J}{2}$, and for $J$ odd the $E^{-}$ block has dimension $\frac{J-1}{2}$ while the other three $\frac{J+1}{2}$. Hence in our case $E^{+}$ block is two dimensional and the remaining three are one dimensional, providing eigenvalues:
    \begin{equation}
    \begin{split}
\langle2,2,0,E^{-}|\hat{H}|2,2,0,E^{-}\rangle=4A_{e}+B_{e}+C_{e}\\
\langle2,1,0,O^{+}|\hat{H}|2,1,0,O^{+}\rangle=A_{e}+4B_{e}+C_{e}\\
\langle2,1,0,O^{-}|\hat{H}|2,1,0,O^{-}\rangle=A_{e}+B_{e}+4C_{e}\\
\end{split}
\end{equation}
The $2\times 2$ block
\[  \left( \begin{array}{ccc}
3(B_{e}+C_{e}) & \sqrt{3}(B_{e}-C_{e}) \\
\sqrt{3}(B_{e}-C_{e}) & 4A_{e}+B_{e}+C_{e}
\end{array} \right)\] 
can be easily diagonalized to give appropriate eigenvectors and eigenvalues, what we leave for the reader as an exercise. The important fact is that basis states with even/odd $k$ couple only to states with even/odd $k$ and states with $\pm$ parity couple to states with $\pm$ parity. 

\section*{Acknowledgements}
\color{white}I would like to thank the homeless stranger from King Cross station who kindly spoke to me in following words: \textit{Before you reach for all those superb whiskies and liqueurs for the bold and the beautiful, try to deeply understand and appreciate the taste of simple beer, unless you want to end up as a useless drunker like me}.
\color{black}%oduces the bibliography via BibTeX.
\newpage
\bibliography{normalne.bib}
\end{document}